\newtheorem{theorem}{Theorem}[section]
\newtheorem{corollary}[theorem]{Corollary}
\newtheorem{lemma}[theorem]{Lemma}
\newtheorem{proposition}[theorem]{Proposition}
\theoremstyle{remark}
\newtheorem{remark}{Remark}[section]
\newtheorem{example}{Example}[section]
\newtheorem{criterion}{Criterion}[section]
\begin{document}

\title{\textbf{Second-order Lagrangians admitting \\
a first-order Hamiltonian formalism}}
\author{\textsc{E. Rosado Mar\'{\i}a} \\
Departamento de Matem\'atica Aplicada \\
Escuela T\'ecnica Superior de Arquitectura, UPM \\
Avda.\ Juan de Herrera 4, 28040-Madrid, Spain \\
\emph{E-mail:\/} \texttt{eugenia.rosado@upm.es}
\and
\textsc{J. Mu\~{n}oz Masqu\'e} \\
Instituto de Tecnolog\'{\i}as F\'{\i}sicas y de la Informaci\'on, CSIC \\
C/ Serrano 144, 28006-Madrid, Spain \\
\emph{E-mail:\/} \texttt{jaime@iec.csic.es}}
\date{}

\maketitle

\begin{abstract}
\noindent Second-order Lagrangian densities admitting a first-order
Hamiltonian formalism are studied; namely, i) for each second-order
Lagrangian density on an arbitrary fibred manifold $p\colon E\to N$ the
Poincar\'e-Cartan form of which is projectable onto $J^1E$, by using a new
notion of regularity previously introduced, a first-order Hamiltonian
formalism is developed for such a class of variational problems; ii) the
existence of first-order equivalent Lagrangians are discussed from a local
point of view as well as global; iii) this formalism is then applied to
classical Einstein-Hilbert Lagrangian and a generalization of the BF theory.
The results suggest that the class of problems studied is a natural
variational setting for GR.
\end{abstract}

\bigskip

\noindent \textit{PACS 2010:\/} 02.30.Ik,
02.30.Xx,
02.40.Vh,
04.20.Fy,
11.10.Ef,
11.10.Kk.

\medskip

\noindent \textit{Mathematics Subject Classification 2010:\/} Primary:
58E30; Secondary: 58A20, 83C05.
\medskip

\noindent \textit{Key words and phrases:\/} Einstein-Hilbert Lagrangian,
Hamilton-Cartan formalism, Jacobi fields, Jet bundles, Poincar\'e-Cartan
form, Presymplectic structure.

\section{Preliminaries}

\subsection{Legendre \& Poincar\'e-Cartan forms}

Below, a fibred manifold $p\colon E\to N$ is considered
over a connected $n$-dimensional manifold $N$ oriented by
a volume form $v=dx^{1}\wedge \cdots \wedge dx^n $. The bundle
of $k$-jets of local sections of $p$ is denoted by
$p^{k}\colon J^{k}E\to N$, with
natural projections $p_{l}^{k}\colon J^{k}E\to J^{l}E$, $k\geq l$.

Every fibred coordinate system $(x^{j},y^{\alpha })$, $1\leq j\leq n$,
$1\leq \alpha \leq m=\dim E-n$, for the submersion $p$, induces a coordinate
system $(x^{j},y_{I}^{\alpha })$ ($I=(i_{1},\dotsc ,i_{n})$ being a
multi-index in $\mathbb{N}^n $ of order $|I|=i_{1}+\ldots +i_{n}\leq r$)
on $J^rE$ defined by,
\begin{equation*}
y_{I}^{\alpha }\left( j_{x}^rs\right) =\tfrac{\partial ^{|I|}(y^{\alpha
}\circ s)}{\partial (x^{1})^{i_{1}}\ldots \partial (x^n )^{i_{n}}}(x),
\end{equation*}
where $s$ is a local section of $p$. We also set
$(j)=(0,\dotsc ,0,\overset{(j}{1},0,\dotsc ,0)\in \mathbb{N}^n $,
$(jk)=(j)+(k)$, etc., and $y_{(j)}^{\alpha }=y_{j}^{\alpha }$.

The \textit{Legendre form} of a second-order Lagrangian density
$\Lambda =Lv$, defined on $p\colon E\to N$, $L\in C^{\infty }(J^{2}E)$,
is the $V^{\ast }(p^{1})$-valued $p^{3}$-horizontal $(n-1)$-form
$\omega _{\Lambda }$ on $J^{3}E$ locally given by (e.g.,
see \cite{Mu2}, \cite{MR}, \cite{SCr}),
\begin{equation*}
\omega _{\Lambda }=(-1)^{i-1}L_{\alpha }^{i0}v_{i}\otimes dy^{\alpha
}+(-1)^{i-1}L_{\alpha }^{ij}v_{i}\otimes dy_{j}^{\alpha },
\end{equation*}
where $v_{i}=dx^{1}\wedge \cdots \wedge \widehat{dx^{i}}\wedge
\cdots \wedge dx^n $, and
\begin{align}
L_{\alpha }^{ij}& =\tfrac{1}{2-\delta _{ij}}\tfrac{\partial L}{\partial
y_{(ij)}^{\alpha }},  \label{f11} \\
L_{\alpha }^{i0}& =\tfrac{\partial L}{\partial y_{i}^{\alpha }}
-\tfrac{1}{2-\delta _{ij}}D_{j}
\left( \tfrac{\partial L}{\partial y_{(ij)}^{\alpha }}\right) ,
\label{f12}
\end{align}
and $D_{j}=\tfrac{\partial }{\partial x^{j}}
+\sum_{|I|=0}^{\infty }\sum_{\alpha =1}^{m}y_{I+(j)}^{\alpha }
\tfrac{\partial }{\partial y_{I}^{\alpha }}$ denotes
the total derivative with respect to the coordinate $x^{j}$.
The \textit{Poincar\'{e}-Cartan form} (or P-C form for short)
attached to $\Lambda $ is the ordinary $n$-form on $J^{3}E$
given by $\Theta _{\Lambda }=(p_{2}^{3})^{\ast }\theta ^{2}
\wedge \omega _{\Lambda }+\Lambda $ (e.g., see \cite{Mu2},
\cite{SCr}), where $\theta ^{1}$, $\theta ^{2}$ are
the first- and second-order structure forms on $J^{1}E$,
$J^{2}E$, locally given by (cf.\ \cite{Mu1}, \cite{Saunders}),
$\theta ^{1}=\theta ^{\alpha }
\otimes \tfrac{\partial }{\partial y^{\alpha }}$,
$\theta ^{2}=\theta ^{\alpha }
\otimes \tfrac{\partial }{\partial y^{\alpha }}
+\theta _{h}^{\alpha }
\otimes \tfrac{\partial }{\partial y_{h}^{\alpha }}$,
respectively, and $\theta ^{\alpha }=dy^{\alpha }
-y_{k}^{\alpha }dx^{k}$, $\theta _{i}^{\alpha }
=dy_{i}^{\alpha}-y_{(ik)}^{\alpha }dx^{k}$, is the standard basis
of contact $1$-forms, and the exterior product
of $(p_{2}^{3})^{\ast }\theta ^{2}$ and the Legendre
form, is taken with respect to the standard pairing
$V(p^{1})\times _{J^{1}E}V^{\ast }(p^{1})\to \mathbb{R}$.
\subsection{Projecting onto $J^2E$ or $J^1E$}
The most outstanding difference with a first-order Lagrangian density is
that the Legendre and Poincar\'e-Cartan forms associated with a second-order
Lagrangian density are generally defined on $J^3E$, thus increasing by one
the order of the Lagrangian density $\Lambda $.

For certain second-order Lagrangian densities it is known that the P-C form
is projectable onto $J^{2}E$; e.g., see \cite{GM1}. More precisely, the P-C
form of a second-order Lagrangian projects onto $J^{2}E$ if and only if the
following system of PDEs holds (cf.\ \cite{DM}, \cite{GM1}):
\begin{equation*}
\tfrac{1}{2-\delta _{ib}}\tfrac{\partial ^{2}L}{\partial y_{ac}^{\beta}
\partial y_{ib}^{\alpha }}+\tfrac{1}{2-\delta _{ia}}
\tfrac{\partial ^{2}L}{\partial y_{bc}^{\beta }\partial y_{ia}^{\alpha }}
+\tfrac{1}{2-\delta _{ic}}\tfrac{\partial ^{2}L}{\partial y_{ab}^{\beta }
\partial y_{ic}^{\alpha }}=0,
\end{equation*}
for all indices $1\leq a\leq b\leq c\leq n$, $\alpha ,\beta =1,\dotsc ,m$.

More surprisingly, there exist second-order Lagrangians for which the
associated P-C form projects not only on $J^{2}E$ but also on $J^{1}E$.
Notably, this is the case of the Einstein-Hilbert Lagrangian in General
Relativity.

As is well known (e.g., see \cite[(1.3)]{GoS}, \cite[2.1]{MR}),
$p_{r-1}^r\colon J^rE\to J^{r-1}E$ admits an affine bundle structure modelled
over the vector bundle
\begin{equation}  \label{W^r}
W^r=(p^{r-1})^\ast S^rT^\ast N \otimes (p_0^{r-1})^\ast V(p)\to J^{r-1}E.
\end{equation}

\begin{proposition}[cf. \protect\cite{KS}, \protect\cite{EJ}]
\label{proposition1} The Poincar\'{e}-Cartan form attached to a Lagrangian
$L\in C^{\infty }(J^{2}E)$ projects onto $J^{1}E$ if and only if $L$ is an
affine function with respect to the affine structure of $p_{1}^{2}\colon
J^{2}E\to J^{1}E$, namely
\begin{equation}
L=L_{\alpha }^{ij}y_{(ij)}^{\alpha }+L_{0},\quad L_{\alpha }^{ji}=L_{\alpha
}^{ij}\in C^{\infty }(J^{1}E),L_{0}\in C^{\infty }(J^{1}E),  \label{affine}
\end{equation}
and the following equations hold:
\begin{equation}
\tfrac{\partial L_{\beta }^{ih}}{\partial y_{a}^{\alpha }}=\tfrac{\partial
L_{\alpha }^{ia}}{\partial y_{h}^{\beta }},\quad a,h,i=1,\dotsc ,n,\;\alpha
,\beta =1,\dotsc ,m.  \label{first_tris}
\end{equation}
\end{proposition}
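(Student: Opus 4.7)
The plan is to expand $\Theta_\Lambda$ explicitly in a fibred coordinate system, separate the basic differential part, and translate projectability into algebraic conditions on the coefficients. Using the identity $dx^k\wedge v_i=\delta_i^k(-1)^{i-1}v$ and the definitions of the contact forms $\theta^\alpha,\theta_j^\alpha$, one finds
\begin{equation*}
\Theta_\Lambda=\bigl[L-L_\alpha^{i0}y_i^\alpha-L_\alpha^{ij}y_{(ij)}^\alpha\bigr]v+(-1)^{i-1}L_\alpha^{i0}\,dy^\alpha\wedge v_i+(-1)^{i-1}L_\alpha^{ij}\,dy_j^\alpha\wedge v_i.
\end{equation*}
The three families of $n$-forms $v$, $dy^\alpha\wedge v_i$, $dy_j^\alpha\wedge v_i$ are pointwise linearly independent in $\Lambda^{n}T^{\ast}J^3E$ and are each pulled back via $p_1^3$, so $\Theta_\Lambda$ is $p_1^3$-projectable if and only if the coefficient functions $L_\alpha^{ij}$, $L_\alpha^{i0}$, and $L-L_\alpha^{i0}y_i^\alpha-L_\alpha^{ij}y_{(ij)}^\alpha$ all belong to $C^{\infty }(J^1E)$.

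Next I would dispose of each condition in turn. By (\ref{f11}), $L_\alpha^{ij}\in C^{\infty }(J^1E)$ is equivalent to $\partial L/\partial y_{(ij)}^\alpha\in C^{\infty }(J^1E)$; together with the affine fibre structure of $p_1^2$ this produces exactly the decomposition (\ref{affine}), after which $L_0=L-L_\alpha^{ij}y_{(ij)}^\alpha\in C^{\infty }(J^1E)$ is a direct check. The third coefficient condition then collapses to $L_0-L_\alpha^{i0}y_i^\alpha\in C^{\infty }(J^1E)$, which is automatic once the first and second conditions hold. For the second condition, substituting (\ref{affine}) into (\ref{f12}) makes $L_\alpha^{i0}$ affine in the second-order coordinates, and vanishing of the linear part in $y_{(hk)}^\beta$ (collected according to the symmetric multi-index) yields
\begin{align*}
\tfrac{\partial L_\beta^{hh}}{\partial y_i^\alpha}&=\tfrac{\partial L_\alpha^{ih}}{\partial y_h^\beta},\\
2\tfrac{\partial L_\beta^{hk}}{\partial y_i^\alpha}&=\tfrac{\partial L_\alpha^{ih}}{\partial y_k^\beta}+\tfrac{\partial L_\alpha^{ik}}{\partial y_h^\beta}\quad(h\neq k).
\end{align*}

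The main obstacle is proving that this split system is equivalent to the uniform statement (\ref{first_tris}), $\partial L_\beta^{ih}/\partial y_a^\alpha=\partial L_\alpha^{ia}/\partial y_h^\beta$. One direction is immediate from the upper-index symmetry $L_\alpha^{ij}=L_\alpha^{ji}$: applied to (\ref{first_tris}) it delivers both halves of the right-hand side of the $h\neq k$ equation, each equal to $\partial L_\beta^{hk}/\partial y_i^\alpha$, and the diagonal case drops out by specialization. For the reverse direction, the direct derivation yields only the symmetric average in the off-diagonal case, so I would proceed algebraically: reapplying the system under permutations of $(i,h,k)$ and combining with the diagonal equation, one shows that the two summands on the right-hand side of the $h\neq k$ equation must in fact coincide with $\partial L_\beta^{hk}/\partial y_i^\alpha$ individually, thereby recovering (\ref{first_tris}).
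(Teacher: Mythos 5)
The paper does not actually prove this proposition; it is quoted from \cite{KS} and \cite{EJ}, so there is no in-text argument to compare against. Your route is the natural one and, as far as it goes, it is correct: the expansion of $\Theta _{\Lambda }$ agrees with the formula \eqref{localPCform} used later in the paper, the linear-independence argument for the basic $n$-forms $v$, $dy^{\alpha }\wedge v_{i}$, $dy_{j}^{\alpha }\wedge v_{i}$ correctly reduces projectability to the three coefficient conditions, and the disposal of the first and third conditions (affineness from constancy of $\partial L/\partial y_{(ij)}^{\alpha }$ along the connected fibres of $p_{1}^{2}$, and automaticity of the zeroth-order coefficient) is sound. Your split system for the vanishing of the $y_{(hk)}^{\beta }$-linear part of $L_{\alpha }^{i0}$ is also the right one.

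The only point where you assert rather than prove is the crux: that the split system implies \eqref{first_tris}. Your proposed strategy (permute indices and combine with the diagonal case) does succeed, but since it is the entire content of the proposition beyond bookkeeping, you should write it out. Here is how it closes. Fix $\alpha ,\beta $ and set $P_{i}^{hk}=\partial L_{\beta }^{hk}/\partial y_{i}^{\alpha }$, $Q_{i}^{hk}=\partial L_{\alpha }^{hk}/\partial y_{i}^{\beta }$, both symmetric in the upper pair. Your two equations merge into the single identity $2P_{i}^{hk}=Q_{k}^{ih}+Q_{h}^{ik}$ valid for all $i,h,k$ (the diagonal equation is exactly the case $h=k$), and, since the conditions hold for every pair of Greek indices, also $2Q_{i}^{hk}=P_{k}^{ih}+P_{h}^{ik}$. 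For the difference $T=P-Q$ this gives $2T_{i}^{hk}=-\bigl(T_{k}^{ih}+T_{h}^{ik}\bigr)$; writing $\Sigma $ for the cyclic sum $T_{i}^{hk}+T_{h}^{ki}+T_{k}^{ih}$ and using the upper symmetry, one gets $T_{i}^{hk}=-\Sigma $ for each cyclic rotation, hence $\Sigma =-3\Sigma =0$ and $T=0$, i.e.\ $P=Q$. For the sum $R=P+Q=2P$ the same cyclic computation gives $R_{i}^{hk}=\tfrac{1}{3}\Sigma _{R}$ with $\Sigma _{R}$ cyclically invariant, so $P$ is totally symmetric in $(i,h,k)$. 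Then $\partial L_{\beta }^{ih}/\partial y_{a}^{\alpha }=P_{a}^{ih}=P_{h}^{ia}=Q_{h}^{ia}=\partial L_{\alpha }^{ia}/\partial y_{h}^{\beta }$, which is \eqref{first_tris}. With this inserted, your proof is complete; note that the converse direction you give (from \eqref{first_tris} back to the split system via $L_{\alpha }^{ij}=L_{\alpha }^{ji}$) is fine as written.
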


The equations \eqref{first_tris} admit a variational meaning. The
Euler-Lagrange (or E-L for short) operator of an arbitrary second-order
Lagrangian can be written in terms of the coefficients of the P-C form (see
the formulas \eqref{f11}, \eqref{f12}) as follows:
\begin{align*}
\mathcal{E}_{\alpha }(L)& =\sum_{i\leq j}D_{i}D_{j}
\left( \tfrac{\partial L}{\partial y_{(ij)}^{\alpha }}
\right) -D_{i}
\left( \tfrac{\partial L}{\partial
y_{i}^{\alpha }}\right) +\tfrac{\partial L}{\partial y^{\alpha }} \\
& =\tfrac{\partial L}{\partial y^{\alpha }}-D_{i}\left( L_{\alpha
}^{i0}\right) ,\quad 1\leq \alpha \leq m.
\end{align*}
The E-L equations for an affine second-order Lagrangian $L$, given as in the
formula \eqref{affine}, are of third order and they are of second order if
and only if the equations \eqref{first_tris} hold
(cf.\ \cite[Proposition 2.2]{EJ}).

As the projection $p_{r-1}^r\colon J^rE\to J^{r-1}E$ admits an
affine-bundle structure, a natural vector-bundle isomorphism is obtained,
\begin{equation}
I^r\colon (p_{r-1}^r)^{\ast }W^r=(p^r)^{\ast }S^rT^{\ast }N\otimes
(p_{0}^r)^{\ast }V(p)\overset{\cong }{\longrightarrow }V(p_{r-1}^r),
\label{I^r}
\end{equation}
where the vector bundle $W^r$ is defined in \eqref{W^r}. Given an
arbitrary vector bundle $W\to N$, there exists an antiderivation
\begin{equation*}
d_{E\!/\!N}\colon \Gamma (E,\wedge ^rV^{\ast }(p)\otimes p^{\ast
}W)\to \Gamma (E,\wedge ^{r+1}V^{\ast }(p)\otimes p^{\ast }W)
\end{equation*}
of degree $+1$---called the fibre differential
(e.g., see \cite[(1.9)]{GoS})---such that,
$d_{E\!/\!N}(fp^{\ast }\xi )=\left. df\right\vert
_{V(p)}\otimes \xi $, for all $f\in C^{\infty }(E)$ and all
$\xi \in \Gamma (E,W)$. (In the previous paragraph, the relevant fact
is that the vector bundle $W\to N$ is defined over the base manifold $N$,
and not over the fibred manifold $E$.)

In what follows we are mainly concerned with the fibre derivative
$d_{J^1E\!/\!J^0E}$, which will simply be denoted by $d_{10}$
for the sake of simplicity.

A Lagrangian $L\in C^{\infty }(J^{2}E)$ is an affine function with respect
to the affine structure of $p_{1}^{2}\colon J^{2}E\to J^{1}E$ if
there exists a linear form $w_{L}\colon W^{2}\to \mathbb{R}$, which
is unique, such that, $L(\tau +j_{x}^{2}s)=w_{L}(\tau )+L(j_{x}^{2}s)$,
$\forall \tau \in S^{2}T_{x}^{\ast }N\otimes V_{s(x)}(p)$
and $\forall j_{x}^{2}s\in J^{2}E$.

By using the $(W^{2})^{\ast }
\cong (p^{1})^{\ast }S^{2}TN\otimes (p_{0}^{1})^{\ast }V^{\ast }(p)$,
the linear form $w_{L}$ defines a section of the vector bundle
$(p^{1})^{\ast }S^{2}TN\otimes (p_{0}^{1})^{\ast}V^{\ast }(p)\to J^{1}E$.
If $L$ is locally given by the formula \eqref{affine},
then $w_{L}=L_{\alpha }^{hi}\tfrac{\partial }{\partial x^{h}}
\odot \tfrac{\partial }{\partial x^{i}}\otimes dy^{\alpha }|_{V(p)}$,
where the symbol $\odot $ denotes symmetric product.

If $\iota ^2\colon (W^2)^\ast
\to(p^1)^\ast \otimes ^2 TN\otimes(p_0^1)^\ast V^\ast (p)$
is the natural embedding, then we consider the section
\begin{equation}  \label{w_prime}
w_L^\prime =\tfrac{1}{2} \bigl( \tilde{I}^1\circ \iota^2 \circ w_L \bigr)
\colon J^1E\to (p^1)^\ast T N \otimes V^\ast (p_0^1)
\end{equation}
obtained by composing the following mappings:
\begin{multline*}
J^1E\overset{w_L}{\longrightarrow }(p^1)^\ast S^2TN\otimes(p_0^1)^\ast
V^\ast (p) =(W^2)^\ast \overset{\iota ^2}{\longrightarrow }(p^1)^\ast
\otimes^2TN\otimes(p_0^1)^\ast V^\ast (p) \\
=(p^1)^\ast TN\otimes
\left[ (p^1)^\ast TN \otimes(p_0^1)^\ast V^\ast (p)\right]
\overset{\tilde{I}^1}{-\!\!\!\! \longrightarrow } (p^1)^\ast
TN\otimes V^\ast (p_0^1),
\end{multline*}
where $\tilde{I}^1=1_{(p^1)^\ast TN} \otimes((I^1)^\ast )^{-1}$ is the
isomorphism deduced from \eqref{I^r} for $r=1$.
As $I^1(dx^a\otimes \partial /\partial y^\alpha )
=\partial /\partial y_a^\alpha $, dually we obtain
$(I^1)^\ast (d_{10}y_a^\alpha )
=\partial /\partial x^a\otimes dy^\alpha |_{V(p)}$.

Hence $w_L^\prime =L_\alpha ^{hi}d_{10} \left( y_h^\alpha \right)
\otimes \tfrac{\partial }{\partial x^i}$.

\begin{remark}
\label{remark1} The equations \eqref{first_tris} simply means that for every
index $h$ the form $\eta ^{h}=L_{\alpha }^{hi}dy_{i}^{\alpha }$
is $d_{10}$-closed, namely $d_{10}\eta ^{h}=0$. Hence, there exist functions
$L^{i}\in C^{\infty }(J^{1}E)$ such that locally,
\begin{equation}
\mathrm{(i)}\;L_{\alpha }^{ih}
=\tfrac{\partial L^{i}}{\partial y_{h}^{\alpha }},
\quad
\mathrm{(ii)}\;\tfrac{\partial L^{h}}{\partial y_{i}^{\alpha }}
=\tfrac{\partial L^{i}}{\partial y_{h}^{\alpha }},
\qquad 1\leq \alpha \leq m,\;h,i=1,\dotsc ,n,\label{L^i}
\end{equation}
the equations (ii) above being a consequence of the symmetry
$L_{\alpha }^{hi}=L_{\alpha }^{ih}$.
\end{remark}

Letting $W=TN$ in the definition of the fibre differential above, recalling
that the Poincar\'{e} lemma also holds for fibre differentiation (e.g., see
\cite{MP}) and recalling that the fibres of $p_{0}^{1}\colon J^{1}E\to E$
are simply connected as they are diffeomorphic to $\mathbb{R}^{mn}$,
the following global characterization of second-order
variational problems with a P-C form projecting onto $J^{1}E$, is obtained:
\begin{proposition}
\emph{(see \cite[Proposition 3.1]{EJ})} \label{proposition2} The Poincar\'e
-Cartan form of a Lagrangian $L\in C^{\infty }(J^{2}E)$ projects onto $J^{1}E$
if and only if $L$ is an affine function with respect to the affine
structure of $p_{1}^{2}\colon J^{2}E\to J^{1}E$ and the $TN$-valued
$1$-form $w_{L}^{\prime }$ defined in the formula \emph{\eqref{w_prime}}
is $d_{10}$-closed. In this case, for every global (smooth) section
$\sigma \colon E\to J^{1}E$ of $p_{0}^{1}$, there exists a unique globally
defined section $w_{L}^{\sigma }\in \Gamma (J^{1}E,(p^{1})^{\ast }TN)$ such
that, $d_{10}\left( w_{L}^{\sigma }\right) =w_{L}^{\prime }$,
$w_{L}^{\sigma }\left( \sigma (e)\right) =0,\forall e\in E$.
\end{proposition}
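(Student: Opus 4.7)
The plan is to combine Proposition \ref{proposition1} with Remark \ref{remark1} for the first assertion, and then exploit the Poincar\'e lemma for the fibre differential on the affine bundle $p_0^1$ for the second.

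For the equivalence, I would first translate \eqref{first_tris} into the $d_{10}$-closedness of $w_L^{\prime}$. Using the coordinate expression $w_L^{\prime }=L_\alpha ^{hi}d_{10}(y_h^\alpha )\otimes \partial /\partial x^i$ and the crucial fact that $d_{10}$ leaves the $(p^1)^\ast TN$-factor untouched (this is precisely why the fibre differential was defined for bundles pulled back from $N$, as stressed in the paragraph introducing $d_{E/N}$), a direct computation gives
\begin{equation*}
d_{10}w_L^{\prime }=\tfrac{\partial L_\alpha ^{hi}}{\partial y_j^\beta }\,d_{10}(y_j^\beta )\wedge d_{10}(y_h^\alpha )\otimes \tfrac{\partial }{\partial x^i}.
\end{equation*}
The vanishing of this expression for every $i$ is equivalent to the symmetry $\partial L_\alpha ^{hi}/\partial y_j^\beta =\partial L_\beta ^{ji}/\partial y_h^\alpha $, which is exactly \eqref{first_tris}. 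Inserting this equivalence into the statement of Proposition \ref{proposition1} yields the announced characterization.

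For the second assertion, I would assume $d_{10}w_L^{\prime }=0$ and apply the Poincar\'e lemma for fibre differentiation, using that the fibres of $p_0^1\colon J^1E\to E$ are affine spaces modelled on $T_{p(e)}^\ast N\otimes V_e(p)\cong \mathbb{R}^{mn}$ and hence contractible. Concretely, given a smooth section $\sigma \colon E\to J^1E$ of $p_0^1$, I would define, for each $z\in J^1E$ with $p_0^1(z)=e$,
\begin{equation*}
w_L^\sigma (z)=\int_0^1w_L^{\prime }\bigl( \sigma (e)+t(z-\sigma (e))\bigr) \bigl( z-\sigma (e)\bigr) \,dt,
\end{equation*}
interpreting the subtraction in the fibre of $p_0^1$ over $e$ via its affine structure and viewing $w_L^{\prime }$ as a linear form on $V(p_0^1)$ with values in $(p^1)^\ast TN$. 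The standard homotopy-operator identity, applied fibrewise, shows $d_{10}w_L^\sigma =w_L^{\prime }$ using $d_{10}w_L^{\prime }=0$, and the construction manifestly yields $w_L^\sigma \circ \sigma =0$.

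Uniqueness is short: if $w_1,w_2$ both fulfil the two conditions, $\zeta =w_1-w_2$ is a $d_{10}$-closed section of $(p^1)^\ast TN=(p_0^1)^\ast (p^\ast TN)$, hence constant along the fibres of $p_0^1$; since $\zeta \circ \sigma =0$ and $\sigma $ meets every fibre, $\zeta $ vanishes identically. I expect the main obstacle to be the global step rather than any of the iff part: one must make sure that the fibrewise primitive constructed above is smooth globally on $J^1E$, which is where the global section $\sigma $ plays its r\^ole (it replaces the choice of basepoint needed in the standard Poincar\'e homotopy), and that it genuinely lives in $(p^1)^\ast TN$---both of which follow from the affine-bundle structure of $p_0^1$ and the hypothesis that $TN$ is pulled back from $N$.
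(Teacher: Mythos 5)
Your proof is correct and follows essentially the same route the paper indicates: the equivalence comes from combining Proposition \ref{proposition1} with the observation of Remark \ref{remark1} that \eqref{first_tris} is exactly the $d_{10}$-closedness of the forms $\eta^{i}=L_{\alpha}^{ih}d y_{h}^{\alpha}$ (equivalently of $w_{L}^{\prime}$), and the global primitive is obtained from the Poincar\'e lemma for fibre differentiation on the contractible fibres of $p_{0}^{1}$, with $\sigma$ supplying the basepoint. The paper itself defers the details to \cite[Proposition 3.1]{EJ}, but your explicit fibrewise homotopy operator and the uniqueness argument (a $d_{10}$-closed section of $(p^{1})^{\ast}TN$ is constant on fibres and vanishes on $\sigma$) are exactly the intended argument.
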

\begin{remark}
\label{remark_sigma} A general procedure to obtain global sections
$\sigma \colon E\to J^1E$ of $p^1_0$ is to use Ehresmann (or non-linear)
connections, i.e., to use a differential $1$-form $\gamma$ on $E$ taking
values in the vertical sub-bundle $V(p)$ such that $\gamma(X)=X$,
$\forall X\in V(p)$; hence, locally (cf.\ \cite{MR2}), $\gamma
= (dy^\alpha +\gamma ^\alpha _jdx^j)
\otimes \tfrac{\partial }{\partial y^\alpha }$,
$\gamma ^\alpha _j\in C^\infty (E)$.
The vertical differential of a section $s\colon U\to E$ (defined
on a neigbourhood $U$ of $x\in N$) at $e=s(x)$ is defined
to be the linear mapping $(d^vs)_e\colon T_eE\to V_e(p)$,
$(d^vs)_eX=X-s_\ast p_\ast (X)$, $\forall X\in T_eE$. We claim that for every
$e\in E$, there exists a unique $j^1_xs\in J^1E$ such that, i) $s(x)=e$,
where $x=p(e)$, and ii) $(d^vs)_e=\gamma _e$. In fact, one has $(\partial
(y^\alpha \circ s)/\partial x^j)(x) =-\gamma ^\alpha _j(e)$, and the section
$\sigma ^\gamma $ attached to $\gamma $ is defined by, $\sigma ^\gamma (e)=j^1_xs$.
\end{remark}
\subsection{Summary of contents}
Bearing the previous definitions and notations in mind, the paper is
organized as follows: In section \ref{Hamilton_formalism} the Hamiltonian
function, the momenta, and the Hamilton-Cartan equations attached to each of
the aforementioned Lagrangians are introduced as a consequence of a normal
form for their P-C form. This section also deals with the notion of
regularity for the class of second-order variational problems with a P-C
form that projects to first-order jet bundle. Although the Hessian metric
vanishes identically for the Lagrangians of such class, a suitable notion of
regularity is introduced for them.

In \cite{EJ} the study of the formal integrability of the field equations of
second-order Lagrangians with projectable P-C form to first order in their
Hamiltonian form is devoted. In the real analytic case, this allows one to
solve the Cauchy initial value problem for this class of Lagrangians. The
previous sections are then applied to GR in section \ref{GR}, thus showing
how the theory developed fits very well to the standard Lagrangians in this
setting. Specifically, section \ref{EH} studies Einstein-Hilbert Lagrangian
from this point of view, proving its regularity and giving a new statement
for the initial problem. Similarly, section \ref{BF} provides a strong
generalization of the classical Lagrangians in BF-theory, again showing that
the results obtained above can naturally be applied to these new
Lagrangians. In section \ref{first_equiv}, the existence of first-order
Lagrangians variationally equivalent to a second-order Lagrangian admitting
a first-order Hamiltonian formalism is studied, both from local and global
point of view. This generalizes previous results obtained for the E-H
Lagrangian in \cite{MJE}. Section \ref{symmetriesNoether} introduces the
notions of symmetry and Noether invariant for the class of variational
problems dealt with throughout the paper and section \ref{symmetriesEH}
discusses in particular such concepts for the E-H Lagrangian. Finally, in
section \ref{ss4.3} the notion of a Jacobi field along an extremal is
introduced and the presymplectic structure attached to a variational problem
is defined. Several explicit examples are also developed in detail.

\section{Regularity and Hamiltonian formalism\label{Hamilton_formalism}}

In the usual (i.e., first-order) calculus of variations, a section $s$ is an
extremal of the Lagrangian density $\Lambda $ on $J^{1}E$ if and only if it
satisfies the so-called \textquotedblleft Hamilton-Cartan
equations\textquotedblright\ (or H-C for short; e.g., see \cite[(3.8)]{GoS},
\cite[(1)]{GM1}), namely, if and only if the following equation holds:
$(j^{1}s)^{\ast }(i_{X}d\Theta _{\Lambda })=0$ for every $p^{1}$-vertical
vector field $X$ on $J^{1}E$.

If $\Lambda =Lv$ is an arbitrary second-order Lagrangian density on $E$,
then the following formula holds (e.g., see \cite{Mu2}):
\begin{equation}
d\Theta _{\Lambda }
=\mathcal{E}_{\alpha }(L)\theta ^{\alpha }\wedge v+\eta _{n+1}(L),
\label{differential_P-C}
\end{equation}
where $\eta _{n+1}(L)=(-1)^{i}\eta _{2}^{i}(L)\wedge v_{i}$ and
$\eta _{2}^{i}(L)$\ is the $2$-contact $2$-form given by,
\begin{eqnarray*}
\eta _{2}^{i}(L)
&=&\tfrac{\partial L_{\alpha }^{i0}}{\partial y^{\beta }}
\theta ^{\alpha }\wedge \theta ^{\beta }
+\left( \tfrac{\partial L_{\alpha }^{i0}}{\partial y_{j}^{\beta }}
-\tfrac{\partial L_{\beta }^{ij}}{\partial y^{\alpha }}\right)
\theta ^{\alpha }\wedge \theta _{j}^{\beta } \\
&&+\sum_{j\leq k}
\tfrac{\partial L_{\alpha }^{i0}}{\partial y_{(jk)}^{\beta }}
\theta ^{\alpha }\wedge \theta _{(jk)}^{\beta }
+\sum_{i\leq k\leq l}
\tfrac{\partial L_{\alpha }^{i0}}{\partial y_{(jkl)}^{\beta }}
\theta ^{\alpha}\wedge \theta _{(jkl)}^{\beta } \\
&&+\tfrac{\partial L_{\alpha }^{ij}}{\partial y_{k}^{\beta }}
\theta _{j}^{\alpha }\wedge \theta _{k}^{\beta }
+\sum_{k\leq l}\tfrac{\partial L_{\alpha }^{ij}}{\partial y_{(kl)}^{\beta }}
\theta _{j}^{\alpha }\wedge \theta _{(kl)}^{\beta }.
\end{eqnarray*}
From the formula \eqref{differential_P-C} it follows that the H-C equations
also characterize critical sections for a second-order density $\Lambda $;
i.e., $s$ is an extremal for $\Lambda $ if and only if, $(j^{3}s)^{\ast
}(i_{X}d\Theta _{\Lambda })=0$ for every $p^{3}$-vertical vector field $X$
on $J^{3}E$.
\begin{remark}
\label{remark.4.1} If the P-C form of a second-order density $\Lambda $
projects onto $J^{1}E$, then its H-C equations have the same formal
expression of a first-order density (see the formula \eqref{HCequations}
below), although there is no first-order density having $\Theta _{\Lambda }$
as its P-C form. In fact, the P-C form of a first-order Lagrangian density
$\tilde{\Lambda}=\tilde{L}v$, $\tilde{L}\in C^{\infty }(J^{1}E)$, is given
by,
\begin{equation}
\Theta _{\tilde{\Lambda}}=(-1)^{i-1}\tfrac{\partial \tilde{L}}{\partial
y_{i}^{\alpha }}dy^{\alpha }\wedge v_{i}+\tilde{H}v,
\quad \tilde{H}=\tilde{L}
-\tfrac{\partial \tilde{L}}{\partial y_{i}^{\alpha }}y_{i}^{\alpha }.
\label{ThetaLambdaTilde1}
\end{equation}
If $\Theta _{\Lambda }=\Theta _{\tilde{\Lambda}}$, then
the following three equations are obtained:
\begin{equation*}
1)\text{ }L_{\alpha }^{ih}=0,\qquad 2)\text{ }L_{0}
-y_{i}^{\alpha }L_{\alpha }^{i0}=\tilde{L}
-\tfrac{\partial \tilde{L}}{\partial y_{i}^{\alpha }}
y_{i}^{\alpha },
\qquad 3)\text{ }L_{\alpha }^{i0}
=\frac{\partial \tilde{L}}{\partial y_{i}^{\alpha }}.
\end{equation*}
From \eqref{affine} and $1)$ it follows $L=L_{0}$; hence $L$
is of first order.

Moreover, taking \eqref{f12} into account, the formulas $2)$ and $3)$
above are respectively rewritten as $L_{0}-\tilde{L}=y_{i}^{\alpha }
\frac{\partial(L_{0}-\tilde{L})}{\partial y_{i}^{\alpha }}$,
$\frac{\partial (L_{0}-\tilde{L})}{\partial y_{i}^{\alpha }}=0$.
Hence $\tilde{L}=L$.
\end{remark}
\begin{theorem}
\emph{(see \cite[Theorem 4.1]{EJ})}\label{differentialPCform}
If $\Lambda =Lv $ is a second-order Lagrangian density on $E$
whose Poincar\'e-Cartan form projects onto $J^{1}E$, then letting
\begin{align}
p_{\alpha }^{i}& =L_{\alpha }^{i0}-\tfrac{\partial L^{i}}{\partial y^{\alpha
}},\quad 1\leq \alpha \leq m,\;1\leq i\leq n,  \label{p's} \\
H& =L_{0}-y_{i}^{\alpha }L_{\alpha }^{i0}-\tfrac{\partial L^{i}}{\partial
x^{i}},  \label{H}
\end{align}
where the functions $L^{i}$ are defined by the formulas
\eqref{L^i}-\emph{(i)}, the following formula holds:
\begin{equation}
d\Theta _{\Lambda }=(-1)^{i-1}dp_{\alpha }^{i}\wedge dy^{\alpha }
\wedge v_{i}+dH\wedge v.  \label{differential_P-C_bis}
\end{equation}
Furthermore, if the linear forms
$d_{10}(p_{\alpha }^{i})\colon V(p_{0}^{1})\to \mathbb{R}$,
$1\leq \alpha \leq m$, $1\leq i\leq n$,
are linearly independent, then a section $s\colon N\to E$
is an extremal for $\Lambda $ if and only if it satisfies
the following equations:
\begin{equation}
\left\{
\begin{array}{ll}
0=\dfrac{\partial (p_{\alpha }^{i}\circ j^{1}s)}{\partial x^{i}}
-\dfrac{\partial H}{\partial y^{\alpha }}\circ j^{1}s,
& 1\leq \alpha \leq m,\smallskip \\
0=\dfrac{\partial (y^{\alpha }\circ s)}{\partial x^{i}}
+\dfrac{\partial H}{\partial p_{\alpha }^{i}}\circ j^{1}s,
& 1\leq \alpha \leq m,\;1\leq i\leq n.
\end{array}
\right.  \label{HCequations}
\end{equation}
\end{theorem}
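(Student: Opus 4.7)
The plan is to put the Poincar\'e--Cartan form itself into a normal form on $J^1E$, from which \eqref{differential_P-C_bis} will follow by a single exterior differentiation. By Proposition~\ref{proposition1} and the definition of $\Theta_\Lambda$ the starting expression on $J^2E$ is
\[
\Theta_\Lambda = (-1)^{i-1}\bigl(L_\alpha^{i0}\theta^\alpha + L_\alpha^{ij}\theta_j^\alpha\bigr)\wedge v_i + Lv,
\]
and my task is to absorb the $\theta_j^\alpha$ piece into an exact term plus a form that descends to $J^1E$.

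The key tool will be Remark~\ref{remark1}: since each $\eta^h = L_\alpha^{hi}dy_i^\alpha$ is $d_{10}$-closed, there exist local functions $L^i\in C^\infty(J^1E)$ with $L_\alpha^{ih} = \partial L^i/\partial y_h^\alpha$. Expanding $dL^i$ in the jet coordinates and using $\theta_j^\alpha = dy_j^\alpha - y_{(jk)}^\alpha dx^k$, I expect a short check to give the identity
\[
L_\alpha^{ij}\theta_j^\alpha = dL^i - \tfrac{\partial L^i}{\partial y^\alpha}\theta^\alpha - D_k L^i\,dx^k.
\]
Wedging this with $(-1)^{i-1}v_i$ (so that $dx^k\wedge v_i = \delta^k_i(-1)^{i-1}v$ and $\theta^\alpha\wedge v_i$ reduces analogously), combining with the $(-1)^{i-1}L_\alpha^{i0}\theta^\alpha\wedge v_i$ term, and using $d(L^i v_i)=dL^i\wedge v_i$ (since $v_i$ is closed), the bookkeeping should collapse $\Theta_\Lambda$ to
\[
\Theta_\Lambda = (-1)^{i-1}p_\alpha^i\,dy^\alpha\wedge v_i + Hv + d\bigl((-1)^{i-1}L^i v_i\bigr),
\]
with $p_\alpha^i$ and $H$ precisely as in \eqref{p's}--\eqref{H}. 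Exterior differentiation then delivers \eqref{differential_P-C_bis} directly, and also confirms that $d\Theta_\Lambda$ lives on $J^1E$ as expected.

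For the Hamilton--Cartan equations, the fact that $d\Theta_\Lambda$ descends to $J^1E$ lets me replace the test condition $(j^3s)^\ast(i_X d\Theta_\Lambda)=0$ for $p^3$-vertical $X$ by $(j^1s)^\ast(i_Y d\Theta_\Lambda)=0$ for every $p^1$-vertical vector field $Y$ on $J^1E$. The assumed linear independence of the $d_{10}p_\alpha^i$ (there are $mn$ of them, matching the rank of $V^\ast(p_0^1)$) then makes $(x^i,y^\alpha,p_\alpha^i)$ local fibred coordinates for $p^1_0\colon J^1E\to E$; plugging $Y=\partial/\partial p_\alpha^i$ and $Y=\partial/\partial y^\alpha$ into the normal form of $d\Theta_\Lambda$ and pulling back via $j^1s$ is expected to yield, respectively, the second and first lines of \eqref{HCequations}.

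The main obstacle will be the sign- and total-derivative bookkeeping that closes the normal form; concretely, one needs the scalar identity $-p_\alpha^i y_i^\alpha + L - D_i L^i = H$ on $J^2E$. This collapses only because the $L_\alpha^{ij}y_{(ij)}^\alpha$ summand of $L$ (from \eqref{affine}) cancels exactly the $y_{(ij)}^\alpha\partial L^i/\partial y_j^\alpha$ contribution inside $D_i L^i$, leaving only $L_0 - y_i^\alpha L_\alpha^{i0} - \partial L^i/\partial x^i$; this cancellation is the arithmetic counterpart of the geometric hypothesis that $\Theta_\Lambda$ projects to $J^1E$, and without it the final form would not be free of second-order coordinates.
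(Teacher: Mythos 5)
Your proposal is correct: the identity $L_\alpha^{ij}\theta_j^\alpha = dL^i - \tfrac{\partial L^i}{\partial y^\alpha}\theta^\alpha - D_kL^i\,dx^k$, the cancellation of the second-order terms of $D_iL^i$ against $L_\alpha^{ij}y_{(ij)}^\alpha$ in $L$, and the resulting normal form $\Theta_\Lambda = (-1)^{i-1}p_\alpha^i\,dy^\alpha\wedge v_i + Hv + d\bigl((-1)^{i-1}L^iv_i\bigr)$ all check out, and that normal form is precisely the local expression \eqref{localPCform} quoted later in the paper plus the exact term. The paper itself gives no proof here (it defers to \cite[Theorem 4.1]{EJ}), but your route --- normal form, then contracting $d\Theta_\Lambda$ with $\partial/\partial y^\alpha$ and $\partial/\partial p_\alpha^i$ in the coordinates $(x^i,y^\alpha,p_\alpha^i)$ supplied by the independence hypothesis --- is exactly the intended argument.
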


As is well known (e.g., see \cite{GoS}), if the Hessian metric
$\operatorname{Hess}(L)$ of a first-order density $\Lambda =Lv$
is non-singular, then every section $s^{1}\colon N\to J^{1}E$
of the projection $p^{1}\colon J^{1}E\to N$ that satisfies
the P-C equation for $\Lambda $ is holonomic; i.e., $s^{1}$
coincides with the $1$-jet extension of the section
$s=p_{0}^{1}\circ s^{1}$ of the projection $p$. Namely,
$(s^{1})^{\ast }(i_{X}d\Theta _{\Lambda })=0$ for every
$p^{1}$-vertical vector field $X$ on $J^{1}E$, implies $s^{1}=j^{1}s$.

In the case of a second-order density with a P-C form projecting
onto $J^1E$, the following result holds:
\begin{proposition}[\protect\cite{EJ}]
\label{regular1} If $\Lambda =Lv$ is a second-order Lagrangian on $E$
such that, \emph{(i) }its Poincar\'e-Cartan form $\Theta _{\Lambda }$
projects onto $J^{1}E$, \emph{(ii) }the linear forms
$d_{10}(p_{\alpha }^{i})\colon V(p_{0}^{1})\to \mathbb{R}$,
$1\leq \alpha \leq m$, $1\leq i\leq n$, where the functions
$p_{\alpha }^{i}$ are introduced in \eqref{p's}, are linearly
independent, then every solution to its H-C equations,
is holonomic.
\end{proposition}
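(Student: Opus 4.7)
The plan is to evaluate the H-C equation $(s^1)^\ast(i_X\,d\Theta_\Lambda) = 0$ on $p_0^1$-vertical vector fields, using the canonical expression \eqref{differential_P-C_bis} of $d\Theta_\Lambda$ afforded by hypothesis~(i). Let $s^1\colon N\to J^1E$ be such a solution and set $s = p_0^1\circ s^1$. Choose $X = \partial/\partial y_j^\beta$, which is $p_0^1$-vertical and hence $p^1$-vertical. Since $i_X\,dy^\alpha = 0$, $i_X\,v_i = 0$, and $i_X\,v = 0$, formula \eqref{differential_P-C_bis} yields
\[
i_X\,d\Theta_\Lambda = (-1)^{i-1}\tfrac{\partial p_\alpha^i}{\partial y_j^\beta}\,dy^\alpha\wedge v_i + \tfrac{\partial H}{\partial y_j^\beta}\,v,
\]
and pulling back by $s^1$, using $dx^k\wedge v_i = (-1)^{i-1}\delta^k_i\,v$, the H-C condition becomes the pointwise linear relation
\[
\Bigl(\tfrac{\partial p_\alpha^i}{\partial y_j^\beta}\circ s^1\Bigr)\,\tfrac{\partial(y^\alpha\circ s)}{\partial x^i} + \tfrac{\partial H}{\partial y_j^\beta}\circ s^1 = 0, \qquad 1\le\beta\le m,\ 1\le j\le n. \qquad(\ast)
\]

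The decisive intermediate step is the Euler-type identity on $J^1E$:
\[
\tfrac{\partial H}{\partial y_j^\beta} + y_i^\alpha\,\tfrac{\partial p_\alpha^i}{\partial y_j^\beta} = 0. \qquad(\ast\ast)
\]
To establish $(\ast\ast)$ I would first combine \eqref{p's}, \eqref{H}, \eqref{affine} and \eqref{L^i}-(i) to derive the auxiliary identity $L - D_iL^i = H + y_i^\alpha p_\alpha^i$. Although its left-hand side is a priori a function on $J^2E$, the $y_{(hk)}^\gamma$-dependent part cancels thanks to the symmetries of $\partial L_\alpha^{hk}/\partial y_\ell^\gamma$ implied by \eqref{first_tris} (equivalently \eqref{L^i}-(ii)), so $L - D_iL^i$ in fact descends to $J^1E$. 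Differentiating the auxiliary identity by $\partial/\partial y_j^\beta$ on $J^2E$, the contribution $p_\beta^j$ arising from $\partial(y_i^\alpha p_\alpha^i)/\partial y_j^\beta$ on the right matches, after the $y_{(hk)}^\gamma$-cancellation, the $p_\beta^j$ that emerges on the left (which one checks directly using \eqref{f12} together with \eqref{p's}), and the remaining terms are precisely $(\ast\ast)$.

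With $(\ast\ast)$ in hand, evaluate it at $s^1$---where $y_i^\alpha\circ s^1$ is the $(i,\alpha)$-fibre coordinate of $s^1$---and subtract from $(\ast)$ to obtain
\[
\Bigl(\tfrac{\partial p_\alpha^i}{\partial y_j^\beta}\circ s^1\Bigr)\cdot\Bigl(\tfrac{\partial(y^\alpha\circ s)}{\partial x^i} - y_i^\alpha\circ s^1\Bigr) = 0,\qquad 1\le\beta\le m,\ 1\le j\le n.
\]
Hypothesis~(ii) states that the $mn$ linear forms $d_{10}(p_\alpha^i)$ on the $mn$-dimensional vertical fibre $V(p_0^1)$ are linearly independent; equivalently, the square matrix $(\partial p_\alpha^i/\partial y_j^\beta)$ is invertible. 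Consequently $\partial(y^\alpha\circ s)/\partial x^i = y_i^\alpha\circ s^1$ for every $(i,\alpha)$, which is precisely the assertion $s^1 = j^1 s$. The main technical obstacle is the verification of $(\ast\ast)$: the cancellation of the $y_{(hk)}^\gamma$-terms in $L - D_iL^i$ must be carried out carefully and depends essentially on the symmetries imposed by \eqref{first_tris}; once this Euler-type identity is available, holonomy follows by pure linear algebra.
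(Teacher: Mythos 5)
Your proof is correct and is essentially the intended argument (the paper quotes the proposition from \cite{EJ} without reproving it, but your route is exactly the one set up by the normal form \eqref{differential_P-C_bis} together with the interpretation of hypothesis (ii) as invertibility of the matrix $\bigl(\partial p_\alpha^i/\partial y_j^\beta\bigr)$). The only remark worth making is that your identity $(\ast\ast)$, which you flag as the main technical obstacle, is already a one-line consequence of material in the paper: comparing \eqref{H} with \eqref{barL} and \eqref{p's} gives $H=\bar L-y_i^\alpha p_\alpha^i$, and since $p_\alpha^i=\partial \bar L/\partial y_i^\alpha$ by \eqref{momenta}, differentiating with respect to $y_j^\beta$ immediately yields $\partial H/\partial y_j^\beta=-y_i^\alpha\,\partial p_\alpha^i/\partial y_j^\beta$, with no need to track the cancellation of the second-order variables in $L-D_iL^i$ (which in any case follows directly from \eqref{L^i}-(i) rather than from the full symmetry \eqref{first_tris}).
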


As $p_{0}^{1}\colon J^{1}E\to E$ is an affine bundle modelled
over $W^{1}=p^{\ast }(T^{\ast }N)\otimes V(p)$ (cf.\ \eqref{W^r}),
there is a canonical isomorphism
$I\colon (p_{0}^{1})^{\ast }W^{1}
\overset{\cong }{\to }V(p_{0}^{1})$
locally given by,
$I(j_{x}^{1}s,(dx^{i})_{x}
\otimes (\partial /\partial y^{\alpha })_{s(x)})
=(\partial /\partial y_{i}^{\alpha })_{j_{x}^{1}s}$.

According to the previous lemma, we can define a bilinear form
\begin{equation}
\left\{
\begin{array}{l}
b_{\Lambda }\colon (p_{0}^{1})^{\ast }W^{1}\times _{J^{1}E}
(p_{0}^{1})^{\ast }W^{1}\to \mathbb{R},
\medskip \\
b_{\Lambda }
\left(
j_{x}^{1}s;w_{0}\otimes Y_{0},w_{1}\otimes Y_{1}
\right)
=\left\langle
w_{0},(\phi _{v}^{1})^{-1}
\left( i_{Y_{0}}i_{Y}(d\Theta _{\Lambda })
\right)
\right\rangle ,
\smallskip \\
w_{a}\in T_{x}^{\ast }N,Y_{a}\in V_{s(x)}(p),a=0,1;\;
Y=I(j_{x}^{1}s,w_{1}\otimes Y_{1}),
\end{array}
\right.  \label{B_Lambda}
\end{equation}
where $\phi _{v}^{k}$ is the isomorphism defined by
\begin{equation}
\phi _{v}^{k}\colon \wedge ^{k}T_{x}N\to \wedge ^{n-k}T_{x}^{\ast }N
\label{phi^k}
\end{equation}
for every $1\leq k\leq n-1$, obtained by contracting with $v$, namely
\begin{equation*}
\phi _{v}^{k}(X_{1}\wedge \cdots \wedge X_{k})=i_{X_{1}}\ldots
i_{X_{k}}v,\quad \forall X_{1},\dotsc ,X_{k}\in T_{x}N.
\end{equation*}
If $w_{0}=(dx^{i})_{x}$ and $Y_{0}=(\partial /\partial y^{\alpha })_{s(x)}$,
then one readily obtains,
\begin{eqnarray*}
i_{Y_{0}}i_{Y}(d\Theta _{\Lambda })
&=&(-1)^{i-1}\left(
\frac{\partial L_{\alpha }^{i0}}{\partial y_{j}^{\beta }}(j_{x}^{1}s)
-\frac{\partial L_{\beta }^{ij}}{\partial y^{\alpha }}(j_{x}^{1}s)
\right)
(v_{i})_{x}, \\
\left\langle
w_{0},(\phi _{v}^{1})^{-1}\left( i_{Y_{0}}i_{Y}(d\Theta _{\Lambda })
\right)
\right\rangle
&=&\frac{\partial L_{\alpha }^{i0}}{\partial y_{j}^{\beta }}
(j_{x}^{1}s)
-\frac{\partial L_{\beta }^{ij}}{\partial y^{\alpha }}(j_{x}^{1}s).
\end{eqnarray*}
In other words,
\begin{equation*}
b_{\Lambda }\left( j_{x}^{1}s;\left( dx^{i}\right) _{x}
\otimes \left( \frac{\partial }{\partial y^{\alpha }}\right) _{s(x)},
\left( dx^{j}\right) _{x}\otimes
\left( \frac{\partial }{\partial y^{\beta }}\right) _{s(x)}
\right) =\frac{\partial L_{\alpha }^{i0}}{\partial y_{j}^{\beta }}
(j_{x}^{1}s)-\frac{\partial L_{\beta }^{ij}}{\partial y^{\alpha }}
(j_{x}^{1}s).
\end{equation*}
Hence, the next result follows:
\begin{corollary}
\label{RegularityCondition} Let $\Lambda $ be a second-order density on $E$
whose P-C form projects onto $J^1E$. If the bilinear form defined in
\eqref{B_Lambda} is non-singular, then every solution to the H-C equations
for $\Lambda $ is holonomic.
\end{corollary}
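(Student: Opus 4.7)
The plan is to reduce the corollary to Proposition \ref{regular1}, whose hypothesis is the pointwise linear independence of the $mn$ fibre differentials $d_{10}(p_{\alpha}^{i})\in V^\ast(p_{0}^{1})$. So it suffices to show that non-singularity of $b_\Lambda$ at $j_{x}^{1}s$ forces those $mn$ one-forms to be linearly independent at that point.

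First I would expand $d_{10}(p_{\alpha}^{i})$ in the local basis $\{d_{10}(y_{j}^{\beta})\}$ of $V^{\ast}(p_{0}^{1})$ using the definition \eqref{p's}:
\[
d_{10}(p_{\alpha}^{i})
=\Bigl(\tfrac{\partial L_{\alpha}^{i0}}{\partial y_{j}^{\beta}}
-\tfrac{\partial^{2}L^{i}}{\partial y_{j}^{\beta}\,\partial y^{\alpha}}\Bigr)
\,d_{10}(y_{j}^{\beta}).
\]
Using Remark \ref{remark1}(i), $L_{\beta}^{ij}=\partial L^{i}/\partial y_{j}^{\beta}$, and interchanging the partials gives $\partial^{2}L^{i}/(\partial y_{j}^{\beta}\,\partial y^{\alpha})=\partial L_{\beta}^{ij}/\partial y^{\alpha}$, so that
\[
d_{10}(p_{\alpha}^{i})
=\Bigl(\tfrac{\partial L_{\alpha}^{i0}}{\partial y_{j}^{\beta}}
-\tfrac{\partial L_{\beta}^{ij}}{\partial y^{\alpha}}\Bigr)
\,d_{10}(y_{j}^{\beta}).
\]

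Second, the crucial observation is that the $mn\times mn$ coefficient matrix $M_{(\alpha,i),(\beta,j)}(j_{x}^{1}s)=\frac{\partial L_{\alpha}^{i0}}{\partial y_{j}^{\beta}}(j_{x}^{1}s)-\frac{\partial L_{\beta}^{ij}}{\partial y^{\alpha}}(j_{x}^{1}s)$ appearing above is exactly the matrix of the bilinear form $b_{\Lambda}$ in the basis $\{dx^{i}\otimes \partial/\partial y^{\alpha}\}$ of $(p_{0}^{1})^{\ast}W^{1}$ at $j_{x}^{1}s$, as established in the displayed computation immediately preceding the statement of the corollary (the factor $(-1)^{i-1}$ coming from $d\Theta_{\Lambda}$ and the factor $(-1)^{i-1}$ hidden in $(\phi_{v}^{1})^{-1}((v_{i})_{x})$ cancel, producing precisely this matrix). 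Hence non-singularity of $b_{\Lambda}$ at $j_{x}^{1}s$ is equivalent to invertibility of $M$, which in turn is equivalent to the linear independence of $\{d_{10}(p_{\alpha}^{i})\}_{\alpha,i}$ inside $V^{\ast}(p_{0}^{1})_{j_{x}^{1}s}$.

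Third, this is precisely the hypothesis of Proposition \ref{regular1}, so applying that proposition yields that every solution of the H-C equations for $\Lambda$ is holonomic, which is the desired conclusion. The only real obstacle is bookkeeping: keeping the indices and the sign $(-1)^{i-1}$ aligned so that the identification of the coefficient matrix of $d_{10}(p_{\alpha}^{i})$ with the Gram matrix of $b_{\Lambda}$ is exact; once that identification is made, the corollary is immediate from Proposition \ref{regular1}.
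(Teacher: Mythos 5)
Your proposal is correct and follows essentially the same route as the paper: the displayed computation preceding the corollary already identifies the Gram matrix of $b_{\Lambda}$ with the coefficient matrix $\bigl(\partial p_{\alpha}^{i}/\partial y_{j}^{\beta}\bigr)=\bigl(\partial L_{\alpha}^{i0}/\partial y_{j}^{\beta}-\partial L_{\beta}^{ij}/\partial y^{\alpha}\bigr)$ of the forms $d_{10}(p_{\alpha}^{i})$, so non-singularity of $b_{\Lambda}$ is exactly the linear-independence hypothesis of Proposition \ref{regular1}. You have simply made explicit the step the paper compresses into ``Hence, the next result follows.''
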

\begin{proposition}
\emph{(see \cite[Proposition 5.4]{EJ})}The bilinear form $b_{\Lambda }$
defined in \emph{\eqref{B_Lambda}} is symmetric.
\end{proposition}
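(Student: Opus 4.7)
The plan is to reduce the claim to the coordinate identity $\partial p_\alpha^i/\partial y_j^\beta=\partial p_\beta^j/\partial y_i^\alpha$, and then verify it termwise via the potentials $L^i$ supplied by Remark~\ref{remark1}. From the definition \eqref{p's} together with the relation $\partial L^i/\partial y_j^\beta=L_\beta^{ij}$ contained in Remark~\ref{remark1}(i), one gets
\[
\frac{\partial p_\alpha^i}{\partial y_j^\beta}
=\frac{\partial L_\alpha^{i0}}{\partial y_j^\beta}-\frac{\partial^2 L^i}{\partial y^\alpha\,\partial y_j^\beta}
=\frac{\partial L_\alpha^{i0}}{\partial y_j^\beta}-\frac{\partial L_\beta^{ij}}{\partial y^\alpha},
\]
which is exactly the local value of $b_\Lambda$ computed just before the statement. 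Hence the symmetry of $b_\Lambda$ under $(i,\alpha)\leftrightarrow(j,\beta)$ is equivalent to the displayed identity on $p_\alpha^i$.

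Next, I would make $p_\alpha^i$ explicit on $J^1E$. Substituting the affine form $L=L_\gamma^{hk}y_{(hk)}^\gamma+L_0$ into $L_\alpha^{i0}=\partial L/\partial y_i^\alpha-D_k(L_\alpha^{ik})$ and using \eqref{first_tris} to force the $y_{(hk)}^\gamma$-dependent part to cancel (which it must, since $L_\alpha^{i0}$ descends to $J^1E$), one obtains
\[
p_\alpha^i=\frac{\partial L_0}{\partial y_i^\alpha}-\frac{\partial L_\alpha^{ik}}{\partial x^k}-\frac{\partial L_\alpha^{ik}}{\partial y^\gamma}\,y_k^\gamma-\frac{\partial L^i}{\partial y^\alpha}.
\]
Via $L_\alpha^{ik}=\partial L^i/\partial y_k^\alpha$, this involves only partial derivatives of $L_0$ and of the $L^i$, of order at most two.

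Finally, I would differentiate this expression with respect to $y_j^\beta$ and compare the result with the same formula after $(i,\alpha)\leftrightarrow(j,\beta)$. The $L_0$-contribution is symmetric by equality of ordinary mixed partials. Every remaining summand is a second or third partial of some $L^i$ in which the upper label $i$ and one of the jet-subscripts play symmetric roles; applying \eqref{L^i}-(ii), $\partial L^i/\partial y_h^\alpha=\partial L^h/\partial y_i^\alpha$, inside the partial swaps $L^i$ with $L^h$, and a further application of equality of mixed partials produces exactly the summand of the swapped side (for example, $\sum_k\partial^3 L^i/(\partial y_j^\beta\partial x^k\partial y_k^\alpha)=\sum_k\partial^3 L^k/(\partial y_j^\beta\partial x^k\partial y_i^\alpha)$, whose symmetry in $(i,\alpha)\leftrightarrow(j,\beta)$ is then automatic). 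The main obstacle is purely index bookkeeping, not any conceptual step: the whole symmetry is forced by the mere existence of the local potentials $L^i$ granted by Remark~\ref{remark1} (itself a consequence of \eqref{first_tris}), so no assumption on $\Lambda$ beyond projectability of $\Theta_\Lambda$ onto $J^1E$ enters the argument.
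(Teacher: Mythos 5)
Your argument is correct, but it is organized differently from the paper's. The paper's proof consists of a single observation: the momenta admit a scalar potential, namely the first-order Lagrangian $\bar{L}=L_{0}-\partial L^{i}/\partial x^{i}-y_{i}^{\alpha }\,\partial L^{i}/\partial y^{\alpha }$ of \eqref{barL}, for which a short computation gives $p_{\alpha }^{i}=\partial \bar{L}/\partial y_{i}^{\alpha }$ (formula \eqref{momenta}); the matrix of $b_{\Lambda }$ is then the Hessian $\partial ^{2}\bar{L}/\partial y_{i}^{\alpha }\partial y_{j}^{\beta }$ and symmetry is immediate from a single application of equality of mixed partials. You instead expand $p_{\alpha }^{i}$ explicitly (your formula for $p_\alpha^i$ agrees with the one the paper records later, in the proof of Theorem \ref{symmetry1}) and check the identity $\partial p_{\alpha }^{i}/\partial y_{j}^{\beta }=\partial p_{\beta }^{j}/\partial y_{i}^{\alpha }$ term by term, which amounts to applying Schwarz's theorem separately to $L_{0}$ and to each $L^{k}$. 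The computational content is essentially the same — indeed, verifying \eqref{momenta} is the same bookkeeping you describe — but the paper's packaging buys more: $\bar{L}$ is reused as the first-order variationally equivalent Lagrangian in Theorem \ref{first-order-equivalent} and in the regularity criterion, whereas your route proves only the symmetry. One small caution about your termwise step: not every summand of $\partial p_{\alpha }^{i}/\partial y_{j}^{\beta }$ is individually invariant under $(i,\alpha )\leftrightarrow (j,\beta )$. The pair $-\partial L_{\alpha }^{ij}/\partial y^{\beta }=-\partial ^{2}L^{j}/\partial y^{\beta }\partial y_{i}^{\alpha }$ (coming from differentiating the explicit factor $y_{k}^{\gamma }$) and $-\partial ^{2}L^{i}/\partial y_{j}^{\beta }\partial y^{\alpha }$ (coming from $-\partial L^{i}/\partial y^{\alpha }$) are exchanged with each other by the swap rather than each being self-symmetric; your phrase ``produces exactly the summand of the swapped side'' covers this, but your displayed example only illustrates the self-symmetric case, so make the cross-matching explicit when you write the bookkeeping out.
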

In fact, if $\bar{L}$ is the Lagrangian defined by
\begin{equation}
\bar{L}=L_{0}-\frac{\partial L^{i}}{\partial x^{i}}
-y_{i}^{\alpha }\frac{\partial L^{i}}{\partial y^{\alpha }},
\label{barL}
\end{equation}
then, as a calculation shows,
\begin{equation}
p_{\alpha }^{i}=\frac{\partial \bar{L}}{\partial y_{i}^{\alpha }}.
\label{momenta}
\end{equation}
\section{Applications to GR\label{GR}}
\subsection{Einstein-Hilbert Lagrangian\label{EH}}
Below, we follow \cite{EJ}. Let $p_{M}\colon M=M(N)\to N$ be the
bundle of pseudo-Riemannian metrics of a given signature $(n^{+},n^{-})$,
$n^{+}+n^{-}=n$. Every coordinate system $(x^{i})_{i=1}^n $ on an open
domain $U\subseteq N$ induces a coordinate system $(x^{i},y_{jk})$ on
$(p_{M})^{-1}(U)$, where the functions $y_{jk}=y_{kj}$ are defined by,
\begin{equation*}
\begin{array}{ll}
g_{x}\!\!\! & =y_{ij}(g_{x})(dx^{i})_{x}\otimes (dx^{j})_{x}
\smallskip \\
\!\!\! & =\sum\limits_{i\leq j}\frac{1}{1+\delta _{ij}}
y_{ij}(g_{x})(dx^{i})_{x}\odot (dx^{j})_{x},
\end{array}
\quad \forall g_{x}\in (p_{M})^{-1}(U).
\end{equation*}
Following the notations in \cite{KN}, the Ricci tensor field attached
to the symmetric connection $\Gamma $ is given by
$S^{\Gamma }(X,Y)=\operatorname{trace}(Z\mapsto R^{\Gamma }(Z,X)Y)$,
where $R^{\Gamma }$ denotes the curvature
tensor field of the covariant derivative $\nabla ^{\Gamma }$ associated to
$\Gamma $ on the tangent bundle; hence
$S^{\Gamma }=(R^{\Gamma })_{jl}dx^{l}\otimes dx^{j}$,
where $(R^{\Gamma })_{jl}=(R^{\Gamma })_{jkl}^{k}$,
and $(R^{\Gamma })_{jkl}^{i}=\partial \Gamma _{jl}^{i}/\partial x^{k}
-\partial \Gamma _{jk}^{i}/\partial x^{l}+\Gamma _{jl}^{m}\Gamma _{km}^{i}
-\Gamma _{jk}^{m}\Gamma _{lm}^{i}$.

The E-H Lagrangian density is given by
\begin{equation*}
(\Lambda
_{EH})_{j_{x}^{2}g}=g^{ij}(x)(R^{g})_{ihj}^{h}(x)v_{g}(x)=L_{EH}(j_{x}^{2}g)v_{x},
\end{equation*}
where $v$ is the standard volume form, $R^{g}$ is the curvature tensor
of the Levi-Civita connection $\Gamma ^{g}$ of the metric $g$, and $v_{g}$
denotes the Riemannian volume form attached to $g$; i.e., in coordinates,
$v_{g}=\sqrt{|\det ((g_{ab})_{a,b=1}^n )|}v$. Hence,
\begin{equation}
L_{EH}\circ j^{2}g=(\rho \circ g)(y^{ij}\circ g)(R^{g})_{ihj}^{h},
\quad \rho =\sqrt{|\det ((y_{ab})_{a,b=1}^n )|}. \label{rho}
\end{equation}
The local expression for $L_{EH}$ is readily seen to be

\begin{equation*}
L_{EH}=\rho \sum\nolimits_{a,b}\sum\nolimits_{c,d}\left(
y^{ac}y^{bd}-y^{ab}y^{cd}\right) y_{ab,cd}+\left( L_{EH}\right) _{0},
\end{equation*}
\begin{align}
\left( L_{EH}\right) _{0}& =\tfrac{\rho }{2}\sum_{r\leq s}\sum_{k\leq l}
\tfrac{1}{(1+\delta _{kl})(1+\delta _{rs})}\left( 2y^{rs}\left(
y^{ki}y^{jl}+y^{li}y^{jk}\right) -2y^{kl}y^{sr}y^{ji}\right. \label{L_EH_0}
\\
& +2y^{kl}\left( y^{jr}y^{si}+y^{js}y^{ri}\right) +3y^{ij}\left(
y^{kr}y^{ls}+y^{ks}y^{lr}\right)  \notag \\
& -y^{ir}\left( y^{ks}y^{jl}+y^{ls}y^{jk}\right) -y^{is}\left(
y^{kr}y^{jl}+y^{lr}y^{jk}\right)  \notag \\
& \left. -2y^{ki}\left( y^{sl}y^{jr}+y^{rl}y^{js}\right) -2y^{li}\left(
y^{sk}y^{jr}+y^{rk}y^{js}\right) \right) y_{kl,i}y_{rs,j}.  \notag
\end{align}
Hence $L_{EH}$ is an affine function and according
to Proposition \ref{proposition1} its P-C form projects onto $J^{1}M$
if and only if the following equations hold:
\begin{equation*}
0=2\dfrac{\partial (L_{EH})_{rs}^{hi}}{\partial y_{ht,a}}
-\dfrac{\partial (L_{EH})_{ht}^{ai}}{\partial y_{rs,h}}
-\dfrac{\partial (L_{EH})_{ht}^{ah}}{\partial y_{rs,i}},
\end{equation*}
where
\begin{align}
\left( L_{EH}\right) _{rs}^{ij}
& =\tfrac{1}{2-\delta _{ij}}\frac{\partial L_{EH}}{\partial y_{rs,ij}}
\label{L_EH^ij_rs} \\
& =\tfrac{1}{1+\delta _{rs}}\rho \left(
y^{ir}y^{js}+y^{jr}y^{is}-2y^{rs}y^{ij}\right) , \notag
\end{align}
and the result follows immediately as $(L_{EH})_{rs}^{ij}$
does not depend on the variables $y_{ij,k}$.

In the present case, one has
\begin{align}
p_{kl}^{i}& =\sum_{r\leq s}
\left(
\frac{\partial ^{2}L_{0}}{\partial y_{rs,j}\partial y_{kl,i}}
-\frac{\partial (L_{EH})_{kl}^{ij}}{\partial y_{rs}}
-\frac{\partial (L_{EH})_{rs}^{ij}}{\partial y_{kl}}
\right) y_{rs,j}
\label{p^j_mr} \\
& =\sum_{r\leq s}Y_{kl}^{i;rs,j}y_{rs,j}, \notag
\end{align}
\begin{align}
Y_{kl}^{i;rs,j}
& =\tfrac{\rho }{(1+\delta _{kl})(1+\delta _{rs})}\left[
2y^{rs}y^{kl}y^{ij}-\left( y^{rk}y^{sl}+y^{rl}y^{sk}\right) y^{ij}\right.
\label{Y's} \\
& +\left( y^{sk}y^{lj}+y^{sl}y^{kj}\right) y^{ri}+\left(
y^{rk}y^{lj}+y^{rl}y^{kj}\right) y^{si}  \notag \\
& \left. -\left( y^{ki}y^{lj}+y^{li}y^{kj}\right) y^{rs}-\left(
y^{ri}y^{sj}+y^{rj}y^{si}\right) y^{kl}\right] ,  \notag
\end{align}
\begin{align}
H& =\rho \sum_{k\leq l}\sum_{r\leq s}\tfrac{1}{(1+\delta _{rs})
(1+\delta _{kl})}\left( -y^{ij}y^{kl}y^{rs}\right.  \label{HforEH} \\
& +y^{kl}\left( y^{ir}y^{js}+y^{is}y^{jr}\right)
+\tfrac{1}{2}y^{ij}\left(
y^{ks}y^{lr}+y^{kr}y^{ls}\right)  \notag \\
& \left. -\tfrac{1}{2}y^{ir}\left( y^{jl}y^{ks}+y^{jk}y^{ls}\right)
-\tfrac{1}{2}y^{is}\left( y^{jl}y^{kr}+y^{jk}y^{lr}\right)
\right) y_{rs,j}y_{kl,i}.
\notag
\end{align}
\begin{remark}
As a calculation shows, from the expression in \eqref{HforEH} for the
Hamiltonian of the E-L Lagrangian, for every $j^1_xg\in J^1M$ the following
formula holds true: $H(j^1_xg)=\rho (x) g^{ij}(x) ((\Gamma ^g)^r_{ij}(x)
(\Gamma ^g)^h_{hr}(x)-(\Gamma ^g)^r_{hi}(x)(\Gamma ^g)^h_{jr}(x))$.
Hence the function $H$---considered as a first-order
Lagrangian---not only provides the H-C equations for $\Lambda _{EH}$ but
also its own E-L equations, e.g., see \cite[3.3.1]{Carmeli}.
\end{remark}
\begin{theorem}[cf. \protect\cite{MJE}, \protect\cite{FF}, \protect\cite{EJ}]
\label{thEH} We have
\begin{enumerate}
\item[\emph{(i)}] With the natural identification
$V(p_M)\cong p_M^\ast S^2T^\ast N$, the bilinear form $
b_{\Lambda _{EH}}$ is defined on $p_M^\ast (T^\ast N\otimes S^2T^\ast N)$.
\item[\emph{(ii)}] The Lagrangian function $\bar{L}_{EH}$ defined in
\eqref{barL} coincides with the opposite to the Hamiltonian function.
\item[\emph{(iii)}] The E-H Lagrangian satisfies the regularity condition
of \emph{Corollary \ref{RegularityCondition}}.
\end{enumerate}
\end{theorem}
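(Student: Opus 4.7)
For (i), use the canonical identification $V(p_M)\cong p_M^{\ast}S^2T^{\ast}N$ (infinitesimal deformations of a symmetric bilinear form are themselves symmetric), which combined with $W^1=p^{\ast}T^{\ast}N\otimes V(p)$ from \eqref{W^r} yields $W^1\cong p_M^{\ast}(T^{\ast}N\otimes S^2T^{\ast}N)$. That $b_{\Lambda_{EH}}$ descends from $J^1M$ to $M$ is visible directly from \eqref{Y's}: its coefficients $Y_{kl}^{i;rs,j}$ are polynomials in $y^{ab}$ and $\rho$ alone and involve no first-derivative coordinate.

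For (ii), subtract \eqref{H} from \eqref{barL} and invoke \eqref{p's} to obtain the universal identity
\begin{equation*}
\bar{L}-H=y_i^{\alpha}\bigl(L_{\alpha}^{i0}-\partial L^i/\partial y^{\alpha}\bigr)=y_i^{\alpha}p_{\alpha}^i.
\end{equation*}
For E-H, the natural primitive $L^i=\sum_{r\leq s}(L_{EH})_{rs}^{ij}y_{rs,j}$ is linear in the first-derivative coordinates with metric-only coefficients, so $\partial L^i/\partial x^i=0$ and, since $L_0$ from \eqref{L_EH_0} is homogeneous of degree two in the $y_{kl,i}$'s, so is $\bar{L}_{EH}$. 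Euler's identity then gives $y_i^{\alpha}\partial\bar{L}_{EH}/\partial y_i^{\alpha}=2\bar{L}_{EH}$, and \eqref{momenta} yields $y_i^{\alpha}p_{\alpha}^i=2\bar{L}_{EH}$. Substituting into the displayed identity forces $\bar{L}_{EH}=-H$.

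For (iii), \eqref{momenta} implies that the matrix of $b_{\Lambda_{EH}}$ at $j_x^1g$ is $\partial^2\bar{L}_{EH}/\partial y_i^{\alpha}\partial y_j^{\beta}$, which by (ii) equals $-\partial^2 H/\partial y_i^{\alpha}\partial y_j^{\beta}$, with entries read from \eqref{Y's}. Invoking the Remark after \eqref{HforEH}, rewrite $H=\rho g^{ij}(\Gamma_{ij}^r\Gamma_{hr}^h-\Gamma_{hi}^r\Gamma_{jr}^h)$. Since $g_{ab,c}\mapsto\Gamma_{ab}^c$ is a linear bijection from $T_x^{\ast}N\otimes S^2T_x^{\ast}N$ onto $S^2T_x^{\ast}N\otimes T_xN$, nondegeneracy of the Hessian in the $y_{kl,i}$'s is equivalent to nondegeneracy of that quadratic form in the $\Gamma_{ij}^k$'s; by $GL(n,\mathbb{R})$-covariance it suffices to work at $g(x)=\eta$, after which decomposing $S^2T^{\ast}N\otimes TN$ under $O(n^+,n^-)$ into trace and traceless pieces reduces the problem to a nondegeneracy check on each irreducible summand.

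The main obstacle is precisely this last verification in (iii): whereas (i) is structural and (ii) reduces to Euler's homogeneous-function theorem, showing that the quadratic form in the Christoffel coordinates has trivial radical on $S^2T_x^{\ast}N\otimes T_xN$ requires a genuine computation, which the representation-theoretic decomposition makes tractable but not vacuous.
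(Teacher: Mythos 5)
Your parts (i) and (ii) are correct. Part (i) is essentially the paper's own argument: the paper likewise observes that the matrix of $b_{\Lambda_{EH}}$ in the basis $(dx^{i})_{x}\otimes(\partial/\partial y_{jk})_{g_{x}}$ is $\bigl(Y_{mr}^{j;cd,h}(g_{x})\bigr)$, whose entries depend only on $g_{x}$, so the form lives on $p_{M}^{\ast}(T^{\ast}N\otimes S^{2}T^{\ast}N)$. For (ii) the paper merely says the identity follows from \eqref{momenta}, \eqref{p^j_mr}, \eqref{HforEH} ``by means of a simple calculation''; your route---the universal identity $\bar{L}-H=y_{i}^{\alpha}p_{\alpha}^{i}$ combined with the degree-two homogeneity of $\bar{L}_{EH}$ in the $y_{kl,i}$ and Euler's theorem---is a genuinely different and cleaner derivation, and it makes transparent exactly which structural feature of $L_{EH}$ (pure quadraticity of $(L_{EH})_{0}$ and of $y_{kl,i}\,\partial L^{i}/\partial y_{kl}$, together with $\partial L^{i}/\partial x^{i}=0$ for the natural primitive) is responsible for $\bar{L}_{EH}=-H$.

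Part (iii), however, is a strategy rather than a proof, and you say so yourself. The reductions you perform are all sound: the matrix of $b_{\Lambda_{EH}}$ is $\partial^{2}\bar{L}_{EH}/\partial y_{mr,j}\partial y_{cd,h}=-\partial^{2}H/\partial y_{mr,j}\partial y_{cd,h}$; the map $g_{ab,c}\mapsto(\Gamma^{g})_{ab}^{c}$ is a linear bijection at fixed $g_{x}$; covariance lets you set $g_{x}=\eta$; and the $O(n^{+},n^{-})$-decomposition of $S^{2}T_{x}^{\ast}N\otimes T_{x}N$ reduces the question to finitely many scalar (and one small matrix) checks. But the decisive step---that the quadratic form $\rho\,g^{ij}(\Gamma_{ij}^{r}\Gamma_{hr}^{h}-\Gamma_{hi}^{r}\Gamma_{jr}^{h})$ has trivial radical on each irreducible summand---is never carried out, and it is precisely where the content lies: the vector representation occurs with multiplicity two (via the two traces $g^{ij}\Gamma_{ij}^{k}$ and $\Gamma_{ij}^{i}$), so one must compute a $2\times 2$ Gram determinant and verify it is nonzero, and the answer must be sensitive to $n$ (the paper's determinant $\det\bigl((L_{EH})_{rs}^{ij}\bigr)=-(n-1)\rho^{(n+1)(n+4)/2}$ already signals degeneration at $n=1$, and the $n=2$ case needs separate attention since $\Lambda_{EH}$ is then a total divergence). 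The paper closes this step differently: it identifies $\partial^{2}\bar{L}_{EH}/\partial y_{mr,j}\partial y_{cd,h}$ with $\partial^{2}L^{\nabla}/\partial y_{mr,j}\partial y_{cd,h}$ for the first-order Lagrangian $L^{\nabla}$ of \cite{MJE} and imports the nonsingularity from Proposition 5.1 there. Either supply the summand-by-summand computation or cite an existing nondegeneracy result as the paper does; as written, the regularity assertion is not yet established.
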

\begin{proof}
(i) From the formula
\begin{equation*}
\frac{\partial p_{\alpha }^{i}}{\partial y_{h}^{\beta }}
=\frac{\partial L_{\alpha }^{i0}}{\partial y_{h}^{\beta }}
-\frac{\partial L_{\beta }^{ih}}{\partial y^{\alpha }},
\end{equation*}
and \eqref{p^j_mr}, \eqref{Y's} it follows that the matrix
of $b_{\Lambda _{EH}}$ in the basis $(dx^{i})_{x}\otimes
(\partial /\partial y_{jk})_{g_{x}} $, $g_{x}\in p^{-1}(x)$,
$1\leq i\leq n$, $1\leq j\leq k\leq n$, at a point $j_{x}^{1}g$ is
\begin{equation*}
\left( (\partial p_{mr}^{j}/\partial y_{cd,h})(j_{x}^{1}g)
\right) _{c\leq d,h}^{m\leq r,j}
=\left( Y_{mr}^{j;cd,h}(g_{x})\right) _{c\leq d,h}^{m\leq r,j},
\end{equation*}
and one can conclude.

\medskip

\noindent (ii) It follows from the formulas \eqref{momenta}, \eqref{p^j_mr},
\eqref{HforEH} by means of a simple calculation.

\medskip

\noindent (iii) The proof is similar to that of Proposition 5.1 in \cite{MJE}, as
\begin{equation*}
\frac{\partial p_{mr}^{j}}{\partial y_{cd,h}}
=\frac{\partial ^{2}\bar{L}_{EH}}{\partial y_{mr,j}\partial y_{cd,h}}
=\frac{\partial ^{2}L^{\nabla }}{\partial y_{mr,j}\partial y_{cd,h}},
\end{equation*}
where $L^{\nabla }$ is the first-order Lagrangian variationally equivalent
to $L_{EH}$ introduced in \cite{MJE}.

In the present case, the equations \eqref{HCequations} become
\begin{equation*}
\left\{
\begin{array}{ll}
0=\dfrac{\partial (p_{kl}^i\circ j^1s)}{\partial x^i}
-\dfrac{\partial H}{\partial y_{kl}} \circ j^1s,
\smallskip
& 1\leq k\leq l\leq n, \\
0=\dfrac{\partial (y_{kl}\circ s)}{\partial x^i}
+\dfrac{\partial H}{\partial p_{kl}^i} \circ j^1s,
& 1\leq i\leq n, \; 1\leq k\leq l\leq n.
\end{array}
\right.
\end{equation*}
\end{proof}
\begin{remark}
By using the previous theorem, in \cite[Theorem 6.2]{EJ} the following
result has been obtained:

\textquotedblleft Given symmetric scalars $\gamma _{jk}^{i}=\gamma _{kj}^{i}$,
$i,j,k=1,\dotsc ,n$, there exists a Ricci-flat (pseudo-)Riemannian metric
$g$ of signature $(n^{-},n^{+})$ defined on a neighbourhood of $x_{0}\in N$
such that, $g_{ij}(x_{0})=\delta _{ij}$, $(\Gamma
^{g})_{jk}^{i}(x_{0})=\gamma _{jk}^{i}$, for all $i,j,k$.\textquotedblright
\end{remark}
\subsection{BF field theory\label{BFth}}
\label{BF} In this section we consider a new approach to BF Lagrangians
(cf.\ \cite{BGM}, \cite{CF}, \cite{FS}, \cite{Krasnov0}, \cite{Krasnov1})
generalizing the E-H functional.

Let $\pi \colon F(N)\to N$ be the principal $Gl(n,\mathbb{R})$-bundle of
linear frames on $N$. Given a metric $g$ on $N$, let $\pi _g\colon
F_g(N)\subset F(N)\to N$ be the subbundle of orthonormal linear frames with
respect to $g$, i.e., $u=(X_1,\dotsc,X_n)$ belongs to $F_g(N)$ if and only
if, $g(X_i,X_j)=\varepsilon _i\delta _{ij}$, with $\varepsilon _i=+1$ for
$1\leq i\leq n^+$ and $\varepsilon _i=-1$ for $1+n^+\leq i\leq n$. This is a
principal bundle with structure group the orthogonal group $O(n^+,n^-)$,
$n^++n^-=n$, associated to the quadratic form $q(x)=\sum _{a=1}^{n^+}(x^a)^2
-\sum _{b=n^++1}^{n^++n^-}(x^b)^2$.

By virtue of the symmetries of the curvature tensor $R^g$ of the Levi-Civita
connection of a metric $g$, for every $X,Y\in T_xN$ the endomorphism
$R^g(X,Y)$ takes values in the vector subspace of skew-symmetric linear
operators (with respect to $g_x$) in
$\operatorname{End}(T_xN)=T^\ast _xN\otimes T_xN $. More generally,
let $p_M\colon M\to N$ be the bundle of pseudo-Riemannian metrics
of signature $(n^+,n^-)$, and let
\begin{equation*}
\mathcal{A}(TN) \subset (p_M)^\ast \operatorname{End}(TN)
=M\times _N\operatorname{End}(TN)
\end{equation*}
be the vector subbundle of the pairs $(g_x,A)$, $g_x\in (p_M)^{-1}(x)$
and $A\in \operatorname{End}(T_xN)$, such that $g_x(AX,Y)+g_x(X,AY)=0$,
$\forall X,Y\in T_xN$; i.e., $A$ is skew-symmetric with respect to $g_x$.
Pulling $\mathcal{A}(TN)$ back along a metric $g$, understood as a smooth
section of  $p_M\colon M\to N$, one obtains the adjoint bundle
of the bundle of orthonormal frames with respect to $g$, i.e.,
the bundle associated to $F_g(N)$ under the adjoint representation
of $O(n^+,n^-)$ on its Lie algebra $\mathfrak{o}(n^+,n^-)$, i.e.,
$g^\ast \mathcal{A}(TN)=\operatorname{ad}F_g(N)
=(F_g(N)\times \mathfrak{o}(n^+,n^-))/O(n^+,n^-)$.

If $\beta $ is an $\mathcal{A}(TN)$-valued $p_{M}$-horizontal $(n-2)$-form
on $M$, then a second-order Lagrangian density $\Lambda _{\beta }$ is
defined on $J^{2}M$ by setting,
\begin{equation}
\left( \Lambda _{\beta }\right) _{j_{x}^{2}g}
=L_{\beta }(j_{x}^{2}g)v(x)
=\operatorname{trace}\left( \beta (g_{x})\wedge R^{g}(x)\right) ,
\label{Lambda_beta}
\end{equation}
where $R^{g}$ is considered as a $\operatorname{ad}F_{g}(N)$-valued
$2$-form on $N$. Locally,
\begin{align}
R^{g}& =\sum_{k<l}(R^{g})_{jkl}^{i}dx^{k}
\wedge dx^{l}\otimes dx^{j}
\otimes \frac{\partial }{\partial x^{i}},
\notag \\
\beta & =\sum_{k<l}\beta _{kl,j}^{i}v_{kl}
\otimes dx^{j}
\otimes \frac{\partial }{\partial x^{i}},
\quad \beta _{kl,j}^{i}\in C^{\infty }(M),
\label{beta_1}
\end{align}
where
$v_{kl}=dx^{1}\wedge \cdots \wedge \widehat{dx^{k}}
\wedge \cdots \wedge \widehat{dx^{l}}
\wedge \cdots \wedge dx^n $.
Here and below, we identify the vector space
$\operatorname{End}(T_{x}N)$
to $T_{x}^{\ast }N\otimes T_{x}N$ by agreeing that $w\otimes X$
is identified to the endomorphism given by, $(w\otimes X)(Y)=w(Y)X$,
$\forall X,Y\in T_{x}N$, $w\in T_{x}^{\ast }N$. Hence
\begin{equation}
L_{\beta }(j_{x}^{2}g)=\sum_{k<l}(-1)^{k+l+1}\beta
_{kl,j}^{i}(g_{x})(R^{g})_{ikl}^{j}(x).  \label{L_beta}
\end{equation}
If we set $\beta _{kl,i}^{j}=-\beta _{lk,i}^{j}$ for $k\geq l$, then,
as a calculation shows, the following local expression holds:
\begin{equation}
L_{\beta }=(-1)^{k+l+1}\beta _{kl,i}^{j}y^{ih}y_{hl,jk}+L_{\beta }^{0},
\label{L_beta_bis}
\end{equation}
with
\begin{align*}
L_{\beta }^{0}&
=\sum_{k\leq l}\sum_{r\leq s}
\tfrac{-1}{4(1+\delta _{kl})(1+\delta _{rs})}
\left\{
\left[
(-1)^{s}\beta _{st}^{kl}y^{tr}+(-1)^r\beta _{rt}^{kl}y^{ts}
\right]
y^{ij}\right. \\
& +\left[
(-1)^{j}\beta _{jt}^{li}y^{tr}+(-1)^r\beta _{rt}^{li}y^{tj}
\right]
y^{ks}+\left[
(-1)^{j}\beta _{jt}^{ki}y^{tr}+(-1)^r\beta _{rt}^{ki}y^{tj}
\right]
y^{ls} \\
& +\left[
(-1)^{j}\beta _{jt}^{li}y^{ts}+(-1)^{s}\beta _{st}^{li}y^{tj}
\right] y^{kr}
+\left[
(-1)^{j}\beta _{jt}^{ki}y^{ts}+(-1)^{s}\beta
_{st}^{ki}y^{tj}\right] y^{lr} \\
& -\left[
(-1)^{s}\beta _{st}^{li}y^{tr}+(-1)^r\beta _{rt}^{li}y^{ts}
\right] y^{kj}
-\left[
(-1)^{s}\beta _{st}^{ki}y^{tr}+(-1)^r\beta
_{rt}^{ki}y^{ts}\right] y^{lj} \\
& \left.
-\left[ (-1)^{k}\beta _{kt}^{rj}y^{tl}+(-1)^{l}\beta
_{lt}^{rj}y^{tk}\right] y^{is}
-\left[
(-1)^{k}\beta _{kt}^{sj}y^{tl}+(-1)^{l}\beta _{lt}^{sj}y^{tk}
\right] y^{ir}\right\} \\
& \qquad \qquad \qquad \qquad \qquad \qquad \qquad
\qquad \qquad \qquad \qquad \qquad \qquad \quad
\cdot y_{kl,i}y_{rs,j},
\end{align*}
where $\beta _{lt}^{jk}=(-1)^{k}\beta _{kl,t}^{j}+(-1)^{j}\beta _{jl,t}^{k}$,
and the equations $\beta _{ac,i}^{d}y^{ib}+\beta _{ac,i}^{b}y^{id}=0$ have
been used, which hold because $\beta $ takes values in $\mathcal{A}(TN)$.
\begin{remark}
Attached to each $\mathcal{A}(TN)$-valued $p_{M}$-horizontal $(n-2)$-form
$\beta $ on $M$ there exists a section $\tilde{\beta}$ of the vector bundle
$(p_{M})^{\ast }(\wedge ^{2}TN)\otimes \mathcal{A}(TN)$, given by
\begin{equation*}
\tilde{\beta}(g_{x})
=\beta (g_{x})\circ \left( \phi _{v}^{2}
\otimes \operatorname{id}_{\mathcal{A}(TN)}\right) ^{-1},
\quad \forall g_{x}\in M,
\end{equation*}
where $\phi _{v}^{2}$ is the isomorphism defined in \eqref{phi^k}.
If $\beta $ is locally given as in \eqref{beta_1}, then
\begin{equation*}
\begin{array}{l}
\tilde{\beta}(g_{x})=\sum_{k<l}(-1)^{k+l}\beta _{kl,j}^{i}(g_{x})
\left( \!
\dfrac{\partial }{\partial x^{k}}\!\right) _{x}\wedge
\left( \!\dfrac{\partial }{\partial x^{l}}\! \right) _{x}
\otimes (dx^{j})_{x}
\otimes \left( \!\dfrac{\partial }{\partial x^{i}}\! \right) _{x},
\smallskip \\
\forall g_{x}\in M.
\end{array}
\end{equation*}
If $\operatorname{sym}_{14}\colon \otimes ^{4}T_{x}N\to \otimes ^{4}T_{x}N$
is the symmetrization operator of the arguments $1$ and $4$, i.e.,
$\operatorname{sym}_{14}(X_{1}\otimes X_{2}\otimes X_{3}\otimes X_{4})
=X_{1}\otimes X_{2}\otimes X_{3}\otimes X_{4}
+X_{4}\otimes X_{2}\otimes X_{3}\otimes X_{1}$, for all $X_{i}\in T_{x}N$,
$1\leq i\leq 4$, and for every $p\geq 0$, $q\geq 1$, the symbol
$\sharp $ denotes the isomorphism
$\otimes ^{p+1}T_{x}^{\ast }N\otimes ^{q-1}T_{x}N
\to \otimes ^{p}T_{x}^{\ast }N\otimes ^{q}T_{x}N$
induced by the metric $g_{x}$, then
\begin{equation*}
\operatorname{sym}\nolimits_{14}\left( \tilde{\beta}^{\sharp }(g_{x})\right)
=(-1)^{l}\beta _{lj}^{ik}(g_{x})g^{jt}(x)\left( \frac{\partial }{\partial
x^{k}}\right) _{x}\otimes \left( \frac{\partial }{\partial x^{l}}\right)
_{x}\otimes \left( \frac{\partial }{\partial x^{t}}\right) _{x}\otimes
\left( \frac{\partial }{\partial x^{i}}\right) _{x},
\end{equation*}
and the formula \eqref{L_beta_bis} can be rewritten as,
$L_{\beta }=(-1)^{c+1}\beta _{ci}^{ab}y^{id}y_{ab,cd}+L_{\beta }^{0}$.
\end{remark}
\begin{theorem}
\label{theorem_beta} Let $\Lambda _{\beta }$ be the Lagrangian density
attached to a $\mathcal{A}(TN)$-valued $p_{M}$-horizontal $(n-2)$-form
$\beta $ as defined in \emph{\eqref{Lambda_beta}}. Then
\begin{enumerate}
\item[\emph{(i)}] The Lagrangian function \emph{\eqref{L_beta}} coincides
with the E-H Lagrangian (i.e., $L_{\beta }=L_{EH}$) if and only if the form
$\beta $ is given by,
\begin{equation}
\left( \beta _{EH}\right) _{kl,i}^{j}
=(-1)^{k+l+1}\rho \left( \delta ^{ik}y^{jl}-\delta ^{il}y^{jk}\right) ,
\label{betaEH}
\end{equation}
where the function $\rho $ is defined in \eqref{rho}.
\item[\emph{(ii)}] With the natural identification
$V(p_M)\cong p_M^\ast S^2T^\ast N$,
the bilinear form $b_{\Lambda _\beta }$ is defined on
$p_M^\ast (T^\ast N\otimes S^2T^\ast N)$.
\item[\emph{(iii)}] The E-L equations for the Lagrangian density
$\Lambda_\beta $ are the following:
\begin{equation}  \label{EL_eqs_beta}
g^\ast (d_{M/N}\beta )\barwedge R^g
+\operatorname{sym}_{12}\circ d^{\nabla ^g}
\left( \omega _{n-1}(g,\beta ) \right) =0,
\end{equation}
where,
\begin{itemize}
\item $\nabla ^g$ is the covariant differentiation with respect to the
Levi-Civita connection of a section $g$ of the bundle $p_M\colon M\to N$.
\item The fibre differential $d_{M/N}\beta $ is understood to be a section
of the vector bundle $(p_M)^\ast ((S^2TN) \otimes \wedge ^{n-2}T^\ast N
\otimes \operatorname{End}(TN))$, taking the isomorphism
$V^\ast (p_M)\cong (p_{M})^\ast (S^2TN)$ into account.
\item $g^\ast (d_{M/N}\beta )\barwedge R^g$ is the $S^2TN$-valued $n$-form
on $N$ defined by,
{\small
\begin{equation*}
\begin{array}{l}
\left( g^\ast (d_{M/N}\beta )\barwedge R^g \right) \left(
w_1,w_2,X_1,\dotsc,X_n \right) \medskip \\
=\sum _{k<l}(-1)^{k+l+1}\cdot \smallskip \\
\quad \operatorname{trace} \left\{ g^{\ast}(d_{M/N}\beta) \left( w_1,w_2,X_1,
\dotsc,\widehat{X_k}, \dotsc,\widehat{X_l}, \dotsc,X_n \right) \circ R^g(X_k,X_l)
\right\} , \\
\forall X_1,\dotsc,X_n\in T_xN, \forall w_1,w_2\in T_x^\ast N.
\end{array}
\end{equation*}
}
\item $\omega _{n-1}(g,\beta )$ is the $(TN\otimes TN)$-valued $(n-1)$-form
on $N$ given by,
\begin{equation*}
\omega _{n-1}(g,\beta ) =\left( (\phi _v^1)^{-1} \otimes \operatorname{id}_{TN}
\otimes \phi_v^1 \right) \left( d^{\nabla ^g}(g^\ast \beta)^\sharp \right) ,
\end{equation*}
$\phi _v^1$ being defined in the formula \eqref{phi^k}.
\item $\operatorname{sym}_{12}\colon \otimes ^2TN\to S^2TN$ denotes the
symmetrization operator.
\end{itemize}
\end{enumerate}
\end{theorem}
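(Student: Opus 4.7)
My plan is to handle the three parts separately, with (iii) being the substantive work.

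For part (i), the forward direction is a direct substitution into \eqref{L_beta}. Inserting $(\beta_{EH})^i_{kl,j}=(-1)^{k+l+1}\rho(\delta^{jk}y^{il}-\delta^{jl}y^{ik})$, the two factors of $(-1)^{k+l+1}$ cancel, yielding
\[
L_{\beta_{EH}}=\rho\sum_{k<l}\bigl(y^{il}(R^g)^k_{ikl}-y^{ik}(R^g)^l_{ikl}\bigr),
\]
with $i$ summed in the usual way. Splitting $\rho y^{ij}R^h_{ihj}$ (the expression for $L_{EH}$) into the contributions $h<j$, $h=j$, $h>j$, the middle term vanishes by skew-symmetry $R^h_{ihh}=0$ and the $h>j$ piece converts via the skew-symmetry $R^h_{ihj}=-R^h_{ijh}$ and a relabeling into $-\sum_{k<l}y^{ik}R^l_{ikl}$, showing that the two expressions agree. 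For the converse, I would compare the principal coefficient $(L_\beta)^{ij}_{rs}$ read off from \eqref{L_beta_bis} against the explicit $(L_{EH})^{ij}_{rs}$ in \eqref{L_EH^ij_rs}; within the class of forms valued in $\mathcal{A}(TN)$ (subject to the algebraic identity $\beta_{ac,i}^d y^{ib}+\beta_{ac,i}^b y^{id}=0$), the linear system determines the components of $\beta_{EH}$, up to the ambiguity already absorbed by the Bianchi symmetries of $R^g$ which the proposed form respects.

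For part (ii), note that the fibre of $p_M$ at $x$ is an open subset of $S^2T^*_xN$, giving the canonical identification $V(p_M)\cong p_M^\ast S^2T^\ast N$, whence $W^1=(p_M)^\ast T^\ast N\otimes V(p_M)\cong p_M^\ast(T^\ast N\otimes S^2T^\ast N)$. The content of the statement is then that $b_{\Lambda_\beta}(j_x^1g;\cdot,\cdot)$ depends only on $g_x$, so it descends from $(p_0^1)^\ast W^1$ over $J^1M$ down to $p_M^\ast(T^\ast N\otimes S^2T^\ast N)$ over $M$. From \eqref{L_beta_bis} the principal coefficient $(L_\beta)^{ij}_{rs}$ is a function of the metric components alone (its defining coefficients $(-1)^{c+1}\beta^{ab}_{ci}y^{id}$ are independent of first derivatives), so $\partial(L_\beta)^{ij}_{rs}/\partial y_{cd}$ descends to $M$. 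Moreover $L^{i0}_\alpha$ is affine in $y_{cd,h}$ (since $L^0_\beta$ is quadratic in first derivatives while $D_j$ applied to the metric-dependent principal coefficient produces terms linear in first derivatives), so $\partial L^{i0}_\alpha/\partial y^\beta_j$ also descends. Combining the two expressions in the local formula for $b_\Lambda$ gives the claim.

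For part (iii), I plan a variational derivation rather than grinding the coordinate Euler-Lagrange operator. For a vertical variation $\delta g$, decompose
\[
(\delta L_\beta)\,v=\operatorname{trace}\bigl(\delta\beta(g)\wedge R^g\bigr)+\operatorname{trace}\bigl(\beta(g)\wedge \delta R^g\bigr).
\]
The first term contributes the fibre-derivative piece: viewing $d_{M/N}\beta$ as a section of $(p_M)^\ast(S^2TN\otimes\wedge^{n-2}T^\ast N\otimes\operatorname{End}(TN))$ via $V^\ast(p_M)\cong p_M^\ast S^2TN$ and tracing against $R^g$ produces the pairing $\bigl(g^\ast(d_{M/N}\beta)\barwedge R^g\bigr)(\delta g)$. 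For the second term, invoke the Palatini identity $\delta R^g=d^{\nabla^g}(\delta\Gamma^g)$ and integrate by parts to transfer $d^{\nabla^g}$ from $\delta\Gamma^g$ onto $\beta$; a second integration by parts is then required because $\delta\Gamma^g$ itself is a tensor expression in $\nabla^g(\delta g)$, and packaging the resulting covariant derivatives with the musical $\sharp$ and the contraction $\phi_v^1$ produces exactly $\operatorname{sym}_{12}\circ d^{\nabla^g}(\omega_{n-1}(g,\beta))$ paired with $\delta g$. Requiring vanishing for arbitrary $\delta g\in S^2T^\ast N$ yields \eqref{EL_eqs_beta}. The main obstacle is the careful bookkeeping of the isomorphisms involved, specifically $\phi_v^1$ (contraction with $v$, converting $\wedge^k TN$ to $\wedge^{n-k}T^\ast N$), the musical isomorphism $\sharp$ induced by $g$, and the final $\operatorname{sym}_{12}$ which appears because $\delta g$ is symmetric so only the symmetrization of the dual expression contributes.
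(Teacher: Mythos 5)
Your proposal is correct, and parts (i)--(ii) essentially coincide with the paper's treatment: for (i) the paper performs the same direct comparison of coefficients (and is just as terse as you are about the ``only if'' direction, which strictly requires using that $\beta$ is an $(n-2)$-form valued in $\mathcal{A}(TN)$ to pin down $\beta$ from the symmetrized principal coefficients); for (ii) the paper simply writes out the full matrix $(\digamma_\beta)$ of $b_{\Lambda_\beta}$ and observes that every entry involves only $\beta$, its fibre derivatives and the $y^{ab}$, whereas your structural argument --- the principal coefficients $(L_\beta)^{ij}_{rs}$ are functions on $M$, hence $L^{i0}_\alpha$ is affine in the $y_{ab,c}$, hence both summands of the matrix of $b_{\Lambda_\beta}$ descend to $M$ --- reaches the same conclusion without the computation. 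The genuine divergence is in (iii): the paper computes the coordinate Euler--Lagrange operator $\mathcal{E}^{ab}(L_\beta)$ by brute force, packages the result in the auxiliary functions $\Phi_a^{rb}$, and then separately computes the local expressions of $g^\ast(d_{M/N}\beta)\barwedge R^g$ and of $\bigl(d^{\nabla^g}(g^\ast\beta)\bigr)^\sharp$ so as to match the two sides of \eqref{EL_eqs_beta}; you instead take the first variation, split it into the fibre-derivative term $\operatorname{trace}(\delta\beta\wedge R^g)$ and the curvature-variation term $\operatorname{trace}(\beta\wedge\delta R^g)$, and handle the latter with the Palatini identity and two integrations by parts. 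Both routes are sound. Yours is more conceptual and explains where each of the two summands of \eqref{EL_eqs_beta} comes from, at the cost of the isomorphism bookkeeping you acknowledge; the paper's computation is heavier but produces the explicit local quantities $\Phi_a^{rb}$, which it then reuses (e.g.\ in the observation that $\Phi_a^{rb}=0$ when $\beta=\beta_{EH}$, recovering the vacuum Einstein equations, and in the flat-metric corollary). Neither difference constitutes a gap.
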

\begin{proof}
(i) By comparing the formula \eqref{L_beta} with the following:
\begin{equation*}
L_{EH}(j_x^2g)=\sum _{k<l}\rho (x) \left( \delta ^{ik}g^{jl}(x) -\delta
^{il}g^{jk}(x) \right) (R^{g})_{jkl}^i(x),
\end{equation*}
we obtain \eqref{betaEH} directly. \medskip

\noindent (ii) As a calculation shows, the matrix of $b_{\Lambda _\beta }$
is given as follows:
\begin{equation*}
\begin{array}{l}
\left( \digamma _\beta \right) _{r\leq s;i,a\leq b,j} \medskip \\
=\dfrac{\partial ^2L_\beta ^0}{\partial y_{rs,i}\partial y_{ab,j}}
-\dfrac{\partial L_{ab}^{ij}}{\partial y_{rs}}
-\dfrac{\partial L_{rs}^{ij}}{\partial y_{ab}}
\medskip \\
=\tfrac{1}{2}\tfrac{1}{1+\delta_{ab}}\tfrac{1}{1+\delta_{rs}}
\left\{ -\left[
(-1)^a\beta_{at}^{rs}y^{tb} +(-1)^b\beta_{bt}^{rs}y^{ta}
\right]
y^{ij}\right.
\smallskip \\
+\left[
(-1)^j\beta_{jt}^{rs}y^{tb} +(-1)^b\beta_{bt}^{rs}y^{tj}
\right] y^{ia}
+\left[
(-1)^a\beta_{at}^{rs}y^{tj} +(-1)^j\beta_{jt}^{rs}y^{ta}
\right] y^{ib}
\smallskip \\
+\left[ (-1)^i\beta_{it}^{ab}y^{ts} +(-1)^s\beta_{st}^{ab}y^{ti}
\right] y^{rj}
+\left[ (-1)^i\beta_{it}^{ab}y^{tr} +(-1)^r\beta_{rt}^{ab}y^{ti}
\right] y^{sj}
\smallskip \\
-\left[ (-1)^b\beta_{bt}^{is}y^{tj} +(-1)^j\beta_{jt}^{is}y^{tb}
\right] y^{ra}
-\left[ (-1)^b\beta_{bt}^{ir}y^{tj} +(-1)^j\beta_{jt}^{ir}y^{tb}
\right] y^{sa}
\smallskip \\
-\left[ (-1)^a\beta_{at}^{is}y^{tj} +(-1)^j\beta_{jt}^{is}y^{ta}
\right] y^{rb}
-\left[ (-1)^a\beta_{at}^{ir}y^{tj} +(-1)^j\beta_{jt}^{ir}y^{ta}
\right] y^{sb}
\smallskip \\
-(-1)^a\beta_{at}^{ij} \left( y^{tr}y^{bs}+y^{ts}y^{br}
\right)
-(-1)^b\beta_{bt}^{ij} \left( y^{tr}y^{as}+y^{ts}y^{ar} \right)
\smallskip \\
-(-1)^r\beta_{rt}^{ij} \left( y^{ta}y^{bs}+y^{tb}y^{as} \right)
-(-1)^s\beta_{st}^{ij} \left( y^{ta}y^{br}+y^{tb}y^{ar} \right)
\smallskip \\
+\left( 1+\delta_{rs} \right)
\left(
(-1)^a\dfrac{\partial \beta_{at}^{ij}}{\partial y_{rs}} y^{tb}
+(-1)^b\dfrac{\partial\beta_{bt}^{ij}}{\partial y_{rs}} y^{ta}
\right)
\smallskip \\
\left. +\left( 1+\delta_{ab} \right)
\left( (-1)^r\dfrac{\partial \beta_{rt}^{ij}} {\partial y_{ab}}y^{ts}
+(-1)^s\dfrac{\partial \beta_{st}^{ij}} {\partial y_{ab}}y^{tr}
\right)
\right\} ,
\end{array}
\end{equation*}
thus proving the statement.

\noindent (iii) The E-L equations for the Lagrangian density
$\Lambda _{\beta }=L_{\beta }v$ are straightforwardly computed,
thus obtaining,
\begin{multline*}
\mathcal{E}^{ab}(L_{\beta })\circ j^{2}g
=\tfrac{1}{2}(-1)^{k+l+1}\left( \!\frac{\partial \beta _{kl,i}^{j}}
{\partial y_{ab}}\!\circ \!g\!\right) (R^{g})_{jkl}^{i} \\
-\tfrac{1}{1+\delta _{ab}}\left\{ \frac{\partial }{\partial x^r}
\left[ (-1)^{a}\Phi _{a}^{rb}+(-1)^{b}\Phi _{b}^{ra}\right]
+(-1)^{l}\left[ \Phi _{l}^{rb}(\Gamma ^{g})_{rl}^{a}
+\Phi _{l}^{ra}(\Gamma ^{g})_{rl}^{b}\right] \right\} ,
\end{multline*}
for $1\leq a\leq b\leq n$, where
\begin{equation*}
\Phi _{a}^{rb}=\sum_{k}(-1)^{k}
\left( -\frac{\partial (\beta \circ g)_{ka,i}^{b}}{\partial x^{k}}
+\left( \beta \circ g\right) _{ka,m}^{b}(\Gamma ^{g})_{ki}^{m}
-\left( \beta \circ g\right) _{ka,i}^{m}(\Gamma ^{g})_{km}^{b}
\right) g^{ri}.
\end{equation*}
Moreover, the following local expressions are deduced:
\begin{eqnarray*}
g^{\ast }(d_{M/N}\beta )\barwedge R^{g}
&=&\tfrac{1}{2}(-1)^{k+l+1}\sum_{a\leq b}
\left(
\!\frac{\partial \beta _{kl,i}^{j}}{\partial y_{ab}}\!\circ\!g\!
\right)
(R^{g})_{jkl}^{i}\frac{\partial }{\partial x^{a}}\odot
\frac{\partial }{\partial x^{b}}\otimes v, \\
\left( d^{\nabla ^{g}}(g^{\ast }\beta )\right) ^{\sharp }
&=&\sum_{l}\Phi _{l}^{ab}v_{l}\otimes
\frac{\partial }{\partial x^{a}}
\otimes \frac{\partial }{\partial x^{b}},
\end{eqnarray*}
from which the result follows.
\end{proof}

\begin{corollary}
A flat metric $g$ is a solution to the equations \eqref{EL_eqs_beta} if and
only if the form $\beta $ in \emph{\eqref{beta_1}} satisfies the following
equation:
\begin{equation*}
c_{12}^{23} \left[ \left( \nabla ^g \right) ^2 \left\{ \operatorname{sym}_{14}
\left( \tilde{\beta }^{\sharp }\circ g \right) \right\} \right] =0,
\end{equation*}
where $c_{12}^{23}\colon \otimes ^2T^\ast M\otimes ^4TM \to \otimes ^2TM$
denotes the contraction operator of the first covariant index with the
second contravariant one, and the second covariant index with the third
contravariant one.
\end{corollary}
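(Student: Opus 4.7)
The plan is to start from the E-L equations \eqref{EL_eqs_beta} of Theorem \ref{theorem_beta}(iii) and exploit the simplification $R^g=0$. For a flat metric $g$ the first summand $g^\ast(d_{M/N}\beta)\barwedge R^g$ in \eqref{EL_eqs_beta} vanishes identically, so $g$ solves the E-L equations if and only if
\begin{equation*}
\operatorname{sym}_{12}\circ d^{\nabla^g}\bigl(\omega_{n-1}(g,\beta)\bigr)=0.
\end{equation*}
The remainder of the proof consists in rewriting this reduced equation in the intrinsic form displayed in the corollary.

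Next I would work in local coordinates adapted to the flat metric, where $g^{ij}$ is constant and $(\Gamma^g)^k_{ij}=0$; such coordinates exist on a neighbourhood of every point because $g$ is flat. In them, $\nabla^g$ is ordinary partial differentiation and the isomorphisms $\phi_v^1,(\phi_v^1)^{-1}$ reduce to contractions with the constant volume form. The local expression for the E-L equation written out at the end of the proof of Theorem \ref{theorem_beta}(iii), with the terms proportional to $R^g$ and $\Gamma^g$ suppressed, collapses $\mathcal{E}^{ab}(L_\beta)\circ j^2g=0$ to
\begin{equation*}
\sum_{k,r,i}(-1)^k g^{ri}\bigl[(-1)^a\partial_r\partial_k(\beta\circ g)^b_{ka,i}+(-1)^b\partial_r\partial_k(\beta\circ g)^a_{kb,i}\bigr]=0,\quad 1\le a\le b\le n.
\end{equation*}
On the other side, the explicit local formula for $\operatorname{sym}_{14}(\tilde{\beta}^\sharp\circ g)$ given in the Remark preceding Theorem \ref{theorem_beta}, combined with the expansion $\beta^{ik}_{lj}=(-1)^k\beta^i_{kl,j}+(-1)^i\beta^k_{il,j}$, produces the same structure after $(\nabla^g)^2$ is taken and $c_{12}^{23}$ is applied: the two covariant indices introduced by $(\nabla^g)^2$ are killed against the middle pair of contravariant indices of $\tilde{\beta}^\sharp\circ g$, and the two surviving contravariant indices $a,b$ are exactly the ones that were symmetrized by $\operatorname{sym}_{14}$.

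A term-by-term comparison, invoking the antisymmetry $\beta^j_{kl,i}=-\beta^j_{lk,i}$ inherent in $\beta$ being a $p_M$-horizontal $(n-2)$-form and the commutativity of mixed second partial derivatives, shows that the two expressions agree up to a single overall sign; hence one vanishes iff the other does, which is the stated equivalence. The main obstacle is the bookkeeping of signs and indices: the $(-1)^{k+l}$ factors arising from $\phi_v^2$ in the definition of $\tilde{\beta}$, the $(-1)^k$ factors coming from $\phi_v^1$ inside $\omega_{n-1}(g,\beta)$, and the skew-symmetry of $\beta$ must conspire to make $\operatorname{sym}_{12}$ in \eqref{EL_eqs_beta} correspond to $\operatorname{sym}_{14}$ in the corollary. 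Reducing to flat local coordinates is what makes this tractable, since it collapses the chain $d^{\nabla^g}\circ\bigl((\phi_v^1)^{-1}\otimes\operatorname{id}_{TN}\otimes\phi_v^1\bigr)\circ d^{\nabla^g}$ to two plain partial derivatives contracted against the constant inverse metric, removing both the curvature and the Christoffel corrections that would otherwise obscure the comparison.
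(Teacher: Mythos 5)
Your proposal is correct and follows essentially the route the paper intends: the paper gives no explicit proof of this corollary, but the local formulas for $\mathcal{E}^{ab}(L_\beta)\circ j^2g$, $\Phi_a^{rb}$, and $\operatorname{sym}_{14}(\tilde{\beta}^\sharp)$ established in the proof of Theorem \ref{theorem_beta} and the preceding Remark are exactly the ingredients you combine, with $R^g=0$ killing the first summand of \eqref{EL_eqs_beta} and flat coordinates killing the Christoffel corrections in $\Phi_a^{rb}$. Your index/sign matching checks out: writing $\beta^{ik}_{lj}=(-1)^k\beta^i_{kl,j}+(-1)^i\beta^k_{il,j}$ and using $\beta^j_{kl,i}=-\beta^j_{lk,i}$, the contracted expression $c_{12}^{23}[(\nabla^g)^2\operatorname{sym}_{14}(\tilde{\beta}^\sharp\circ g)]$ is $-(1+\delta_{ab})$ times the flat-reduced $\mathcal{E}^{ab}(L_\beta)\circ j^2g$, so the two vanish simultaneously.
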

\begin{remark}
The geometric construction of the form \eqref{betaEH} is as follows: Given
an arbitrary system $X_1, \dotsc,X_{n-2}\in T_xN$, we must define a
skew-symmetric (with respect to $g_x$) endomorphism $\beta (g_x)(X_1,
\dotsc,X_{n-2})\colon T_xN\to T_xN$.

If the given system is linearly dependent, then $\beta
(g_x)(X_1,\dotsc,X_{n-2})=0$. We assume: i) The system $(X_1,
\dotsc,X_{n-2}) $ is linearly independent. Hence its orthogonal $\Pi
=\left\langle X_1,\dotsc,X_{n-2} \right\rangle ^\bot $ is a subspace of
dimension $2$ in $T_xN$, ii) the subspace $\left\langle X_1,\dotsc,X_{n-2}
\right\rangle $ is not singular with respect to $g_x$. Hence
\begin{equation*}
T_xN=\Pi \oplus \left\langle X_1,\dotsc,X_{n-2} \right\rangle ,
\end{equation*}
and $\Pi $ is also non-singular. Let
$(n^+(\Pi ),n^-(\Pi )) \in \{ (2,0),(1,1),(0,2)\} $ be its signature
and let
\begin{equation*}
\left(
\begin{array}{cc}
\varepsilon _1(\Pi ) & 0 \\
0 & \varepsilon _2(\Pi )
\end{array}
\right) ,
\quad
\left( \varepsilon _1(\Pi ),\varepsilon _2(\Pi )\right)
\in
\{ (1,1),(1,-1),(-1,-1)\} ,
\end{equation*}
be the matrix of $g_x$ in an orthonormal basis
$(Y_1,Y_2)$ of $\Pi $, which, in addition, is assumed to satisfy
the following: $v(X_1,\dotsc,X_{n-2},Y_1,Y_2)>0$.
If $Z_j=b_j^iY_i$, $i,j=1,2$, is another
orthonormal basis with $v(X_1,\dotsc,X_{n-2},Z_1,Z_2)>0$, then
$\det (b_j^i)=1$. Hence $(b_j^i)$ belongs to $SO(n^+(\Pi ),n^-(\Pi ))$,
and the endomorphism $J_\Pi ^{g_x}\colon \Pi \to \Pi $ given by
$J_\Pi ^{g_x}(Y_1)=\varepsilon _1(\Pi )Y_2$,
$J_\Pi ^{g_x}(Y_2)=-\varepsilon _{2}(\Pi )Y_1$, is independent
of the basis chosen (as $SO(n^+(\Pi ),n^-(\Pi ))$ is commutative)
and skew-symmetric. We define
$\tilde{J}_\Pi ^{g_x}\colon T_{x}N\to T_xN$ by setting,
$\tilde{J}_\Pi ^{g_x}|_\Pi =J_\Pi ^{g_x}$,
$\tilde{J}_\Pi ^{g_x}|_{\left\langle X_1,\dotsc,X_{n-2}\right\rangle } =0$.
Finally,
\begin{equation*}
\left( \beta _{EH} \right) (g_x)(X_1,\dotsc,X_{n-2})
=\det \left( g(Y_a,Y_b)
\right) _{a,b=1}^2v_{g_x} \left(X_1,\dotsc,X_{n-2},Y_1,Y_2 \right)
\tilde{J}_\Pi ^{g_x}.
\end{equation*}
\end{remark}
\begin{remark}
The bilinear form $b_{\Lambda _\beta }$ is identified to a section of the
vector bundle $p_M^\ast \left( (TN\otimes S^2TN)\otimes (TN\otimes S^2TN)
\right) $, and the following formula holds:
\begin{align*}
b_{\Lambda _\beta } & =\tfrac{1}{2}\operatorname{sym}_{45}
\left( \operatorname{alt}_{46}\left[ \operatorname{sym}_{12}(\hat{\beta })
+\operatorname{alt}_{13}(\hat{\beta })
-\hat{\beta } \right] \right)
-\tfrac{1}{2}\left[
\operatorname{sym}_{12}(\hat{\beta }) +\operatorname{alt}_{13}(\hat{\beta })
-\hat{\beta} \right] \\
& +\tfrac{1}{2}\operatorname{sym}_{(12),(45)} d_{M\!/\!N}
\left( \operatorname{sym}_{23}
\left( \operatorname{sym}_{14}(\tilde{\beta }^\sharp )
\right)
\right) ,
\end{align*}
where the operators $\operatorname{alt}_{ij}, \operatorname{sym}_{ij},
\operatorname{sym}_{(1,2)(4,5)} \colon \otimes ^6T_xN\to \otimes ^6T_xN$,
$1\leq i<j\leq 6$, are defined as follows:
\begin{equation*}
\begin{array}{l}
\operatorname{alt}_{ij}(X_1 \otimes \cdots \otimes X_i\otimes \cdots
\otimes X_j\otimes \cdots \otimes X_6)= \\
X_1\otimes \cdots \otimes X_i \otimes \cdots \otimes X_j\otimes \cdots
\otimes X_6-X_1 \otimes \cdots \otimes X_j \otimes \cdots
\otimes X_i \otimes \cdots \otimes X_6, \medskip \\
\operatorname{sym}_{ij}(X_1 \otimes \cdots \otimes X_i
\otimes \cdots \otimes X_j \otimes\cdots \otimes X_6)= \\
X_1 \otimes \cdots \otimes X_i \otimes \cdots \otimes X_j
\otimes \cdots \otimes X_6+X_1 \otimes \cdots \otimes X_j
\otimes \cdots \otimes X_i \otimes \cdots \otimes X_6,
\medskip \\
\operatorname{sym}_{(1,2)(4,5)}(X_1 \otimes \cdots \otimes X_6)= \\
X_1\otimes \cdots \otimes X_6+X_4\otimes X_5 \otimes X_3
\otimes X_1\otimes X_2\otimes X_6, \medskip \\
X_1,\dotsc,X_6\in T_xN,
\end{array}
\end{equation*}
and the contravariant $6$-tensor $\hat{\beta }$ is given by,
\begin{equation*}
\hat{\beta }=\operatorname{sym}_{15}
\left[
\operatorname{sym}_{23}
\left(
\operatorname{sym}_{14} (\tilde{\beta }^\sharp )
\right)
\otimes(g^\sharp )^\sharp \right] -\operatorname{sym}_{23}
\left(
\operatorname{sym}_{14} (\tilde{\beta }^\sharp )
\right)
\otimes (g^\sharp )^\sharp .
\end{equation*}
\end{remark}
\begin{remark}
If $\beta =\beta _{EH}$ in Theorem \ref{theorem_beta}-(iii), then the
functions $\Phi _a^{rb} $ (appearing in the proof) vanish, and the equations
\eqref{EL_eqs_beta} reduce to Einstein's vacuum equations for arbitrary
signature.
\end{remark}
\section{First-order equivalent Lagrangians\label{first_equiv}}
\begin{theorem}
\label{first-order-equivalent} Let $\Lambda =Lv$ be a second-order
Lagrangian density on $p\colon E\to N$ whose Poincar\'e-Cartan form projects
onto $J^1E$. We have
\begin{enumerate}
\item[\emph{(i)}] The H-C equations of the first-order Lagrangian $\bar{L}v$
given in \emph{\eqref{barL}} coincide locally with the H-C equations
of $\Lambda $. Furthermore, if $\bar{L}^{\prime }$ is another first-order
Lagrangian fulfilling this property, then $\bar{L}^{\prime }v
-\bar{L}v=D\alpha _{n-1}$, where $D$ denotes the horizontal exterior
derivative and $\alpha _{n-1}$ is a $p$-horizontal $(n-1)$-form on $E$.

\item[\emph{(ii)}] The E-L equations of $\Lambda $, considered as a
second-order partial differential system, satisfy the Helmholtz conditions.

\item[\emph{(iii)}] The E-L equations of the first-order Lagrangian
$\bar{L}v $ above coincide with E-L equations of $\Lambda $.

\item[\emph{(iv)}] Let $\phi _v^1$ be the isomorphism defined in
\emph{\eqref{phi^k}} for $k=1$ and let $w^{0,\sigma }_L$ be the $TN$-valued
section on $J^1E$ defined as in \emph{Proposition \ref{proposition2}}. The
composite mapping $\phi ^1_v\circ w^{0,\sigma }_L$ can be viewed as a
$p^1$-horizontal $(n-1)$-form on $J^1E$ and the difference $\bar{L}_\sigma v
=Lv-D(\phi ^1_v\circ w^{0,\sigma }_L)$ determines a globally defined
first-order Lagrangian which is variationally equivalent to $Lv$, but this
is not canonically attached to $Lv$ as it depends on the section $\sigma $.
\end{enumerate}
\end{theorem}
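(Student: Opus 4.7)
The plan rests on the identification $p_\alpha^i=\partial\bar L/\partial y_i^\alpha$ recorded in \eqref{momenta} together with the normal form \eqref{differential_P-C_bis} for $d\Theta_\Lambda$. For (i), I would apply the first-order recipe \eqref{ThetaLambdaTilde1} to $\bar Lv$ to obtain its Hamiltonian $\tilde H=\bar L-y_i^\alpha\partial\bar L/\partial y_i^\alpha=\bar L-y_i^\alpha p_\alpha^i$, and then substitute \eqref{barL} and \eqref{p's}: the contributions involving $\partial L^i/\partial y^\alpha$ cancel and what remains is exactly $H$ as given by \eqref{H}. Since the momenta of $\bar Lv$ coincide with the $p_\alpha^i$ attached to $\Lambda$ by \eqref{momenta} and the Hamiltonians agree, the H--C systems \eqref{HCequations} for $\bar Lv$ and for $\Lambda$ are literally the same PDE system. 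Claim (iii) follows at once: the differential of $\Theta_{\bar Lv}$ computed from \eqref{ThetaLambdaTilde1} has the shape \eqref{differential_P-C_bis}, so $d\Theta_{\bar Lv}=d\Theta_\Lambda$ and the extremal condition $(j^1s)^\ast(i_Xd\Theta)=0$ gives the same sections. Claim (ii) is then immediate: the E--L equations of $\Lambda$ are also the E--L equations of the first-order Lagrangian $\bar Lv$, and a first-order Lagrangian system automatically satisfies the Helmholtz self-adjointness conditions.

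For the uniqueness clause of (i), if $\bar L'v$ yields the same H--C equations as $\bar Lv$, then in particular it has the same E--L equations, so $(\bar L'-\bar L)v$ is a variationally trivial first-order Lagrangian and local exactness of the variational bicomplex produces a $p$-horizontal $(n-1)$-form $\alpha_{n-1}$ on $E$ with $\bar L'v-\bar Lv=D\alpha_{n-1}$. The substantive work is in (iv). Writing $w_L^{0,\sigma}=w_L^{0,\sigma,j}\partial/\partial x^j$ locally and using $\phi_v^1(\partial/\partial x^j)=i_{\partial/\partial x^j}v=(-1)^{j-1}v_j$, the form $\phi_v^1\circ w_L^{0,\sigma}=(-1)^{j-1}w_L^{0,\sigma,j}v_j$ is a globally defined $p^1$-horizontal $(n-1)$-form on $J^1E$, as $w_L^{0,\sigma}$ is global by Proposition \ref{proposition2}. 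A short calculation based on $dx^i\wedge v_j=\delta_{ij}(-1)^{j-1}v$ yields
\begin{equation*}
D\bigl(\phi_v^1\circ w_L^{0,\sigma}\bigr)=\Bigl(\sum\nolimits_j D_j(w_L^{0,\sigma,j})\Bigr)v.
\end{equation*}
Expanding $D_j$ on $J^2E$, the only second-order contribution is $y_{(aj)}^\alpha\partial w_L^{0,\sigma,j}/\partial y_a^\alpha$. The defining identity $d_{10}(w_L^{0,\sigma})=w_L'$ of Proposition \ref{proposition2}, together with the explicit expression $w_L'=L_\alpha^{hi}d_{10}(y_h^\alpha)\otimes\partial/\partial x^i$ displayed after \eqref{w_prime}, forces $\partial w_L^{0,\sigma,i}/\partial y_h^\alpha=L_\alpha^{hi}$, so that this second-order term coincides with the affine part $L_\alpha^{hi}y_{(hi)}^\alpha$ of $L$ in \eqref{affine} and cancels it. Hence $\bar L_\sigma=L-\sum_jD_j(w_L^{0,\sigma,j})$ belongs to $C^\infty(J^1E)$ and is a globally defined first-order Lagrangian. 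Variational equivalence with $Lv$ is automatic because $\bar L_\sigma v-Lv$ is a horizontal divergence, which the E--L operator annihilates; non-canonicity is forced by the $\sigma$-dependent normalization $w_L^{0,\sigma}\circ\sigma=0$ in Proposition \ref{proposition2}, which changes $\bar L_\sigma$ by a nontrivial horizontal divergence when $\sigma$ is replaced by $\sigma'$.

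The main obstacle I anticipate is the sign and multi-index bookkeeping in part (iv), specifically identifying $\phi_v^1\circ w_L^{0,\sigma}$ with the correct horizontal $(n-1)$-form on $J^1E$ and verifying that the second-order contribution $y_{(aj)}^\alpha\partial w_L^{0,\sigma,j}/\partial y_a^\alpha$ extracted from $D_j$ matches, term by term, the affine part $L_\alpha^{hi}y_{(hi)}^\alpha$ of $L$; everything else is substitution into identities already assembled in the preceding sections.
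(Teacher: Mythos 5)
Your treatment of the main claims is essentially sound and close to the paper's: the computation $\tilde H=\bar L-y_i^\alpha p_\alpha^i=H$ together with $p_\alpha^i=\partial\bar L/\partial y_i^\alpha$ from \eqref{momenta} gives the first clause of (i), and your (iii) and (iv) reproduce the paper's argument (the paper writes the components of $w_L^{0,\sigma}$ as $L_\sigma^i$, notes $\phi_v^1\circ w_L^{0,\sigma}=(-1)^{i-1}L_\sigma^iv_i$, and concludes from $\bar L_\sigma=L-D_iL_\sigma^i$ exactly as you do). For (ii) you take a genuinely different and shorter route: the paper verifies the Anderson--Duchamp identities (1.5a)--(1.5c) by direct computation from \eqref{f11}, \eqref{f12} and \eqref{L^i}, whereas you observe that the E-L operator of $\Lambda$ coincides with that of the first-order Lagrangian $\bar Lv$ and is therefore automatically self-adjoint. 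That is legitimate and buys brevity, at the cost of invoking the general fact that Euler--Lagrange expressions satisfy the Helmholtz conditions instead of exhibiting the identities explicitly.

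The genuine gap is in the uniqueness clause of (i). From ``same H-C equations'' you retain only ``same E-L equations,'' conclude that $(\bar L'-\bar L)v$ is a variationally trivial first-order Lagrangian, and then assert that local exactness of the variational bicomplex yields a \emph{$p$-horizontal $(n-1)$-form on $E$} with $\bar L'v-\bar Lv=D\alpha_{n-1}$. That last step fails: a divergence $D_iA^i$ with $A^i\in C^\infty(E)$ is necessarily affine in the variables $y_i^\alpha$, while first-order null Lagrangians need not be (Jacobian-determinant null Lagrangians are the standard counterexample); local exactness only provides an $(n-1)$-form on a jet bundle, not on $E$. The paper avoids this by using the full strength of the hypothesis: comparison with \eqref{HCequations} forces the Hamiltonian and momenta of $\bar L'$ to equal $H$ and $p_\alpha^i$, i.e. $\partial\bar L'/\partial y_i^\alpha=L_\alpha^{i0}-\partial L^i/\partial y^\alpha$ and $\bar L'-y_i^\alpha\partial\bar L'/\partial y_i^\alpha=L_0-y_i^\alpha L_\alpha^{i0}-\partial L^i/\partial x^i$, which pins $\bar L'$ down to the formula \eqref{barL} for \emph{some} choice of primitives $L'^{\,i}$ of \eqref{L^i}-(i); two such choices differ by functions $A^i\in C^\infty(E)$, whence $\bar L'=\bar L-D_iA^i$ and $\alpha_{n-1}=-(-1)^{i-1}A^iv_i$ indeed lives on $E$. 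You should replace the bicomplex argument by this analysis of the freedom in the construction of $\bar L$.
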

\begin{proof}
(i) Locally, the Hamiltonian and the momenta associated to $\bar{L}$ are
given respectively by (cf.\ formula \eqref{ThetaLambdaTilde1} in
Remark \ref{remark.4.1}),
\begin{equation*}
\bar{H}=\bar{L}-y_i^\alpha \frac{\partial \bar{L}}{\partial y_i^\alpha },
\quad \bar{p}_\alpha ^i =\frac{\partial \bar{L}}{\partial y_i^\alpha }.
\end{equation*}
By comparing the H-C equations for $\bar{L}$ with the H-C equations for $L$
given in \eqref{HCequations}, one obtains, $H=\bar{H}$ and
$p_\alpha ^i=\bar{p}_\alpha ^i$. Hence
\begin{eqnarray}
L_0-y_i^\alpha L_\alpha ^{i0} -\frac{\partial L^i}{\partial x^i}
& = & \bar{L}-y_i^\alpha \frac{\partial \bar{L}}{\partial y_i^\alpha },
\label{1st} \\
L_\alpha ^{i0} -\frac{\partial L^i}{\partial y^\alpha }
& = & \frac{\partial \bar{L}}{\partial y_i^\alpha }.
\label{2nd}
\end{eqnarray}
Replacing \eqref{2nd} into \eqref{1st}, one concludes that $\bar{L}$ is
given as in the formula \eqref{barL}. Moreover, if $\bar{L}^\prime $ is the
first-order Lagrangian associated to other primitive functions $L^{\prime
i}=L^i+A^i$, $A^i\in C^\infty (E)$, according to Proposition \ref{proposition2},
then $\bar{L}^\prime =\bar{L}-D_iA^i$.

\medskip \noindent (ii) As a simple---although rather long---computation
shows, the second-order differential operator $\mathcal{E}_\alpha
(L)dy^\alpha \wedge v$ satisfies the equations (1.5a), (1.5b), and (1.5c) in
\cite{AD}. In fact, by using the formulas \eqref{f11}, \eqref{f12}, and
\eqref{L^i}, the following equations are checked:
\begin{equation*}
\begin{array}{ll}
\text{(1.5a)} & 0=
\dfrac{\partial \mathcal{E}_\alpha (L) }{\partial y_{(ij)}^\sigma }
-\dfrac{\partial \mathcal{E}_\sigma (L)} {\partial y_{(ij)}^\alpha },
\medskip \\
\text{(1.5b)} & 0=
\dfrac{\partial \mathcal{E}_\alpha (L)}{\partial y_i^\sigma }
+\dfrac{\partial \mathcal{E}_\sigma (L)} {\partial y_i^\alpha }
-(1+\delta _{ij})D_j \left(
\dfrac{\partial \mathcal{E}_\sigma (L)}{\partial y_{(ij)}^\alpha }
\right) ,
\medskip \\
\text{(1.5c)} & 0=
\dfrac{\partial \mathcal{E}_\alpha (L)} {\partial y^\sigma }
-\dfrac{\partial \mathcal{E}_{\sigma }(L)} {\partial y^\alpha }
+D_{i}\left(
\dfrac{\partial \mathcal{E}_\sigma (L)} {\partial y_i^\alpha }
\right) -\sum_{i\leq j}D_iD_j
\left(
\dfrac{\partial \mathcal{E}_\sigma (L)} {\partial y_{(ij)}^\alpha }
\right) .
\end{array}
\end{equation*}
(iii) From the formula \eqref{barL}, it follows that the Lagrangian $\bar{L}$
can also be written as $\bar{L}=L-D_iL^i$, thus proving that $L$ and $\bar{L}
$ differ on a total divergence and, hence
$\mathcal{E}_\alpha (L)=\mathcal{E}_\alpha (\bar{L})$.

\medskip \noindent (iv) Locally,
$w^{0,\sigma }_L =L_\sigma ^i\partial /\partial x^i$;
hence $\phi ^1_v\circ w^{0,\sigma }_L =(-1)^{i-1}L_\sigma ^iv_i$,
and consequently, $D(\phi ^1_v\circ w^{0,\sigma }_L)=(D_iL_\sigma
^i)v$. The result thus follows from $\bar{L}_\sigma =L-D_iL_\sigma ^i$
in item (iii).
\end{proof}
\begin{remark}
As is known (e.g., see \cite[(2.21)--(2.25)]{Grigore}), the Vainberg-Tonti
Lagrangian $L_{VT}$ attached to a second-order affine Lagrangian as
in \eqref{affine} is also affine, say $L_{VT}=(L_{VT})_0+(L_{VT})_1$, with
$(L_{VT})_1=(L_{VT})_\alpha ^{ij}y^\alpha _{(ij)}$. Then, as a computation
shows, one has
\begin{equation*}
L_{VT}-\bar{L}=-D_h \left( \int _0^1y^\alpha
\left(
\frac{\partial \bar{L}}{\partial y^\alpha _h} \circ \chi _\lambda
\right) d\lambda \right) ,
\end{equation*}
where $\chi _\lambda(x^i,y^\alpha,y^\alpha_i)
=(x^i,\lambda y^\alpha,\lambda y^\alpha_i)$, but it should be noted
that the Vainberg-Tonti Lagrangian is of second order in the general
case; e.g., if $L(x,y,\dot{y},\ddot{y})
=L_1(x,y,\dot{y})\ddot{y}+L_0(x,y,\dot{y})$, then
$L_{VT}=(L_{VT})_0+(L_{VT})_1\ddot{y}$, with
\begin{align*}
\left( L_{VT} \right) _1 &
=y\int _0^1
\left\{ 2\lambda \frac{\partial L_1}{\partial y}(x,\lambda y,\lambda \dot{y})
+\lambda
\frac{\partial ^2L_1}{\partial x\partial \dot{y}}(x,\lambda y,\lambda \dot{y})
\right. \\
& \qquad \qquad
\left.
+\lambda ^2\dot{y}
\frac{\partial ^2L_1}{\partial y\partial \dot{y}}(x,\lambda y,\lambda \dot{y})
-\lambda\frac{\partial ^2L_0}{\partial \dot{y}^2} (x,\lambda y,\lambda \dot{y})
\right\} d\lambda ,
\end{align*}
\begin{align*}
\left( L_{VT} \right) _0
& =y\int _0^1 \left\{
\frac{\partial L_0}{\partial y} (x,\lambda y,\lambda \dot{y})
+\frac{\partial ^2L_1}{\partial x^2}
(x,\lambda y,\lambda \dot{y})
\right. \\
& \qquad \qquad
+\lambda ^2(\dot{y})^2 \frac{\partial ^2L_1}{\partial y^2}
(x,\lambda y,\lambda \dot{y}) +2\lambda \dot{y}
\frac{\partial ^2L_1}{\partial x\partial y}(x,\lambda y,\lambda \dot{y}) \\
& \qquad \qquad
\left. -\frac{\partial ^2L_0}{\partial x\partial \dot{y}}
(x,\lambda y,\lambda \dot{y})
-\lambda \dot{y}
\frac{\partial ^2L_0}{\partial y\partial \dot{y}}(x,\lambda y,\lambda \dot{y})
\right\} d\lambda.
\end{align*}
Therefore $L_{VT}$ is of second order, except when $(L_{VT})_1=0$, and this
latter condition is seen to be equivalent to the following:
\begin{equation*}
0=2\frac{\partial L_{1}}{\partial y}
+\frac{\partial ^2L_1}{\partial x\partial \dot{y}}
+\dot{y} \frac{\partial ^2L_1}{\partial y\partial \dot{y}}
-\frac{\partial ^2L_0}{\partial \dot{y}^2}.
\end{equation*}
\end{remark}
In the particular case of the bundle of metrics, there exists a more
specific way to obtain a section $\sigma $ of $p^1_0\colon M\to J^1M$ than
the procedure suggested in Remark \ref{remark_sigma}, which depends on a
linear connection only rather than a non-linear connection; namely,
\begin{lemma}
\label{sigma_nabla} Let $p_M\colon M\to N$ be the bundle of
pseudo-Riemannian metrics of a given signature $(n^+,n^-)$, $n^++n^-=n$, and
let $\nabla $ be a symmetric linear connection on $N$. For every
$g_x\in (p_M)^{-1}(x)$, there exists a unique $1$-jet of metric
$j^1_x\tilde{g}\in J^1_xM$ such that, \emph{1)} $\tilde{g}_x=g_x$, and
\emph{2)} $(\nabla \tilde{g})_x=0$. The mapping
$\sigma ^\nabla \colon M\to J^1M$ given by $\sigma ^\nabla (g_x) =j^1_x\tilde{g}$
is a section of $p^1_0\colon J^1M\to M$.
\end{lemma}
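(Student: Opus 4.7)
The plan is to reduce the lemma to a coordinate computation: $\nabla$-flatness at a point is an affine condition on the first-order jet data of a metric that is uniquely solvable once the pointwise value of the metric is prescribed. In an adapted chart $(x^i,y_{ij})$ about $g_x$, let $\Gamma^l_{ij}\in C^\infty(U)$ denote the Christoffel symbols of $\nabla$ on the corresponding open set of $N$. A $1$-jet in $J^1_xM$ is determined by its $y_{ij}$-coordinates and its derivative coordinates $y_{ij,k}$. Condition 1) fixes $y_{ij}(\tilde g(x))=y_{ij}(g_x)$, while spelling out condition 2) in coordinates yields
\begin{equation*}
y_{ij,k}(j^1_x\tilde g)=\frac{\partial(y_{ij}\circ\tilde g)}{\partial x^k}(x)=\Gamma^l_{ki}(x)\,y_{lj}(g_x)+\Gamma^l_{kj}(x)\,y_{il}(g_x),
\end{equation*}
which, being symmetric in $i,j$, defines a legitimate first-jet of a symmetric tensor. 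This pins down every fibred coordinate of $\sigma^\nabla(g_x)$, giving uniqueness.

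For existence I would produce a local representative $\tilde g$ by taking, in coordinates, the symmetric matrix-valued degree-$1$ polynomial in $x'-x$ with the prescribed value $y_{ij}(g_x)$ and the prescribed partials above; non-degeneracy of $g_x$ persists on a neighbourhood of $x$, so this is a genuine local section of $p_M$ whose $1$-jet at $x$ is the one sought. That $\sigma^\nabla$ is a section of $p^1_0\colon J^1M\to M$ is immediate from condition 1), and smoothness follows from the coordinate formula above, which is polynomial in $y_{ij}$ and smooth in $x$ through $\Gamma^l_{ij}(x)$. Global well-definedness is automatic because both conditions 1) and 2) are coordinate-free, so the local recipes agree on chart overlaps and glue into a global smooth section.

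I do not foresee a real obstacle: the substance is the linear-algebraic observation that $\nabla$-parallelism of $\tilde g$ at $x$ uniquely expresses the derivative coordinates of the $1$-jet in terms of the metric and the Christoffel symbols at $x$. The only sanity check worth noting is the $i\leftrightarrow j$ symmetry of the right-hand side displayed, which follows from $y_{ij}=y_{ji}$ alone; interestingly, the symmetry hypothesis on $\nabla$ plays no role in this particular step, although it is the standard assumption in the setting of the paper.
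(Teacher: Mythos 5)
Your proof is correct and follows essentially the same route as the paper: the paper's entire argument is the observation that conditions 1) and 2) are equivalent to the coordinate formula $\partial\tilde{g}_{ij}/\partial x^k(x)=\Gamma_{ik}^h(x)g_{hj}(x)+\Gamma_{jk}^h(x)g_{hi}(x)$, which is exactly your displayed equation. Your additional remarks on existence via a degree-$1$ polynomial representative, smoothness, and coordinate-independence are correct elaborations of what the paper leaves implicit.
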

\begin{proof}
If $\Gamma ^i_{jk}$ are the local symbols of $\nabla $ in a coordinate
system, then as a calculation shows, the condition 2)---assuming 1)---of the
statement is equivalent to,
\begin{equation*}
\frac{\partial \tilde{g}_{ij}}{\partial x^k}(x)= \Gamma _{ik}^h(x)g_{hj}(x)
+\Gamma _{jk}^h(x)g_{hi}(x),
\end{equation*}
thus proving that $\sigma ^\nabla $ makes sense.
\end{proof}
\begin{proposition}[{cf. \protect\cite[II]{MJE}}]
Let $p_{M}\colon M\to N$ be as in \emph{Lemma \ref{sigma_nabla}}.
For the E-H Lagrangian, the density $(\bar{L}_{EH})_{\sigma ^{\nabla }}v$
introduced in \emph{Theorem \ref{first-order-equivalent}-(iv)} is given by,
$(\bar{L}_{EH})_{\sigma ^{\nabla }}(j_{x}^{2}g)v_{x}
=c\left( \left( \operatorname{alt}_{23}
\left( \nabla ^{g}T^{g}\right) _{x}
\right) ^{\sharp }
\right)
\left(
v_{g}\right) _{x}$, for all $j_{x}^{2}g\in J^{2}M$, where
\begin{equation*}
\operatorname{alt}_{23}
\colon \otimes ^{3}T^{\ast }M\otimes TM\to
\otimes ^{3}T^{\ast }M\otimes TM
\end{equation*}
denotes the alternating operator of the second and third covariant indices,
and
\begin{equation*}
{}^{\sharp }\colon \otimes ^{3}T^{\ast }M\otimes TM
\to \otimes ^{2}T^{\ast }M\otimes ^{2}TM
\end{equation*}
is the isomorphism induced by $g$, i.e.,
\begin{equation*}
w_{1}\otimes w_{2}\otimes w_{3}\otimes X
\mapsto w_{1}\otimes w_{2}\otimes (w_{3})^{\sharp }\otimes X,
\end{equation*}
and $c\colon \otimes ^{2}T^{\ast }M\otimes ^{2}TM\to \mathbb{R}$ is
the total contraction of the first (resp.\ second) covariant index with the
first (resp.\ second) contravariant one.
\end{proposition}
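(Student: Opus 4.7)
The plan is to apply Theorem \ref{first-order-equivalent}-(iv) explicitly to the E-H Lagrangian, reduce the resulting expression to a tensorial form in coordinates, and match it against the invariant scalar on the right-hand side via the difference tensor $T^g=\Gamma^g-\nabla$, a symmetric $(1,2)$-tensor on $N$ that is related to $\nabla g$ by the Koszul identity $(\nabla_h g)_{rs}=g_{ks}T^k_{rh}+g_{kr}T^k_{sh}$.

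First I would identify the functions $L^i_{\sigma^\nabla}$ appearing in the decomposition $\phi^1_v\circ w^{0,\sigma^\nabla}_{L_{EH}}=(-1)^{i-1}L^i_{\sigma^\nabla}v_i$. Because the coefficients $(L_{EH})^{ih}_{rs}$ of \eqref{L_EH^ij_rs} depend only on the metric coordinates (a feature already used in section \ref{EH}), a linear primitive is immediate, and subtracting its value on $\sigma^\nabla(g_x)$---which by Lemma \ref{sigma_nabla} substitutes $y_{rs,h}\mapsto\Gamma^k_{rh}y_{ks}+\Gamma^k_{sh}y_{kr}$---yields
\[
L^i_{\sigma^\nabla}=(L_{EH})^{ih}_{rs}\bigl[y_{rs,h}-\Gamma^k_{rh}y_{ks}-\Gamma^k_{sh}y_{kr}\bigr],
\]
whose bracket is precisely $(\nabla g)_{rsh}$ along a section and, via Koszul, encodes exactly $T^g$. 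Expanding the prefactor and collapsing metric contractions, I expect the clean rearrangement $L^i_{\sigma^\nabla}=\rho\bigl(g^{jl}T^i_{jl}-g^{il}T^k_{lk}\bigr)$, which realises the $TN$-valued section $w^{0,\sigma^\nabla}_{L_{EH}}$ as the canonical combination of the two $g^{-1}$-contractions of $T^g$.

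Next, by Theorem \ref{first-order-equivalent}-(iv), $(\bar{L}_{EH})_{\sigma^\nabla}v=L_{EH}v-(D_i L^i_{\sigma^\nabla})v$. I would compute $D_iL^i_{\sigma^\nabla}$ via the Leibniz rule together with $\partial_i\rho=\rho\,\Gamma^{g,k}_{ki}$, reorganise using $\nabla^g g=0$, and substitute the Palatini-type decomposition $R^{g,i}_{jkl}=R^{\nabla,i}_{jkl}+\nabla_kT^i_{jl}-\nabla_lT^i_{jk}+T^i_{mk}T^m_{jl}-T^i_{ml}T^m_{jk}$ into $L_{EH}=\rho\,g^{jl}R^g_{jl}$. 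The second-order-in-$g$ content of both sides combines---after converting $\nabla$-derivatives into $\nabla^g$-derivatives through $\nabla=\nabla^g-T^g$---into the single scalar $g^{jl}\bigl[\nabla^g_lT^k_{jk}-\nabla^g_kT^k_{jl}\bigr]$, which is exactly $c\bigl((\operatorname{alt}_{23}(\nabla^g T^g))^\sharp\bigr)$: unwinding the operators, $(\operatorname{alt}_{23}(\nabla^g T^g))^i_{jkl}=\nabla^g_lT^i_{jk}-\nabla^g_kT^i_{jl}$, the sharp raises the third covariant index via $g$, and the total contraction pairs first covariant with first contravariant and second covariant with second contravariant to reproduce the displayed scalar. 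Multiplying by $v_x$ and comparing with $(v_g)_x=\rho(x)v_x$ closes the argument.

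The main obstacle is the bookkeeping in both the collapse $L^i_{\sigma^\nabla}\rightsquigarrow\rho(g^{jl}T^i_{jl}-g^{il}T^k_{lk})$ and the tracking of the $TT$ cross-terms and index positions when substituting the Palatini identity into $L_{EH}$; a single misplaced sign propagates through the cancellation. A cleaner conceptual route I would also exploit is naturality: both sides of the claim are $\operatorname{Diff}(N)$-natural $n$-form functors of the pair $(g,\nabla)$---the left by Theorem \ref{first-order-equivalent}-(iv) together with the canonical character of $\sigma^\nabla$ in Lemma \ref{sigma_nabla}, the right manifestly so---hence it suffices to verify the equality in a single chart, for instance at a point where $g_{ij}=\delta_{ij}$ and $\partial g=0$ while the Christoffels $\Gamma^k_{ij}$ are freely prescribed, which drastically reduces the notational overhead.
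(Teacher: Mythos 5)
The paper gives no argument for this proposition---it is quoted from \cite[II]{MJE}---so there is no internal proof to compare your route against; judged on its own, your strategy is the natural one and its first stages are correct. Since $(L_{EH})^{ih}_{rs}$ depends only on the $y_{ab}$, the normalised primitive is indeed $L^i_{\sigma^\nabla}=\sum_{r\le s}(L_{EH})^{ih}_{rs}\bigl[y_{rs,h}-\Gamma^k_{rh}y_{ks}-\Gamma^k_{sh}y_{kr}\bigr]$, the bracket along $j^1g$ is $(\nabla_hg)_{rs}=g_{ks}T^k_{rh}+g_{kr}T^k_{sh}$ with $T=\Gamma^g-\nabla$, and the contractions do collapse to $L^i_{\sigma^\nabla}=\rho\,(g^{jl}T^i_{jl}-g^{il}T^k_{lk})$; the divergence formula $\partial_i(\rho W^i)=\rho\,\nabla^g_iW^i$ then gives $D_iL^i_{\sigma^\nabla}=\rho\,g^{jl}(\nabla^g_kT^k_{jl}-\nabla^g_lT^k_{jk})$.

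The final assembly, however, does not close. From $(\bar L_{EH})_{\sigma^\nabla}=L_{EH}-D_iL^i_{\sigma^\nabla}$ the above yields $(\bar L_{EH})_{\sigma^\nabla}=L_{EH}+\rho\,g^{jl}(\nabla^g_lT^k_{jk}-\nabla^g_kT^k_{jl})$, whereas you claim the whole expression reduces to the second summand alone, i.e.\ to $c\bigl((\operatorname{alt}_{23}(\nabla^gT^g))^\sharp\bigr)\rho$ under your (correct) unwinding of the operators. These cannot coincide: by Theorem \ref{first-order-equivalent}-(iv) the left-hand side is a first-order function of $g$, while $g^{jl}(\nabla^g_lT^k_{jk}-\nabla^g_kT^k_{jl})$ contains the second-derivative block $g^{jl}(\partial_l(\Gamma^g)^k_{jk}-\partial_k(\Gamma^g)^k_{jl})$, which is exactly minus the second-order part of $L_{EH}/\rho$ and does not vanish; no contraction of $\nabla^gT^g$ alone is first order. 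Your own proposed check exposes this: at a point with $g_{ij}=\delta_{ij}$, $\partial g=0$ and $\Gamma$ arbitrary, both sides carry $\sum_{j,k}(\partial_k\Gamma^k_{jj}-\partial_j\Gamma^k_{jk})$, but the right-hand side carries in addition $\sum_{j,k}(\partial^2_jg_{kk}-\partial_j\partial_kg_{jk})=-L_{EH}(x)$. What your computation actually establishes is $D(\phi^1_v\circ w^{0,\sigma^\nabla}_{L_{EH}})=-c\bigl((\operatorname{alt}_{23}(\nabla^gT^g))^\sharp\bigr)v_g$, equivalently $(\bar L_{EH})_{\sigma^\nabla}v=\Lambda_{EH}+c\bigl((\operatorname{alt}_{23}(\nabla^gT^g))^\sharp\bigr)v_g$, and the genuinely first-order closed form produced by the Palatini substitution is $(\bar L_{EH})_{\sigma^\nabla}=\rho\bigl[g^{jl}R^\nabla_{jl}+g^{jl}(T^k_{lm}T^m_{jk}-T^m_{jl}T^k_{km})\bigr]$. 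So either the reading of $\operatorname{alt}_{23}$, $\sharp$, $c$ and $T^g$ must be revisited until the displayed scalar becomes first order, or an $\Lambda_{EH}$ summand is missing from the identity being proved; in either case the cancellation asserted in your third paragraph has to be redone before the argument is complete.
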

\section{Symmetries and Noether invariants\label{symmetriesNoether}}
Given fibred manifolds $p\colon E\to N$,
$p^{\prime }\colon E^{\prime }\to N^{\prime }$, every morphism
$\Phi \colon E\to E^{\prime }$ for which the associated map
on the base manifolds $\phi \colon N\to N^{\prime }$ is a diffeomorphism,
induces a map
\begin{equation*}
\begin{array}{l}
\Phi ^{(r)}\colon J^rE\to J^rE^{\prime }, \\
\Phi ^{(r)}(j_{x}^rs)=j_{\phi (x)}^r(\Phi \circ s\circ \phi ^{-1}).
\end{array}
\end{equation*}
If $\Phi _{t}$ is the flow of a $p$-projectable vector field $X$, then
$\Phi _{t}^{(r)}$ is the flow of a vector field $X^{(r)}\in \mathfrak{X}(J^rE)$,
called the infinitesimal contact transformation of order $r$ associated to
the vector field $X$. The mapping $X\mapsto X^{(r)}$ is an injection of Lie
algebras. For $r=1,2$, the general prolongation formulas read as follows:
\begin{equation*}
\begin{array}{l}
X=u^{i}\frac{\partial }{\partial x^{i}}+v^{\alpha }
\frac{\partial }{\partial y^{\alpha }}, \\
u^{i}\in C^{\infty }(N),v^{\alpha }\in C^{\infty }(E),
\end{array}
\end{equation*}
\begin{equation*}
\begin{array}{l}
X^{(1)}=u^{i}\frac{\partial }{\partial x^{i}}+v^{\alpha }
\frac{\partial }{\partial y^{\alpha }}+v_{i}^{\alpha }
\frac{\partial }{\partial y_{i}^{\alpha }}, \\
v_{i}^{\alpha }=D_{i}\left( v^{\alpha }-u^{h}y_{h}^{\alpha }\right)
+u^{h}y_{(hi)}^{\alpha },
\end{array}
\end{equation*}
\begin{equation*}
\begin{array}{l}
X^{(2)}=u^{i}\frac{\partial }{\partial x^{i}}
+v^{\alpha }\frac{\partial }{\partial y^{\alpha }}
+v_{i}^{\alpha }\frac{\partial }{\partial y_{i}^{\alpha }}
+\sum\nolimits_{i\leq j}v_{ij}^{\alpha }
\frac{\partial }{\partial y_{(ij)}^{\alpha }}, \\
v_{ij}^{\alpha }
=D_{i}D_{j}\left( v^{\alpha }-u^{h}y_{h}^{\alpha }\right)
+u^{h}y_{(hij)}^{\alpha }.
\end{array}
\end{equation*}
\begin{theorem}
\label{symmetry1}Let $\Lambda =Lv$ be a second-order Lagrangian density
on $p\colon E\to N$ with P-C form projectable onto $J^{1}E$. If $X$ is a
$p$-projectable vector field on $E$, then the P-C form of the second-order
Lagrangian density $\Lambda ^{\prime }=L^{\prime }v=L_{X^{(2)}}\Lambda $
also projects onto $J^{1}E$ and the following formula holds:
\begin{equation*}
\Theta _{L_{X^{(2)}}\Lambda }=L_{X^{(1)}}\Theta _{\Lambda }.
\end{equation*}
Therefore, if $s\colon N\to E$ is an extremal for $\Lambda $ and $X$
is an infinitesimal symmetry (i.e., $L_{X^{(2)}}\Lambda =0$), then the
$(n-1) $-form $(j^{1}s)^{\ast }i_{X^{(1)}}\Theta $ is closed.
(The $(n-1)$-form $i_{X^{(1)}}\Theta $ is called the Noether invariant
associated to $X$.)
\end{theorem}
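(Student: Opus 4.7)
The plan is to exploit the naturality of the Poincar\'e-Cartan construction under diffeomorphisms and then pass to the infinitesimal level. The fundamental input is the identity
\[
(\Phi ^{(3)})^{\ast }\Theta _{\Lambda }=\Theta _{(\Phi ^{(2)})^{\ast }\Lambda },
\]
valid for every diffeomorphism $\Phi \colon E\to E$ that projects onto a diffeomorphism of $N$. This follows from the fact that the contact forms $\theta ^{1},\theta ^{2}$ and the Legendre form are natural under the prolongation construction $\Phi \mapsto \Phi ^{(r)}$. Differentiating at $t=0$ along the flow of $X$, one obtains
\[
L_{X^{(3)}}\Theta _{\Lambda }=\Theta _{L_{X^{(2)}}\Lambda }.
\]

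Next I would deduce the projectability claim and the stated formula. Since $\Theta _{\Lambda }$ projects onto $J^{1}E$, there exists a unique form $\bar{\Theta}_{\Lambda }$ on $J^{1}E$ with $\Theta _{\Lambda }=(p_{1}^{3})^{\ast }\bar{\Theta}_{\Lambda }$. Because $X$ is $p$-projectable, $X^{(3)}$ is $p_{1}^{3}$-related to $X^{(1)}$, so
\[
L_{X^{(3)}}\Theta _{\Lambda }=(p_{1}^{3})^{\ast }\bigl(L_{X^{(1)}}\bar{\Theta}_{\Lambda }\bigr).
\]
Combining this with the previous identity shows that $\Theta _{L_{X^{(2)}}\Lambda }$ is itself a pullback from $J^{1}E$; this simultaneously proves projectability onto $J^{1}E$ and, under the natural identification of $\Theta _{\Lambda }$ with $\bar{\Theta}_{\Lambda }$, the identity $\Theta _{L_{X^{(2)}}\Lambda }=L_{X^{(1)}}\Theta _{\Lambda }$.

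For the Noether part, assume $L_{X^{(2)}}\Lambda =0$, so by the previous step $L_{X^{(1)}}\Theta _{\Lambda }=0$ on $J^{1}E$. Cartan's magic formula then yields $d(i_{X^{(1)}}\Theta _{\Lambda })=-i_{X^{(1)}}d\Theta _{\Lambda }$, and hence
\[
d\bigl((j^{1}s)^{\ast }i_{X^{(1)}}\Theta _{\Lambda }\bigr) =-(j^{1}s)^{\ast }i_{X^{(1)}}d\Theta _{\Lambda }.
\]
To see that the right-hand side vanishes along an extremal $s$, I would decompose $X^{(1)}$ at each point of $j^{1}s(N)$ as $X^{(1)}=(j^{1}s)_{\ast }(p_{\ast }X)+Z$ with $Z\in V(p^{1})$. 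The contribution from $(j^{1}s)_{\ast }(p_{\ast }X)$ equals $(j^{1}s)^{\ast }d\Theta _{\Lambda }$ contracted with $p_{\ast }X$, which vanishes because formula \eqref{differential_P-C} decomposes $d\Theta _{\Lambda }$ into an Euler--Lagrange summand (killed by $\mathcal{E}_{\alpha }(L)\circ j^{2}s=0$) and a $2$-contact summand (killed by pullback along any jet prolongation). The contribution from $Z$ vanishes by the H-C characterization of extremals applied to $\bar{\Theta}_{\Lambda }$ on $J^{1}E$.

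The delicate point will be ensuring that the compatibility between $L_{X^{(3)}}\Theta _{\Lambda }$ (naturally defined on $J^{3}E$) and $L_{X^{(1)}}\bar{\Theta}_{\Lambda }$ (living on $J^{1}E$) is handled rigorously, since three different orders of prolongation of $X$ intervene simultaneously. Once the naturality identity is in place and the relation between the prolongations is tracked through $p_{1}^{3}$, the remainder of the argument is a routine combination of Cartan's formula with the H-C equations of a variational problem whose Poincar\'e-Cartan form is already defined on first-order jets.
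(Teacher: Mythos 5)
Your argument is correct, but it follows a genuinely different route from the paper's. The paper proves the theorem by a direct coordinate computation: it expands $L'=X^{(2)}(L)+\operatorname{div}(X')L$, reads off the coefficients $L_{\alpha}^{\prime ab}$ and $L_{0}^{\prime}$, verifies that they satisfy the affineness and symmetry conditions of Proposition \ref{proposition1} (using that $L$ does and that $X^{(2)}$ projects onto $X^{(1)}$), and then compares $\Theta_{\Lambda'}$ with $L_{X^{(1)}}\Theta_{\Lambda}$ term by term via the local expression \eqref{localPCform}. You instead invoke the naturality identity $(\Phi^{(3)})^{\ast}\Theta_{\Lambda}=\Theta_{(\Phi^{(2)})^{\ast}\Lambda}$, differentiate along the flow to get $L_{X^{(3)}}\Theta_{\Lambda}=\Theta_{L_{X^{(2)}}\Lambda}$, and use that $X^{(3)}$ is $p_{1}^{3}$-related to $X^{(1)}$ to conclude both projectability and the formula in one stroke. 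This is shorter and more conceptual, and your treatment of the Noether step (splitting $X^{(1)}$ along $j^{1}s(N)$ into $(j^{1}s)_{\ast}(p_{\ast}X)$ plus a $p^{1}$-vertical part, killing the first via \eqref{differential_P-C} and the second via the H-C equations) is actually more explicit than the paper's, which asserts the vanishing of $(j^{1}s)^{\ast}(i_{X^{(1)}}d\Theta_{\Lambda})$ without spelling out the decomposition. The trade-off is that your proof rests on the naturality of the second-order Poincar\'e-Cartan form under fibred automorphisms, which is true (it is part of the axiomatic characterization in \cite{Mu2}) but is not proved in this paper, so it imports an external lemma; the paper's computation is self-contained and, as a by-product, yields the explicit transformation laws \eqref{L^prime^ab_alpha} and \eqref{L^prime_0} for the coefficients, which are reused implicitly in the comparison of the two sides. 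If you adopt your route, state the naturality identity as a lemma with a precise reference, and note explicitly that $p_{1}^{3}$ is a surjective submersion with connected fibres so that the projected form $\bar{\Theta}_{\Lambda}$ is unique and pullback by $(p_{1}^{3})^{\ast}$ is injective on forms.
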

\begin{proof}
We have $L^{\prime }=X^{(2)}(L)+\operatorname{div}(X^{\prime })L$, $X^{\prime }$
being the projection of $X$ onto $N$ and $\operatorname{div}(X^{\prime })$ the
divergence of $X^{\prime }$ with respect to $v$. According to Proposition
\ref{proposition1}, we must prove the existence of functions $L_{0}^{\prime
} $, $L_{\alpha }^{\prime ji}=L_{\alpha }^{\prime ij}$ on $J^{1}E$ such that,
\begin{equation*}
\begin{array}{rl}
L^{\prime }=L_{\alpha }^{\prime ij}y_{(ij)}^{\alpha }+L_{0}^{\prime },
& \smallskip \\
\tfrac{\partial L_{\beta }^{\prime ih}}{\partial y_{a}^{\alpha }}
=\tfrac{\partial L_{\alpha }^{\prime ia}}{\partial y_{h}^{\beta }},
& a,h,i=1,\dotsc,n,\;\alpha ,\beta =1,\dotsc ,m.
\end{array}
\end{equation*}
As $L$ satisfies such formulas by virtue of the hypothesis, and $X^{(2)}$
projects onto $X^{(1)}$, we have
\begin{equation}
\begin{array}{rl}
L^{\prime }
= & \left[ X^{(1)}\left( L_{\alpha }^{ab}\right)
+\tfrac{\partial v^{\beta }}{\partial y^{\alpha }}L_{\beta }^{ab}
-2\tfrac{\partial u^{a}}{\partial x^r}L_{\alpha }^{br}
+\operatorname{div}(X^{\prime })L_{\alpha }^{ab}\right] y_{(ab)}^{\alpha } \\
& \multicolumn{1}{r}{+L_{\alpha }^{ij}
\left(
\tfrac{\partial ^{2}v^{\alpha }}{\partial x^{i}\partial x^{j}}
+y_{j}^{\beta }
\tfrac{\partial ^{2}v^{\alpha }}{\partial x^{i}\partial y^{\beta }}
+y_{j}^{\beta }y_{i}^{\gamma }\tfrac{\partial ^{2}v^{\alpha }}
{\partial y^{\beta }\partial y^{\gamma }}
-\tfrac{\partial ^{2}u^{h}}{\partial x^{i}\partial x^{j}}y_{h}^{\alpha }
\right)} \\
& \multicolumn{1}{r}{+X^{(1)}\left( L_{0}\right)
+\operatorname{div}(X^{\prime })L_{0}.}
\end{array}
\label{L^prime}
\end{equation}
Hence
\begin{equation}
L_{\alpha }^{\prime ab}=X^{(1)}\left( L_{\alpha }^{ab}\right)
+\tfrac{\partial v^{\beta }}{\partial y^{\alpha }}L_{\beta }^{ab}
-2\tfrac{\partial u^{a}}{\partial x^r}L_{\alpha }^{br}
+\operatorname{div}(X^{\prime })L_{\alpha }^{ab},
\label{L^prime^ab_alpha}
\end{equation}
\begin{eqnarray}
L_{0}^{\prime }\! &=&\!L_{\alpha }^{ij}
\left( \!\tfrac{\partial ^{2}v^{\alpha }}{\partial x^{i}\partial x^{j}}
\!+\!\tfrac{\partial ^{2}v^{\alpha }}{\partial x^{i}\partial y^{\beta }}
y_{j}^{\beta }
\!+\!\tfrac{\partial ^{2}v^{\alpha }}{\partial x^{j}\partial y^{\beta }}
y_{i}^{\beta }
\!+\!\tfrac{\partial ^{2}v^{\alpha }}{\partial y^{\beta }\partial y^{\gamma}}
y_{j}^{\beta }y_{i}^{\gamma }
\!-\!\tfrac{\partial ^{2}u^{h}}{\partial x^{i}\partial x^{j}}y_{h}^{\alpha }\!
\right)  \label{L^prime_0} \\
&&\!+X^{(1)}\left( L_{0}\right)
+\operatorname{div}(X^{\prime })L_{0}.  \notag
\end{eqnarray}
From the formula (\ref{L^prime_0}) it also follows:
\begin{equation*}
L_{0}^{\prime }=L_{\alpha }^{ij}v_{ij}^{\alpha }-\left( L_{\beta }^{ij}
\tfrac{\partial v^{\beta }}{\partial y^{\alpha }}
-2\tfrac{\partial u^{i}}{\partial x^r}
L_{\alpha }^{jr}\right) y_{(ij)}^{\alpha }
+X^{(1)}\left(
L_{0}\right)
+\operatorname{div}(X^{\prime })L_{0}.
\end{equation*}
Replacing
$\frac{\partial (X^{(1)}(L_{\alpha }^{ij}))}{\partial y_{h}^{\beta }}
=X^{(1)}
\left(
\frac{\partial L_{\alpha }^{ij}}{\partial y_{h}^{\beta }}
\right)
+\frac{\partial v^{\gamma }}{\partial y^{\beta }}
\frac{\partial L_{\alpha }^{ij}}{\partial y_{h}^{\gamma }}
-\frac{\partial u^{h}}{\partial x^{a}}
\frac{\partial L_{\alpha }^{ij}}{\partial y_{a}^{\beta }}$
into the formula for
$\frac{\partial L_{\alpha }^{\prime ij}}{\partial y_{h}^{\beta }}$,
we obtain
\begin{eqnarray*}
\tfrac{\partial L_{\alpha }^{\prime ij}}{\partial y_{h}^{\beta }}
&=&X^{(1)}\left(
\tfrac{\partial L_{\alpha }^{ij}}{\partial y_{h}^{\beta }}
\right) +\tfrac{\partial v^{\gamma }}{\partial y^{\beta }}
\tfrac{\partial L_{\alpha }^{ij}}{\partial y_{h}^{\gamma }}
-\tfrac{\partial u^{h}}{\partial x^{a}}
\tfrac{\partial L_{\alpha }^{ij}}{\partial y_{a}^{\beta }}
+\tfrac{\partial v^{\gamma }}{\partial y^{\alpha }}
\tfrac{\partial L_{\gamma }^{ij}}{\partial y_{h}^{\beta }} \\
&&-\tfrac{\partial u^{j}}{\partial x^{a}}
\tfrac{\partial L_{\alpha }^{ia}}{\partial y_{h}^{\beta }}
-\tfrac{\partial u^{i}}{\partial x^{a}}
\tfrac{\partial L_{\alpha }^{aj}}{\partial y_{h}^{\beta }}
+\operatorname{div}(X^{\prime })
\tfrac{\partial L_{\alpha }^{ij}}{\partial y_{h}^{\beta }},
\end{eqnarray*}
and similarly,
\begin{eqnarray*}
\tfrac{\partial L_{\beta }^{\prime ih}}{\partial y_{j}^{\alpha }}
&=&X^{(1)}\left(
\tfrac{\partial L_{\beta }^{ih}}{\partial y_{j}^{\alpha }}
\right)
+\tfrac{\partial v^{\gamma }}{\partial y^{\alpha }}
\tfrac{\partial L_{\beta }^{ih}}{\partial y_{j}^{\gamma }}
-\tfrac{\partial u^{j}}{\partial x^{a}}
\tfrac{\partial L_{\beta }^{ih}}{\partial y_{a}^{\alpha }}
+\tfrac{\partial v^{\gamma }}{\partial y^{\beta }}
\tfrac{\partial L_{\gamma }^{ih}}{\partial y_{j}^{\alpha }} \\
&&-\tfrac{\partial u^{h}}{\partial x^{a}}
\tfrac{\partial L_{\beta }^{ia}}{\partial y_{j}^{\alpha }}
-\tfrac{\partial u^{i}}{\partial x^{a}}
\tfrac{\partial L_{\beta }^{ah}}{\partial y_{j}^{\alpha }}
+\operatorname{div}(X^{\prime })
\tfrac{\partial L_{\beta }^{ih}}{\partial y_{j}^{\alpha }},
\end{eqnarray*}
and taking the formulas (\ref{first_tris}) into account, we can conclude
that $\frac{\partial L_{\alpha }^{\prime ij}}{\partial y_{h}^{\beta }}
=\frac{\partial L_{\beta }^{\prime ih}}{\partial y_{j}^{\alpha }}$.
Moreover, from the formula \cite[(8)]{EJ} we know
\begin{equation}
\Theta _{\Lambda }=(-1)^{i-1}\left( L_{\alpha }^{i0}dy^{\alpha }
+L_{\alpha }^{ih}dy_{h}^{\alpha }\right) \wedge v_{i}
+\left( L-y_{i}^{\alpha }L_{\alpha }^{i0}
-y_{(hi)}^{\alpha }L_{\alpha }^{ih}\right) v, \label{localPCform}
\end{equation}
where
\begin{equation*}
\begin{array}{rl}
L-y_{i}^{\alpha }L_{\alpha }^{i0}-y_{(hi)}^{\alpha }L_{\alpha }^{ih}
= & L_{0}-y_{i}^{\alpha }L_{\alpha }^{i0}, \\
L_{\alpha }^{ij}
= &
\tfrac{1}{2-\delta _{ij}}\tfrac{\partial L}{\partial y_{(ij)}^{\alpha }}, \\
L_{\beta }^{h0}= & \tfrac{\partial L_{0}}{\partial y_{h}^{\beta }}
-\tfrac{\partial L_{\beta }^{hk}}{\partial x^{k}}
-y_{k}^{\gamma }\tfrac{\partial L_{\beta }^{hk}}{\partial y^{\gamma }},
\end{array}
\end{equation*}
the third equation again being a consequence of (\ref{first_tris}). Hence
\begin{eqnarray*}
L_{X^{(1)}}\Theta _{\Lambda }
&=&(-1)^{i-1}\left( X^{(1)}\left( L_{\alpha }^{i0}\right) dy^{\alpha }
+X^{(1)}\left( L_{\alpha }^{ih}\right)
dy_{h}^{\alpha }\right) \wedge v_{i} \\
&&+\left( L_{\alpha }^{i0}\tfrac{\partial v^{\alpha }}{\partial x^{i}}
+L_{\alpha }^{ih}\tfrac{\partial v_{h}^{\alpha }}{\partial x^{i}}\right)
v+(-1)^{i-1}
\tfrac{\partial v^{\alpha }}{\partial y^{\beta }}
L_{\alpha }^{i0}dy^{\beta }\wedge v_{i} \\
&&+(-1)^{i-1}L_{\alpha }^{ih}
\left( \tfrac{\partial v_{h}^{\alpha }}{\partial y^{\beta }}dy^{\beta }
+\tfrac{\partial v_{h}^{\alpha }}{\partial y_{j}^{\beta }}dy_{j}^{\beta }
\right) \wedge v_{i} \\
&&+(-1)^{i-1}\left(
L_{\alpha }^{i0}dy^{\alpha }+L_{\alpha }^{ih}dy_{h}^{\alpha }
\right)
\wedge L_{X^{\prime }}(v_{i}) \\
&&+\left[ X^{(1)
}\left(
L_{0}-y_{i}^{\alpha }L_{\alpha }^{i0}
\right)
\right]
v+\operatorname{div}(X^{\prime })
\left( L_{0}-y_{i}^{\alpha }L_{\alpha }^{i0}\right) v.
\end{eqnarray*}
Expanding the right-hand side above we obtain
\begin{eqnarray*}
L_{X^{(1)}}\Theta _{\Lambda } &=&(-1)^{i-1}
\left( X^{(1)}\left(
L_{\alpha }^{i0}\right)
+\tfrac{\partial v^{\beta }}{\partial y^{\alpha }}
L_{\beta }^{i0}
+\tfrac{\partial v_{h}^{\beta }}{\partial y^{\alpha }}
L_{\beta }^{ih}\right)
dy^{\alpha }\wedge v_{i} \\
&&+(-1)^{i-1}\left( X^{(1)}\left( L_{\alpha }^{ih}\right)
+\tfrac{\partial v_{j}^{\beta }}{\partial y_{h}^{\alpha }}
L_{\beta }^{ij}\right)
dy_{h}^{\alpha }\wedge v_{i} \\
&&+(-1)^{i-1}\left( L_{\alpha }^{i0}dy^{\alpha }
+L_{\alpha }^{ih}dy_{h}^{\alpha }\right)
\wedge L_{X^{\prime }}(v_{i}) \\
&&+\left( X^{(1)}\left( L_{0}-y_{i}^{\alpha }L_{\alpha }^{i0}
\right)
+L_{\alpha }^{i0}\tfrac{\partial v^{\alpha }}{\partial x^{i}}
+L_{\alpha }^{ih}\tfrac{\partial v_{h}^{\alpha }}{\partial x^{i}}
\right) v \\
&&+\operatorname{div}(X^{\prime })\left( L_{0}-y_{i}^{\alpha }
L_{\alpha }^{i0}\right) v.
\end{eqnarray*}
Moreover, by applying the formula (\ref{localPCform}) to the density
$\Lambda ^{\prime }$ we have
\begin{equation*}
\Theta _{\Lambda ^{\prime }}
=(-1)^{i-1}\left( L_{\alpha }^{\prime i0}dy^{\alpha }
+L_{\alpha }^{\prime ih}dy_{h}^{\alpha }\right)
\wedge v_{i}+\left( L_{0}^{\prime }
-y_{i}^{\alpha }L_{\alpha }^{\prime i0}\right) v.
\end{equation*}
We first compute $L_{\alpha }^{\prime i0}$. From (\ref{f12}),
(\ref{L^prime}), (\ref{L^prime^ab_alpha}), and (\ref{L^prime_0})
we deduce
\begin{equation*}
L_{\alpha }^{\prime i0}=X^{(1)}\left( L_{\alpha }^{i0}\right)
+\operatorname{div}(X^{\prime })L_{\alpha }^{i0}
+\tfrac{\partial v^{\beta }}{\partial y^{\alpha }}L_{\beta }^{i0}
-\frac{\partial u^{i}}{\partial x^r}L_{\alpha }^{r0}
+\tfrac{\partial v_{h}^{\beta }}{\partial y^{\alpha }}L_{\beta }^{hi}.
\end{equation*}
Furthermore,
\begin{align*}
L_{\alpha }^{\prime ij}& =X^{(1)}(L_{\alpha }^{ij})
+\operatorname{div}(X^{\prime })L_{\alpha }^{ij}+A_{\alpha }^{ij}, \\
L_{0}^{\prime }& =X^{(1)}(L_{0})+\operatorname{div}(X^{\prime })L_{0}
+T_{hk}^{\beta }L_{\beta }^{hk},
\end{align*}
with
\begin{align*}
A_{\alpha }^{ij}&
=\tfrac{\partial v^{\beta }}{\partial y^{\alpha }}L_{\beta }^{ij}
-\frac{\partial u^{i}}{\partial x^r}L_{\alpha }^{rj}
-\tfrac{\partial u^{j}}{\partial x^r}L_{\alpha }^{ri}, \\
T_{hk}^{\beta }&
=\tfrac{\partial ^{2}v^{\beta }}{\partial x^{h}\partial x^{k}}
+\tfrac{\partial ^{2}v^{\beta }}{\partial y^{\gamma }\partial x^{k}}
y_{h}^{\gamma }
+\tfrac{\partial ^{2}v^{\beta }}{\partial y^{\gamma }\partial x^{h}}
y_{k}^{\gamma }
+\tfrac{\partial ^{2}v^{\beta }}{\partial y^{\gamma }\partial y^{\sigma }}
y_{h}^{\gamma }y_{k}^{\sigma }
-\tfrac{\partial ^{2}u^r}{\partial x^{h}\partial x^{k}}y_{r}^{\beta }.
\end{align*}
Hence
\begin{equation*}
L_{0}^{\prime }-y_{i}^{\alpha }L_{\alpha }^{\prime i0}
=X^{(1)}(L_{0}-y_{i}^{\alpha }L_{\alpha }^{i0})
+\operatorname{div}(X^{\prime })
\left(
L_{0}-y_{i}^{\alpha }L_{\alpha }^{i0}
\right)
+\tfrac{\partial v^{\alpha }}{\partial x^{i}}L_{\alpha }^{i0}
+\tfrac{\partial v_{h}^{\beta }}{\partial x^{k}}L_{\beta }^{hk},
\end{equation*}
and we obtain
\begin{align*}
\Theta _{\Lambda ^{\prime }}& =(-1)^{i-1}
\left( X^{(1)}
\left( L_{\alpha }^{i0}
\right)
+\tfrac{\partial v^{\beta }}{\partial y^{\alpha }}L_{\beta }^{i0}
+\tfrac{\partial v_{h}^{\beta }}{\partial y^{\alpha }}L_{\beta }^{hi}
\right)
dy^{\alpha }\wedge v_{i} \\
& +(-1)^{i-1}
\left( \operatorname{div}(X^{\prime })L_{\alpha }^{i0}
-\tfrac{\partial u^{i}}{\partial x^r}L_{\alpha }^{r0}
\right) dy^{\alpha }\wedge v_{i} \\
& +(-1)^{i-1}
\left( X^{(1)}(L_{\alpha }^{ih})
+\tfrac{\partial v^{\beta }}{\partial y^{\alpha }}L_{\beta }^{ih}
-\tfrac{\partial u^{h}}{\partial x^r}L_{\alpha }^{ri}
\right)
dy_{h}^{\alpha }\wedge v_{i} \\
& -(-1)^{i-1}
\tfrac{\partial u^{i}}{\partial x^r}L_{\alpha }^{rh}
dy_{h}^{\alpha }\wedge v_{i}
+(-1)^{i-1}\operatorname{div}(X^{\prime })
L_{\alpha }^{ih}dy_{h}^{\alpha }\wedge v_{i} \\
& +\left( X^{(1)}(L_{0}-y_{i}^{\alpha }L_{\alpha }^{i0})
+\operatorname{div}(X^{\prime })
\left( L_{0}-y_{i}^{\alpha }L_{\alpha }^{i0}
\right) +\tfrac{\partial v^{\alpha }}{\partial x^{i}}
L_{\alpha }^{i0}+\tfrac{\partial v_{h}^{\beta }}{\partial x^{k}}
L_{\beta }^{hk}\right) v.
\end{align*}
By using the formula $L_{X^{\prime }}(v_{i})=\operatorname{div}(X^{\prime
})v_{i}+\sum\nolimits_{h=1}^n (-1)^{h-i-1}\tfrac{\partial u^{h}}{\partial
x^{i}}v_{h}$, we can thus conclude that $\Theta _{\Lambda ^{\prime
}}=L_{X^{(1)}}\Theta _{\Lambda }$. Finally, if $L_{X^{(2)}}\Lambda =0$, then
$\Theta _{L_{X^{(2)}}\Lambda }=0$ and by virtue of the formula in the first
part of the statement we deduce $L_{X^{(1)}}\Theta _{\Lambda }=0$. Hence
$(j^{1}s)^{\ast }(di_{X^{(1)}}\Theta )+(j^{1}s)^{\ast }(i_{X^{(1)}}d\Theta )=0 $,
and we can conclude recalling that the second term in the left-hand
side vanishes, as follows from the H-C equations in Theorem \ref{differentialPCform}.
\end{proof}
\section{Symmetries of the E-H Lagrangian density\label{symmetriesEH}}
\begin{example}
\label{example0} In the particular case of the bundle of pseudo-Riemannian
metrics of a given signature $p_{M}\colon M\to N$ (cf.\ section \ref{EH}),
the natural lift of a vector field $X^{\prime }
=u^{i}\tfrac{\partial}{\partial x^{i}}$
in $\mathfrak{X}(N)$ is given as follows (cf. \cite[section 2.2]{MR2}):
\begin{equation*}
X_{M}^{\prime }=u^{i}\tfrac{\partial }{\partial x^{i}}-\sum_{i\leq j}
\left(\tfrac{\partial u^{h}}{\partial x^{i}}y_{hj}
+\tfrac{\partial u^{h}}{\partial x^{j}}y_{ih}\right)
\tfrac{\partial }{\partial y_{ij}}\in \mathfrak{X}(M),
\end{equation*}
and from the geometric properties of the scalar curvature the E-H Lagrangian
density $\Lambda _{EH}$ admits $X_{M}^{\prime }$ as an infinitesimal
symmetry for every $X^{\prime }\in \mathfrak{X}(N)$. Let us compute its
Noether invariant $(j^{1}g)^{\ast }i_{\left( X_{M}^{\prime }\right)
^{(1)}}\Theta _{EH}$ along an Einstein metric $g$. From the formulas
\begin{multline*}
\left(
X_{M}^{\prime }\right) ^{(1)}
=u^{i}\tfrac{\partial }{\partial x^{i}}
-\sum_{i\leq j}\left(
\tfrac{\partial u^{h}}{\partial x^{i}}y_{hj}
+\tfrac{\partial u^{h}}{\partial x^{j}}y_{hi}
\right)
\tfrac{\partial }{\partial
y_{ij}} \\
-\sum_{i\leq j}
\left(
\tfrac{\partial ^{2}u^{h}}{\partial x^{i}\partial x^{k}}y_{hj}
+\tfrac{\partial ^{2}u^{h}}{\partial x^{j}\partial x^{k}}y_{hi}
+\tfrac{\partial u^{h}}{\partial x^{i}}y_{hj,k}
+\tfrac{\partial u^{h}}{\partial x^{j}}y_{hi,k}
+\tfrac{\partial u^{h}}{\partial x^{k}}y_{ij,h}
\right)
\tfrac{\partial }{\partial y_{ij,k}},
\end{multline*}
\begin{eqnarray*}
\Theta _{EH} &=&(-1)^{i-1}\sum\nolimits_{k\leq l}\left( \left( L_{EH}\right)
_{kl}^{i0}dy_{kl}+\left( L_{EH}\right) _{kl}^{ih}dy_{kl,h}\right) \wedge v_{i} \\
&&+\left( \left( L_{EH}\right) _{0}-\sum\nolimits_{k\leq l}y_{kl,i}\left(
L_{EH}\right) _{kl}^{i0}\right) v,
\end{eqnarray*}
where $\left( L_{EH}\right) _{kl}^{i0}$, $\left( L_{EH}\right) _{kl}^{ih}$,
and $\left( L_{EH}\right) _{0}$ are given in (\ref{f12}), (\ref{L_EH^ij_rs}),
and\ (\ref{L_EH_0}), respectively, by using a normal coordinate system
$(x^{i})_{i=1}^n $ centred at $x\in N$ we eventually obtain
\begin{eqnarray*}
\left(
\left( j^{1}g\right) ^{\ast }\left( i_{\left( X_{M}^{\prime }
\right)
^{(1)}}\Theta _{EH}\right) \right) _{x}
\! &=&\!(-1)^{i}
\left\{
\left(
\varepsilon _{h}
\tfrac{\partial ^{2}u^{i}}{\partial x^{h}\partial x^{h}}
-\varepsilon _{i}
\tfrac{\partial ^{2}u^{h}}{\partial x^{i}\partial x^{h}}
\right) (x)\right. \\
&&\!+\left(
\varepsilon _{i}\varepsilon _{k}
\tfrac{\partial ^{2}g_{ik}}{\partial x^{k}\partial x^{j}}
-\varepsilon _{i}\varepsilon _{k}
\tfrac{\partial ^{2}g_{kk}}{\partial x^{i}\partial x^{j}}
\right)
\!(x)u^{j}(x) \\
&&\!\left.
-\left( \varepsilon _{j}\varepsilon _{k}
\tfrac{\partial ^{2}g_{kj}}{\partial x^{j}\partial x^{k}}
-\varepsilon _{h}\varepsilon _{k}
\tfrac{\partial ^{2}g_{kk}}{\partial x^{h}\partial x^{h}}
\right)
\!(x)u^{i}(x)\right\} (v_{i})_{x}.
\end{eqnarray*}
By composing the tensor
$\left( \left( \nabla ^{g}\right) ^{2}X^{\prime }\right) _{x}$
and the isomorphism induced by the metric,
$g_{x}^{\sharp }\otimes \mathrm{id}
\colon T_{x}^{\ast }N\otimes T_{x}^{\ast }N\otimes T_{x}N
\to T_{x}N\otimes T_{x}^{\ast }N\otimes T_{x}N$, we have
\begin{equation*}
\left(
\left(
\nabla ^{g}
\right) ^{2}X^{\prime }
\right) _{x}^{\sharp }
=\varepsilon _{a}
\left(
\tfrac{\partial }{\partial x^{a}}
\right) _{x}
\otimes \left( dx^{h}\right) _{x}\otimes
\left(
\tfrac{\partial ^{2}u^{c}}{\partial x^{a}\partial x^{h}}
+u^{b}\tfrac{\partial \Gamma _{hb}^{c}}{\partial x^{a}}
\right) (x)
\left(
\tfrac{\partial }{\partial x^{c}}
\right) _{x}.
\end{equation*}
Contracting the first contravariant index and the first covariant one,
it follows:
\begin{equation*}
c_{1}^{1}\left(
\left( \nabla ^{g}\right) ^{2}X^{\prime }
\right)
_{x}^{\sharp }=\varepsilon _{h}
\left(
\tfrac{\partial ^{2}u^{i}}{\partial x^{h}\partial x^{h}}
+u^{b}\tfrac{\partial \Gamma _{hb}^{i}}{\partial x^{h}}
\right) (x)
\left(
\tfrac{\partial }{\partial x^{i}}
\right) _{x},
\end{equation*}
and contracting
$c_{1}^{1}\left( \left( \nabla ^{g}\right) ^{2}X^{\prime }
\right) _{x}^{\sharp }$ and the volume form,
\begin{equation*}
i_{c_{1}^{1}
\left(
\left( \nabla ^{g}
\right) ^{2}X^{\prime }\right) _{x}^{\sharp }}v_{x}
=(-1)^{i-1}\varepsilon _{h}\left(
\tfrac{\partial ^{2}u^{i}}{\partial x^{h}\partial x^{h}}
+u^{b}\tfrac{\partial \Gamma
_{hb}^{i}}{\partial x^{h}}\right) (x)(v_{i})_{x}.
\end{equation*}
Similarly, contracting the second contravariant index in
$\left( \left(\nabla ^{g}\right) ^{2}X^{\prime }\right) _{x}^{\sharp }$
and the first covariant one, it follows:
\begin{equation*}
c_{1}^{2}
\left(
\left( \nabla ^{g}
\right) ^{2}X^{\prime }
\right) _{x}^{\sharp }
=\varepsilon _{i}
\left( \tfrac{\partial ^{2}u^{h}}{\partial x^{i}\partial x^{h}}
+u^{b}\tfrac{\partial \Gamma _{hb}^{h}}{\partial x^{i}}
\right) (x)\left( \tfrac{\partial }{\partial x^{i}}\right) _{x},
\end{equation*}
and also,
\begin{equation*}
i_{c_{1}^{2}\left(
\left( \nabla ^{g}\right) ^{2}X^{\prime }
\right)
_{x}^{\sharp }}v_{x}=(-1)^{i-1}\varepsilon _{i}
\left( \tfrac{\partial ^{2}u^{h}}{\partial x^{i}\partial x^{h}}
+u^{b}\tfrac{\partial \Gamma _{hb}^{h}}{\partial x^{i}}
\right) (x)(v_{i})_{x}.
\end{equation*}
Finally,
\begin{equation*}
\left( j^{1}g\right) ^{\ast }
\left( i_{\left( X_{M}^{\prime }\right) ^{(1)}}
\Theta _{EH}\right) =i_{c_{1}^{2}
\left( \left( \nabla ^{g}\right)
X^{\prime }\right) ^{\sharp }}(v)
-i_{c_{1}^{1}\left( \left( \nabla ^{g}\right)
X^{\prime }\right) ^{\sharp }}(v).
\end{equation*}
\end{example}
\begin{theorem}
For $n=\dim N\geq 3$ the vector fields of the form $X_{M}^{\prime }$,
$X^{\prime }\in \mathfrak{X}(N)$, are the only infinitesimal symmetries
of the Lagrangian density $\Lambda _{EH}$.
\end{theorem}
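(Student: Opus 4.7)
The strategy is to reduce to the case of a $p_M$-vertical vector field and then exploit the polynomial structure of $L_{EH}$ on $J^{2}M$ together with a non-degeneracy property that requires $n\geq 3$.

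Every $p_M$-projectable vector field $X$ on $M$ splits as $X=X'_M+V$, where $X'=p_{M\ast}X\in\mathfrak{X}(N)$ is the projection of $X$ to the base and $V=X-X'_M$ is $p_M$-vertical. Since Example~\ref{example0} shows that $X'_M$ is an infinitesimal symmetry of $\Lambda_{EH}$ for every $X'\in\mathfrak{X}(N)$, the identity
\[
L_{X^{(2)}}\Lambda_{EH}=L_{(X'_M)^{(2)}}\Lambda_{EH}+L_{V^{(2)}}\Lambda_{EH}
\]
reduces the theorem to proving that the only $p_M$-vertical infinitesimal symmetry of $\Lambda_{EH}$ is $V=0$. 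I then write $V=V^{ij}(x,y)\,\partial /\partial y_{ij}$ with $V^{ij}=V^{ji}$. Because $V$ projects to the zero vector field, the divergence term in the prolongation formulas disappears and the symmetry condition becomes the functional identity $V^{(2)}(L_{EH})\equiv 0$ on $J^{2}M$. Applying the prolongation formulas with $u^{i}=0$, $v^{ij}=V^{ij}$, $v^{ij}_{k}=D_{k}V^{ij}$, $v^{ij}_{kl}=D_{k}D_{l}V^{ij}$, and using the structural fact that $L_{EH}$ is affine in the second-order jet coordinates $y_{ab,cd}$ with coefficients $(L_{EH})^{cd}_{ab}(y)$ of~\eqref{L_EH^ij_rs}, and quadratic in $y_{ab,c}$, I expand $V^{(2)}(L_{EH})$ as a polynomial in $(y_{ab,c},y_{ab,cd})$. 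Each homogeneous component must vanish separately, yielding a hierarchical system on the components $V^{ij}$.

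The component linear in $y_{ab,cd}$ receives contributions both from $V^{ij}\,\partial L_{EH}/\partial y_{ij}$ (whose $y_{2}$-linear piece encodes the $y$-derivative of the coefficients $(L_{EH})^{cd}_{ab}$) and from $D_{k}D_{l}V^{ij}\cdot \partial L_{EH}/\partial y_{ij,kl}$, whose leading $y_{2}$-piece is $y_{ab,kl}(\partial V^{ij}/\partial y_{ab})(2-\delta_{kl})(L_{EH})^{kl}_{ij}$. Equating the coefficient of each $y_{ab,cd}$ to zero yields an algebraic identity which, after contraction with suitable test tensors, forces $\partial V^{ij}/\partial y_{ab}=0$ in dimension $n\geq 3$; hence $V^{ij}=V^{ij}(x)$. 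Reinserting this, the only surviving $y_{2}$-linear contribution is $V^{ij}(x)(\partial (L_{EH})^{cd}_{ab}/\partial y_{ij})\,y_{ab,cd}$, whose vanishing gives the pointwise linear condition $V^{ij}(x)\,\partial (L_{EH})^{cd}_{ab}/\partial y_{ij}(y)=0$; a direct computation using the explicit derivatives $\partial y^{ij}/\partial y_{ab}$ and $\partial \rho/\partial y_{ab}$ shows that this linear map on symmetric $2$-tensors is injective whenever $n\geq 3$, yielding $V^{ij}(x)=0$. The remaining $y_{1}$-quadratic and pure-$y$ components of $V^{(2)}(L_{EH})=0$ are then automatic, completing the proof.

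The main obstacle is the algebraic bookkeeping of the previous paragraph: verifying injectivity of the successive linear operators on the components of $V^{ij}$ and isolating the exact failure mode at $n=2$, where $\Lambda_{EH}$ becomes a total divergence of Gauss--Bonnet type and admits a strictly larger symmetry algebra than the natural lifts $\{X'_M\}$. These calculations are lengthy but follow routinely from the explicit formulas~\eqref{L_EH^ij_rs},~\eqref{L_EH_0},~\eqref{HforEH} and the standard identities for $\partial y^{ij}/\partial y_{ab}$ and $\partial\rho/\partial y_{ab}$; no conceptually new ingredient beyond the reduction to vertical symmetries and the polynomial expansion is required.
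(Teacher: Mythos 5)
Your reduction to $p_M$-vertical symmetries and the expansion of $V^{(2)}(L_{EH})$ by polynomial degree in the jet coordinates are exactly the paper's starting point, and the observation that $(L_{EH})^{ij}_{rs}$ does not depend on $y_{ab,c}$ correctly isolates the coefficient of $y_{ab,cd}$ as the equation [i] of \eqref{SymVert_EH}. The gap is in what you then claim to extract from that coefficient. For each pair of index pairs the vanishing of the coefficient of $y_{ab,cd}$ is the \emph{single} equation
\begin{equation*}
0=\sum_{a\leq b}\left[\tfrac{\partial (L_{EH})^{jk}_{st}}{\partial y_{ab}}\,V^{ab}
+(L_{EH})^{jk}_{ab}\,\tfrac{\partial V^{ab}}{\partial y_{st}}\right],
\end{equation*}
in which both summands are functions on $M$ of the same type; there is no further grading, and hence no ``contraction with test tensors,'' that separates the term in $V^{ab}$ from the term in $\partial V^{ab}/\partial y_{st}$. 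Counting shows this cannot force $\partial V^{ab}/\partial y_{st}=0$: there are $N^2$ equations ($N=\tfrac{1}{2}n(n+1)$) in $N+N^2$ unknowns, so the pointwise kernel has dimension at least $N$. Indeed, since $\det\bigl((L_{EH})^{ij}_{rs}\bigr)=-(n-1)\rho^{\frac{1}{2}(n+1)(n+4)}\neq 0$ for every $n\geq 2$, the only information in [i] is that $\partial V^{ab}/\partial y_{st}$ is \emph{determined linearly by} $V^{cd}$, which is the formula \eqref{partialVab}; it is never forced to vanish. The $n=2$ Liouville field is an explicit witness: it satisfies [i] with $\partial V^{ab}/\partial y_{st}=\delta^{ab}_{st}\neq 0$, and nothing in the structure of [i] changes between $n=2$ and $n\geq 3$. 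Consequently your second step (``the only surviving contribution is $V^{ij}(x)\,\partial(L_{EH})^{cd}_{ab}/\partial y_{ij}$'') never becomes available, and the proof does not close.

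The missing idea is the one the paper supplies: treat \eqref{partialVab} as an overdetermined first-order PDE system for $V^{ab}$ in the fibre variables $y_{st}$ and impose its integrability conditions $\partial^2V^{ab}/\partial y_{st}\partial y_{uv}=\partial^2V^{ab}/\partial y_{uv}\partial y_{st}$. Substituting \eqref{partialVab} back into these conditions produces the purely algebraic linear system \eqref{integrability_conditions} on the $V^{cd}$ alone, with coefficient matrices $\Lambda\cdot\Phi_{st,uv}$, and it is only \emph{there} that the hypothesis $n\geq 3$ enters ($\Phi_{11,23}\neq 0$ for $n=3$, $\det\Phi_{12,34}\neq 0$ for $n\geq 4$), forcing $V^{cd}=0$. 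You would need to add this integrability step (or an equivalent argument) for the proof to be correct; as written, the dimension restriction $n\geq 3$ is attached to a step that is false in every dimension.
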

\begin{proof}
Let $p_{M}\colon M\to N$ be the bundle of pseudo-Riemannian metrics
of a given signature. If $X$ is an infinitesimal symmetry of $\Lambda _{EH}$
and $X^{\prime }$ is its $p_{M}$-projection onto $N$, then $X-X_{M}^{\prime
} $ is a $p_{M}$-vertical symmetry of $\Lambda _{EH}$. Hence, the statement
is equivalent to saying that the only $p_{M}$-vertical symmetry $X$ of the
E-H Lagrangian is the null vector field.

Let $p\colon E\to N$ be a submersion.
If $X=V^{\alpha }\frac{\partial }{\partial y^{\alpha }}$,
$V^{\alpha }\in C^{\infty }(E)$ is an
infinitesimal symmetry of a second-order Lagrangian $L$ with P-C form
projectable onto $J^{1}E$, then $X^{(2)}\left( L\right) =0$, where
\begin{equation*}
X^{(2)}=V^{\alpha }\tfrac{\partial }{\partial y^{\alpha }}
+D_{i}(V^{\alpha })\tfrac{\partial }{\partial y_{i}^{\alpha }}
+\sum\nolimits_{h\leq i}D_{h}D_{i}(V^{\alpha })
\tfrac{\partial }{\partial y_{hi}^{\alpha }}.
\end{equation*}
As
\begin{eqnarray*}
D_{i}(V^{\alpha })
&=&\tfrac{\partial V^{\alpha }}{\partial x^{i}}
+y_{i}^{\rho }\tfrac{\partial V^{\alpha }}{\partial y^{\rho }}, \\
D_{h}D_{i}(V^{\alpha })
&=&\tfrac{\partial ^{2}V^{\alpha }}{\partial x^{h}\partial x^{i}}
+y_{h}^{\beta }
\tfrac{\partial ^{2}V^{\alpha }}{\partial x^{i}\partial y^{\beta }}
+y_{hi}^{\beta }\tfrac{\partial V^{\alpha }}{\partial y^{\beta }}
+y_{i}^{\beta }\left(
\tfrac{\partial ^{2}V^{\alpha }}{\partial x^{h}\partial y^{\beta }}
+y_{h}^{\gamma }
\tfrac{\partial ^{2}V^{\alpha }}{\partial y^{\beta }\partial y^{\gamma }}
\right) ,
\end{eqnarray*}
it follows:
\begin{eqnarray*}
X^{(2)}\left( L\right)
&=&V^{\alpha }\tfrac{\partial L_{0}}{\partial y^{\alpha }}
+\left( \
tfrac{\partial V^{\alpha }}{\partial x^{i}}
+y_{i}^{\rho}\tfrac{\partial V^{\alpha }}{\partial y^{\rho }}
\right)
\tfrac{\partial L_{0}}{\partial y_{i}^{\alpha }} \\
&&
+\left( V^{\alpha }
\tfrac{\partial L_{\beta }^{jk}}{\partial y^{\alpha }}
+\left(
\tfrac{\partial V^{\alpha }}{\partial x^{i}}+y_{i}^{\rho }
\tfrac{\partial V^{\alpha }}{\partial y^{\rho }}
\right)
\tfrac{\partial L_{\beta}^{jk}}{\partial y_{i}^{\alpha }}
+L_{\alpha }^{jk}\tfrac{\partial V^{\alpha }}{\partial y^{\beta }}
\right) y_{jk}^{\beta } \\
&&+\sum_{h\leq i}L_{\alpha }^{hi}
\left(
\tfrac{\partial ^{2}V^{\alpha }}{\partial x^{h}\partial x^{i}}
+y_{h}^{\beta }
\tfrac{\partial ^{2}V^{\alpha }}{\partial x^{i}\partial y^{\beta }}
+y_{i}^{\beta }
\left(
\tfrac{\partial ^{2}V^{\alpha }}{\partial x^{h}\partial y^{\beta }}
+y_{h}^{\gamma }
\tfrac{\partial ^{2}V^{\alpha }}{\partial y^{\beta }\partial y^{\gamma }}
\right)
\right) .
\end{eqnarray*}
Hence, the coefficient of $y_{jk}^{\beta }$ must vanish and we obtain the
following system of partial differential equations:
\begin{eqnarray*}
0 &=&V^{\alpha }\tfrac{\partial L_{\beta }^{jk}}{\partial y^{\alpha }}
+\left( \tfrac{\partial V^{\alpha }}{\partial x^{i}}+y_{i}^{\rho }
\tfrac{\partial V^{\alpha }}{\partial y^{\rho }}\right)
\tfrac{\partial L_{\beta}^{jk}}{\partial y_{i}^{\alpha }}
+L_{\alpha }^{jk}\tfrac{\partial V^{\alpha }}{\partial y^{\beta }}, \\
0 &=&V^{\alpha }\tfrac{\partial L_{0}}{\partial y^{\alpha }}
+\left(
\tfrac{\partial V^{\alpha }}{\partial x^{i}}+y_{i}^{\rho }
\tfrac{\partial V^{\alpha}}{\partial y^{\rho }}
\right)
\tfrac{\partial L_{0}}{\partial y_{i}^{\alpha }} \\
&&+\sum\nolimits_{h\leq i}L_{\alpha }^{hi}
\left(
\tfrac{\partial ^{2}V^{\alpha }}{\partial x^{h}\partial x^{i}}
+y_{h}^{\beta }
\tfrac{\partial ^{2}V^{\alpha }}{\partial x^{i}\partial y^{\beta }}
+y_{i}^{\beta }
\left(
\tfrac{\partial ^{2}V^{\alpha }}{\partial x^{h}\partial y^{\beta }}
+y_{h}^{\gamma }
\tfrac{\partial ^{2}V^{\alpha }}{\partial y^{\beta }\partial
y^{\gamma }}
\right)
\right) .
\end{eqnarray*}
In the case of the E-H Lagrangian, we obtain
\begin{equation}
\begin{array}{rl}
\text{\lbrack i]}
& 0=\sum\limits_{a\leq b}
\left[ \tfrac{\partial
\left(
L_{EH}\right) _{st}^{jk}}{\partial y_{ab}}V^{ab}
+\left( L_{EH}\right)
_{ab}^{jk}\tfrac{\partial V^{ab}}{\partial y_{st}}
\right] ,
\quad j\leq
k,s\leq t, \\
& \multicolumn{1}{r}{\medskip} \\
\text{\lbrack ii]}
& 0=\sum\limits_{a\leq b}\left\{ \tfrac{\partial
\left(
L_{EH}\right) _{0}}{\partial y_{ab}}V^{ab}
+\left(
\tfrac{\partial V^{ab}}{\partial x^{i}}
+\sum\limits_{u\leq v}y_{uv,i}
\tfrac{\partial V^{ab}}{\partial y_{uv}}\right)
\tfrac{\partial \left( L_{EH}\right) _{0}}
{\partial y_{ab,i}}\right.
\smallskip \\
& \multicolumn{1}{r}{+\sum\limits_{h\leq i}
\left( L_{EH}\right) _{ab}^{hi}
\left[
\tfrac{\partial ^{2}V^{ab}}{\partial x^{h}\partial x^{i}}
+\sum\limits_{s\leq t}y_{st,h}
\tfrac{\partial ^{2}V^{ab}}{\partial x^{i}\partial y_{st}}
\right.
\smallskip} \\
& \multicolumn{1}{r}{\left. \left. +\sum\limits_{s\leq t}y_{st,i}
\left(
\tfrac{\partial ^{2}V^{ab}}{\partial x^{h}\partial y_{st}}
+\sum\limits_{u\leq v}y_{uv,h}
\tfrac{\partial ^{2}V^{ab}}{\partial y_{st}\partial y_{uv}}
\right)
\right]
\right\} ,}
\end{array}
\label{SymVert_EH}
\end{equation}
as $\tfrac{\partial \left( L_{EH}\right) _{st}^{jk}}{\partial y_{ab,i}}=0$,
by virtue of (\ref{L_EH^ij_rs}), with
$V=\sum_{a\leq b}V^{ab}\frac{\partial }{\partial y_{ab}}$,
$V^{ab}\in C^{\infty }(M)$. Collecting the terms of degrees
$2$, $1$, and $0$ in the variables $y_{ab,c}$, $a\leq b$, on the
right-hand side of [ii]-(\ref{SymVert_EH}), it breaks into the following
equations:
\begin{equation*}
\begin{array}{rc}
0= & \sum\limits_{a\leq b}
\left\{ \tfrac{\rho }{2}
\left[
\tfrac{\partial
A_{0}^{kl,i;rs,j}}{\partial y_{ab}}V^{ab}
+\left(
A_{0}^{ab,i;rs,j}+A_{0}^{rs,i;ab,j}
\right) \tfrac{\partial V^{ab}}{\partial y_{kl}}
\right]
\right. ,
\smallskip \\
& \multicolumn{1}{r}{\left. +\tfrac{1}{2-\delta _{ij}}\left( L_{EH}\right)
_{ab}^{ij}\tfrac{\partial ^{2}V^{ab}}{\partial y_{rs}\partial y_{kl}}
\right\}
\medskip} \\
& \multicolumn{1}{r}{r\leq s,k\leq l;i,j,k,l,r,s=1,\ldots ,n,}
\end{array}
\end{equation*}
\begin{equation*}
\begin{array}{rc}
0= & \tfrac{2}{2-\delta _{ij}}\left( L_{EH}\right) _{ab}^{ij}
\tfrac{\partial ^{2}V^{ab}}{\partial x^{i}\partial y_{rs}}
+\tfrac{\rho }{2}\left(
A_{0}^{ab,i;rs,j}+A_{0}^{rs,j;ab,i}\right)
\tfrac{\partial V^{ab}}{\partial x^{i}}, \\
& \multicolumn{1}{r}{r\leq s;j,r,s=1,\ldots ,n,}
\end{array}
\end{equation*}
\begin{equation*}
0=\sum\limits_{h\leq i}\sum\limits_{a\leq b}
\left( L_{EH}\right) _{ab}^{hi}
\tfrac{\partial ^{2}V^{ab}}{\partial x^{h}\partial x^{i}},
\end{equation*}
where we have used the notations below,
\begin{align*}
\left( L_{EH}\right) _{0}
& =\tfrac{\rho }{2}\sum\nolimits_{r\leq s}
\sum\nolimits_{k\leq l}A_{0}^{kl,i;rs,j}y_{kl,i}y_{rs,j}, \\
A_{0}^{kl,i;rs,j}& =\sum_{r\leq s}\sum_{k\leq l}
\tfrac{1}{(1+\delta _{kl})(1+\delta _{rs})}
\left( 2y^{rs}\left( y^{ki}y^{jl}+y^{li}y^{jk}
\right)
-2y^{kl}y^{sr}y^{ji}
\right. \\
& +2y^{kl}
\left( y^{jr}y^{si}+y^{js}y^{ri}\right) +3y^{ij}
\left(
y^{kr}y^{ls}+y^{ks}y^{lr}\right) \\
& -y^{ir}\left( y^{ks}y^{jl}+y^{ls}y^{jk}\right)
-y^{is}\left(
y^{kr}y^{jl}+y^{lr}y^{jk}\right) \\
& \left.
-2y^{ki}
\left( y^{sl}y^{jr}+y^{rl}y^{js}
\right)
-2y^{li}\left(
y^{sk}y^{jr}+y^{rk}y^{js}
\right)
\right) .
\end{align*}
Moreover, as a calculation shows, we have
\begin{equation*}
\det
\left(
\left( L_{EH}\right) _{rs}^{ij}
\right) _{1\leq r\leq s\leq n}^{1\leq i\leq j\leq n}
=-(n-1)\rho ^{\frac{1}{2}(n+1)(n+4)},
\end{equation*}
where $\rho $ is defined in (\ref{rho}).
If $\Lambda
=\left( \Lambda _{ab}^{jk}
\right) _{1\leq a\leq b\leq n}^{1\leq j\leq k\leq n}$
is the inverse matrix of
$\left(
\left( L_{EH}\right) _{ab}^{jk}
\right) _{1\leq a\leq b\leq n}^{1\leq j\leq k\leq n}$,
then from (\ref{SymVert_EH})-[i] for $h\leq i$, it follows:
\begin{equation}
\tfrac{\partial V^{ab}}{\partial y_{st}}
=-\sum\nolimits_{c\leq d}\sum\nolimits_{p\leq q}
\Lambda _{pq}^{ab}\tfrac{\partial \left(
L_{EH}\right) _{st}^{pq}}{\partial y_{cd}}V^{cd},
\quad a\leq b,s\leq t,
\label{partialVab}
\end{equation}
and by imposing the integrability conditions to these equations we obtain
\begin{equation*}
\begin{array}{rc}
0= & \sum\limits_{a\leq b}\sum\limits_{j\leq k}
\left[ \left(
\tfrac{\partial \Lambda _{jk}^{hi}}{\partial y_{uv}}
\tfrac{\partial \left( L_{EH}\right) _{st}^{jk}}{\partial y_{ab}}
-\tfrac{\partial \Lambda _{jk}^{hi}}{\partial y_{st}}
\tfrac{\partial \left( L_{EH}\right) _{uv}^{jk}}{\partial y_{ab}}
\right. \right.
\smallskip \\
& \multicolumn{1}{r}{\left. +\Lambda _{jk}^{hi}
\left\{
\tfrac{\partial ^{2}\left( L_{EH}\right) _{st}^{jk}}{\partial y_{ab}\partial y_{uv}}
-\tfrac{\partial ^{2}\left( L_{EH}\right) _{uv}^{jk}}{\partial y_{ab}\partial y_{st}}
\right\}
\right) V^{ab}
\smallskip} \\
& \multicolumn{1}{r}{\left. +\Lambda _{jk}^{hi}\left\{ \tfrac{\partial
\left( L_{EH}\right) _{st}^{jk}}{\partial y_{ab}}
\tfrac{\partial V^{ab}}{\partial y_{uv}}
-\tfrac{\partial \left( L_{EH}\right) _{uv}^{jk}}{\partial y_{ab}}
\tfrac{\partial V^{ab}}{\partial y_{st}}\right\} \right]}
\end{array}
\end{equation*}
and substituting (\ref{partialVab}) in the previous equation, we eventually
have
\begin{eqnarray}
0\text{\negthinspace \negthinspace }
&=&\!\!\!\sum\limits_{c\leq d}\sum\limits_{j\leq k}
\left[
\!\tfrac{\partial \Lambda _{jk}^{hi}}{\partial y_{uv}}
\tfrac{\partial \left( L_{EH}\right) _{st}^{jk}}{\partial y_{cd}}
-\tfrac{\partial \Lambda _{jk}^{hi}}{\partial y_{st}}
\tfrac{\partial \left( L_{EH}\right) _{uv}^{jk}}{\partial y_{cd}}
\right.
\label{integrability_conditions} \\
&&\!\!\!+\Lambda _{jk}^{hi}
\left( \!\tfrac{\partial ^{2}\left( L_{EH}\right) _{st}^{jk}}
{\partial y_{cd}\partial y_{uv}}
-\tfrac{\partial ^{2}\left( L_{EH}\right) _{uv}^{jk}}
{\partial y_{cd}\partial y_{st}}\!\right) \notag \\
&&\!\!\!\left. +\Lambda _{jk}^{hi}
\sum\limits_{a\leq b}\sum_{p\leq q}\Lambda _{pq}^{ab}
\left( \tfrac{\partial \left( L_{EH}\right) _{st}^{pq}}{\partial y_{cd}}
\tfrac{\partial \left( L_{EH}\right) _{uv}^{jk}}{\partial y_{ab}}
-\tfrac{\partial \left( L_{EH}\right) _{st}^{jk}}{\partial y_{ab}}
\tfrac{\partial \left( L_{EH}\right) _{uv}^{pq}}{\partial y_{cd}}
\right)
\!\right]
V^{cd}.  \notag
\end{eqnarray}
Furthermore, from $\frac{\partial \Lambda }{\partial y_{pq}}
=-\Lambda \cdot \frac{\partial L}{\partial y_{pq}}\cdot \Lambda $,
it follows:
\begin{equation*}
\frac{\partial \Lambda _{jk}^{hi}}{\partial y_{uv}}
=-\sum_{\zeta \leq \eta }
\sum_{\rho \leq \sigma }\Lambda _{\zeta \eta }^{hi}
\frac{\partial \left( L_{EH}\right) _{\rho \sigma }^{\zeta \eta }}
{\partial y_{uv}}\Lambda _{jk}^{\rho \sigma }.
\end{equation*}
Hence
\begin{multline*}
\tfrac{\partial \Lambda _{jk}^{hi}}{\partial y_{uv}}
\tfrac{\partial \left( L_{EH}\right) _{st}^{jk}}{\partial y_{cd}}
-\tfrac{\partial \Lambda _{jk}^{hi}}{\partial y_{st}}
\tfrac{\partial \left( L_{EH}\right) _{uv}^{jk}}{\partial y_{cd}} \\
=\sum_{\zeta \leq \eta }\sum_{\rho \leq \sigma }
\Lambda _{\zeta \eta }^{hi}
\left(
-\tfrac{\partial \left( L_{EH}\right) _{\rho \sigma }^{\zeta \eta }}
{\partial y_{uv}}
\Lambda _{jk}^{\rho \sigma }
\tfrac{\partial \left( L_{EH}\right) _{st}^{jk}}{\partial y_{cd}}
+\tfrac{\partial \left( L_{EH}\right) _{\rho \sigma }^{\zeta \eta }}
{\partial y_{st}}\Lambda _{jk}^{\rho \sigma }
\tfrac{\partial \left( L_{EH}\right) _{uv}^{jk}}{\partial y_{cd}}
\right) ,
\end{multline*}
and letting
\begin{eqnarray*}
\Phi _{st,uv,cd}^{jk}
&=&\tfrac{\partial ^{2}\left( L_{EH}\right) _{st}^{jk}}
{\partial y_{cd}\partial y_{uv}}
-\tfrac{\partial ^{2}\left( L_{EH}\right)
_{uv}^{jk}}{\partial y_{cd}\partial y_{st}} \\
&&+\sum_{a\leq b}\sum_{p\leq q}\Lambda _{pq}^{ab}
\left(
\left( \tfrac{\partial \left( L_{EH}\right) _{ab}^{jk}}{\partial y_{st}}
-\tfrac{\partial \left( L_{EH}\right) _{st}^{jk}}{\partial y_{ab}}
\right)
\tfrac{\partial \left( L_{EH}\right) _{uv}^{pq}}{\partial y_{cd}}
\right. \\
&&\left.
+\left(
\tfrac{\partial \left( L_{EH}\right) _{uv}^{jk}}{\partial y_{ab}}
-\tfrac{\partial \left( L_{EH}\right) _{ab}^{jk}}{\partial y_{uv}}
\right)
\tfrac{\partial \left( L_{EH}\right) _{st}^{pq}}{\partial y_{cd}}
\right)
\end{eqnarray*}
the equations (\ref{integrability_conditions}) transforms into the
following:
\begin{equation*}
0=\sum\limits_{c\leq d}\left( \Lambda \cdot \Phi _{st,uv}\right)
_{cd}^{hi}V^{cd},\quad h\leq i,s\leq t,u\leq v,
\end{equation*}
where $\Phi _{st,uv}$ is the matrix $\left( \Phi _{st,uv}\right)
_{cd}^{jk}=\Phi _{st,uv,cd}^{jk}$, for every $s\leq t$, $u\leq v$.
As $\dim \Phi _{11,23}\neq 0$ for $n=3$ and $\det \Phi _{12,34}\neq 0$
for $n\geq 4$, it follows $V^{cd}=0$.
\end{proof}
\begin{remark}
For $n=2$ the E-H Lagrangian density is known to be a conformally invariant
$2$-form; hence $\Lambda _{EH}$ admits---in this dimension---the Liouville
vector field as a vertical infinitesimal symmetry.
\end{remark}
\section{Jacobi fields and presymplectic structure\label{ss4.3}}
Let $V(p)\subset TE$ be the sub-bundle of $p$-vertical tangent vectors for
the submersion $p\colon E\to N$. The infinitesimal variation of a
one-parameter variation $S_{t}$ of a section $s\colon N\to E$ is the
$p$-vertical vector field along $s$, $X\in \Gamma (N,s^{\ast }V(p))$,
defined by the formula, $X_{x}=$ tangent vector at $t=0$ to the curve
$t\mapsto S_{t}(x)$, $\forall x\in N$. On a fibred coordinate system
$(x^{i},y^{\alpha })$ we have
\begin{equation}
X_{x}
=\tfrac{\partial (y^{\alpha }\circ S)}{\partial t}(0,x)
\Bigl(\tfrac{\partial }{\partial y^{\alpha }}\Bigr)_{s(x)},
\quad \forall x\in N.
\label{f4}
\end{equation}
Let $\mathcal{S}$ be the sheaf of extremals of a second-order Lagrangian
density $\Lambda =Lv$ on $p\colon E\to N$ whose Poincar\'e-Cartan
form projects onto $J^{1}E$: For every open subset $U\subseteq N$ we denote
by $\mathcal{S}(U)$ the set of solutions to the Euler-Lagrange equations
of $\Lambda $, which are defined on $U$. As is well known (\cite{Garcia},
\cite{GoS}, \cite{Saunders}) in the Hamiltonian formalism extremals can be
characterized as the solutions to the Hamilton-Cartan equation; that is, $s$
is an extremal if and only if $(j^{1}s)^{\ast }(i_{Y}d\Theta _{\Lambda })=0$
for all $Y\in \mathfrak{X}(J^{1}E)$. Jacobi fields are the solutions to the
linearized Hamilton-Cartan equation. Precisely, a Jacobi field along an
extremal $s\in \mathcal{S}(U)$ is a $p$-vertical vector field defined along
$s$, $X\in \Gamma (U,s^{\ast }V(p))$, satisfying the Jacobi equation
$(j^{1}s)^{\ast }(i_{Y}L_{X^{(1)}}d\Theta _{\Lambda })=0$,
$\forall Y\in \mathfrak{X}(J^{1}(p^{-1}U))$, where $X^{(1)}$ is the first-order
infinitesimal contact transformation on $J^{1}E$ associated to $X$ (e.g.,
see \cite{Mu1}, \cite{MP}, \cite{Saunders}). If
$X_{x}=V^{\alpha }(x)\left( \frac{\partial }{\partial y^{\alpha }}\right) _{s(x)}$,
then
\begin{equation*}
\left( X^{(1)}\right) _{j_{x}^{1}s}
=V^{\alpha }(x)
\left( \tfrac{\partial }{\partial y^{\alpha }}
\right) _{j_{x}^{1}s}+\tfrac{\partial V^{\alpha }}{\partial x^{j}}(x)
\left( \tfrac{\partial }{\partial y_{j}^{\alpha }}
\right) _{j_{x}^{1}s}.
\end{equation*}
In fact, it is readily checked that if $S_{t}$ is a one-parameter variation
of $s$ and $S_{t}$ is an extremal for every $t$, then the infinitesimal
variation $X$ of $S_{t}$ (see \eqref{f4}) satisfies the Jacobi equation.
Hence we think of the Jacobi fields along $s$ as being the tangent space at
$s$ to the \textquotedblleft manifold\textquotedblright\ $\mathcal{S}(U)$ of
extremals and accordingly we denote it by $T_{s}\mathcal{S}(U)$. Let
$s\colon N\to E$ be an extremal of a Lagrangian density $\Lambda $
defined on $J^{1}E$.

In a fibred coordinate system $(x^{i},y^{\alpha })$ a vector field
$X\in \Gamma (U,s^{\ast }V(p))$ along an extremal $s$ is a Jacobi field if and
only if
$(j^{1}s)^{\ast }(i_{\partial /\partial y^{\alpha }}L_{X^{(1)}}
d\Theta _{\Lambda })=0$, for $1\leq \alpha \leq m$ (see \cite[section 3.5]{MR0}).
By using the formulas (\ref{p's}), (\ref{H}), and (\ref{differential_P-C_bis}),
we obtain
\begin{eqnarray*}
L_{X^{(1)}}d\Theta _{\Lambda }
&=&L_{X^{(1)}}
\left\{
(-1)^{i-1}dp_{\alpha }^{i}\wedge dy^{\alpha }\wedge v_{i}+dH\wedge v
\right\} \\
&=&(-1)^{i-1}d\left(
X^{(1)}p_{\alpha }^{i}
\right)
\wedge dy^{\alpha }\wedge v_{i} \\
&&+(-1)^{i-1}dp_{\alpha }^{i}\wedge dV^{\alpha }\wedge v_{i}
+d\left( X^{(1)}H\right) \wedge v.
\end{eqnarray*}
Hence
\begin{eqnarray*}
i_{\partial /\partial y^{\alpha }}L_{X^{(1)}}d\Theta _{\Lambda }
&=&(-1)^{i-1}\tfrac{\partial X^{(1)}(p_{\beta }^{i})}{\partial y^{\alpha }}
dy^{\beta }\wedge v_{i}
-\tfrac{\partial \left( X^{(1)}(p_{\alpha }^{i})\right) }{\partial x^{i}}v \\
&&-(-1)^{i-1}
\tfrac{\partial \left( X^{(1)}(p_{\alpha }^{i})\right) }{\partial y^{\beta }}
dy^{\beta }\wedge v_{i}-(-1)^{i-1}
\tfrac{\partial \left( X^{(1)}(p_{\alpha }^{i})\right) }
{\partial y_{j}^{\beta }}
dy_{j}^{\beta }\wedge v_{i} \\
&&+\tfrac{\partial p_{\beta }^{i}}{\partial y^{\alpha }}
\tfrac{\partial V^{\beta }}{\partial x^{i}}v
+\tfrac{\partial X^{(1)}(H)}{\partial y^{\alpha }}v,
\end{eqnarray*}
and finally,
\begin{eqnarray*}
0 &=&
\tfrac{\partial s^{\beta }}{\partial x^{i}}
\left\{ \left(
\tfrac{\partial X^{(1)}(p_{\beta }^{i})}{\partial y^{\alpha }}\circ j^{1}s
\right)
-\left(
\tfrac{\partial \left( X^{(1)}(p_{\alpha }^{i})\right) }
{\partial y^{\beta }}
\circ j^{1}
s\right)
\right\} \\
&&+\left(
\tfrac{\partial X^{(1)}(H)}{\partial y^{\alpha }}\circ j^{1}s
\right)
-\left( \tfrac{\partial \left( X^{(1)}(p_{\alpha }^{i})\right)
}{\partial x^{i}}\circ j^{1}s
\right) \\
&&-\left( \tfrac{\partial \left( X^{(1)}(p_{\alpha }^{i})\right) }
{\partial y_{j}^{\beta }}\circ j^{1}s\right)
\tfrac{\partial ^{2}s^{\beta }}{\partial x^{i}\partial x^{j}}
+\tfrac{\partial V^{\beta }}{\partial x^{i}}
\left(
\tfrac{\partial p_{\beta }^{i}}{\partial y^{\alpha }}\circ j^{1}s
\right) .
\end{eqnarray*}
Expanding,
\begin{equation}
\begin{array}{rl}
\tfrac{\partial ^{2}V^{\gamma }}{\partial x^{i}\partial x^{j}}
\left(
\tfrac{\partial p_{\alpha }^{i}}{\partial y_{j}^{\gamma }}\circ j^{1}s
\right)
= &
V^{\gamma }
\left\{ \tfrac{\partial s^{\beta }}{\partial x^{i}}
\left(
\tfrac{\partial ^{2}p_{\beta }^{i}}
{\partial y^{\alpha }\partial y^{\gamma }}
\circ j^{1}s-\tfrac{\partial ^{2}p_{\alpha }^{i}}
{\partial y^{\beta }\partial y^{\gamma }}
\circ j^{1}s
\right)
\right.
\smallskip \\
& \multicolumn{1}{r}{+\tfrac{\partial ^{2}H}{\partial y^{\alpha }
\partial y^{\gamma }}\circ j^{1}s
-\tfrac{\partial ^{2}s^{\beta }}{\partial x^{i}\partial x^{j}}
\left( \tfrac{\partial ^{2}p_{\alpha }^{i}}
{\partial y^{\gamma }\partial y_{j}^{\beta }}
\circ j^{1}s
\right)
\smallskip} \\
& \multicolumn{1}{r}{\left. -\tfrac{\partial ^{2}p_{\alpha }^{i}}
{\partial x^{i}\partial y^{\gamma }}\circ j^{1}s
\right\}
\smallskip} \\
& \multicolumn{1}{r}{+\tfrac{\partial V^{\gamma }}{\partial x^{h}}
\left\{
\tfrac{\partial s^{\beta }}{\partial x^{i}}
\left( \tfrac{\partial ^{2}p_{\beta }^{i}}
{\partial y^{\alpha }\partial y_{h}^{\gamma }}
\circ j^{1}s
-\tfrac{\partial ^{2}p_{\alpha }^{i}}
{\partial y^{\beta }\partial y_{h}^{\gamma }}
\circ j^{1}s
\right)
\right. \smallskip} \\
& \multicolumn{1}{r}{-\tfrac{\partial ^{2}s^{\beta }}
{\partial x^{i}\partial x^{j}}
\left( \tfrac{\partial ^{2}p_{\alpha }^{i}}
{\partial y_{j}^{\beta }\partial y_{h}^{\gamma }}
\circ j^{1}s\right)
+\left(
\tfrac{\partial p_{\gamma }^{h}}
{\partial y^{\alpha }}
-\tfrac{\partial p_{\alpha }^{h}}{\partial y^{\gamma }}
\right)
\circ j^{1}s\smallskip} \\
& \multicolumn{1}{r}{\left.
-\tfrac{\partial ^{2}p_{\alpha }^{i}}
{\partial x^{i}\partial y_{h}^{\gamma }}
\circ j^{1}s+\tfrac{\partial ^{2}H}
{\partial y^{\alpha }\partial y_{h}^{\gamma }}
\circ j^{1}s\right\} ,
\medskip} \\
& \multicolumn{1}{r}{1\leq \alpha \leq m.}
\end{array}
\label{Jacobi1}
\end{equation}
\begin{remark}
In the case of the E-H Lagrangian density, Greek indices of the general case
transform into a pair of non-decreasing Latin indices: $\alpha =(a,b)$,
$1\leq a\leq b\leq n$, and a Jacobi vector field along $g$ can locally be
written as follows:
\begin{eqnarray*}
X_{x} &=&\sum\nolimits_{a\leq b}V^{ab}(x)\left(
\tfrac{\partial }{\partial y_{ab}}
\right) _{g(x)} \\
&=&\sum\nolimits_{a\leq b}V^{ab}(x)(dx^{a})_{x}\odot (dx^{b})_{x},
\quad \forall x\in N,
\end{eqnarray*}
with $V^{ab}=V^{ba}$ for $a>b$. Moreover, in this case, the general
equations (\ref{Jacobi1}) for Jacobi fields can also be written as follows:
{\footnotesize
\begin{equation}
\begin{array}{ll}
{\normalsize 0\!=\!} & {\normalsize \!\!\!}\tfrac{1}{2}
\left[ \left(
\delta _{a\nu }\delta _{j\mu }+\delta _{a\mu }\delta _{\nu j}
\right)
g^{ib}-g^{ij}\delta _{a\nu }\delta _{b\mu }-g^{ab}\delta _{i\nu }
\delta _{j\mu }
\right] \frac{\partial ^{2}V^{ab}}{\partial x^{i}\partial x^{j}}
{\normalsize \smallskip } \\
\multicolumn{1}{r}{} & {\normalsize \!\!\!+}\left\{
\!\tfrac{1}{2}g^{ab}
\left( \Gamma ^{g}\right) _{\mu \nu }^{i}\!-\!g^{ib}
\left(
\Gamma ^{g}\right) _{\mu \nu }^{a}
\!+\!\tfrac{\delta _{a\nu }\delta _{i\mu }
+\delta _{a\mu }\delta _{i\nu }}{2}
\!\left[
g^{\sigma b}
\left(
\Gamma ^{g}\right) _{\lambda \sigma }^{\lambda }
\!-\!g^{\lambda \sigma }
\frac{\partial g_{\sigma \beta }}
{\partial x^{\lambda }}g^{b\beta }\!
\right] \right.
{\normalsize \smallskip } \\
\multicolumn{1}{r}{} & {\normalsize \!\!\!-}\tfrac{\delta _{a\mu }
\delta _{b\nu }}{2}
\left[
g^{\sigma i}
\left( \Gamma ^{g}
\right) _{\lambda \sigma }^{\lambda }
-g^{\lambda \sigma }\frac{\partial g_{\sigma \beta }}
{\partial x^{\lambda }}g^{i\beta }
\right]
{\normalsize +}\tfrac{\delta _{i\nu }}{2}
{\normalsize g}^{\lambda a}
\left( \Gamma ^{g}\right) _{\mu \lambda }^{b}
{\normalsize +}\tfrac{\delta _{i\mu }}{2}{\normalsize g}^{\lambda a}
\left(
\Gamma ^{g}\right) _{\lambda \nu }^{b}{\normalsize \smallskip } \\
\multicolumn{1}{r}{}
& {\normalsize \!\!\!}\left. +\tfrac{\delta _{b\nu }}{2}
\left[ g^{\lambda i}\left( \Gamma ^{g}\right) _{\mu \lambda }^{a}
-g^{\lambda
a}\left( \Gamma ^{g}\right) _{\mu \lambda }^{i}\right]
+\tfrac{\delta _{b\mu }}{2}
\left[
g^{\lambda i}\left( \Gamma ^{g}\right) _{\nu \lambda }^{a}
-g^{\lambda a}\left( \Gamma ^{g}\right) _{\nu \lambda }^{i}
\right]
\right\} \frac{\partial V^{ab}}{\partial x^{i}}{\normalsize
\smallskip } \\
\multicolumn{1}{r}{} & {\normalsize \!\!\!+g}^{\lambda b}
\left\{ (R^{g})_{\mu \nu \lambda }^{a}+g^{ar}g_{t\sigma }
\left(
\left( \Gamma ^{g}\right) _{r\nu }^{t}
\left( \Gamma ^{g_{0}}\right) _{\mu \lambda
}^{\sigma }-\left( \Gamma ^{g}\right) _{r\lambda }^{t}
\left( \Gamma ^{g}\right) _{\mu \nu }^{\sigma }
\right)
\right.
{\normalsize \smallskip }
\\
\multicolumn{1}{r}{}
& {\normalsize \!\!\!}\left.
+\left( \Gamma ^{g}\right) _{\nu \sigma }^{a}
\left( \Gamma ^{g}\right) _{\mu \lambda }^{\sigma }
\!-\!\left( \Gamma ^{g}\right) _{\lambda \sigma }^{a}
\left( \Gamma ^{g}\right) _{\mu \nu }^{\sigma }
\!-\!\left( \Gamma ^{g}\right) _{\sigma \lambda }^{\sigma }
\left( \Gamma ^{g}\right) _{\mu \nu }^{a}
\!+\!\left( \Gamma ^{g}\right) _{\mu \sigma }^{a}
\left( \Gamma ^{g}\right) _{\nu \lambda }^{\sigma }\!\right\}
{\normalsize V}^{ab}{\normalsize ,} \\
& {\normalsize 1\leq \mu \leq \nu \leq n,}
\end{array}
\label{JacobiEH}
\end{equation}
}where{\footnotesize }$\Gamma ^{g}$ denotes the Levi-Civita connection
of $g$, and $R^{g}$ its curvature tensor.
\end{remark}
\begin{example}
\label{example1}If $(N,g)$ is a flat $4$-dimensional Lorentzian manifold,
then locally, $g=\varepsilon _{i}(dx^{i})^{2}$, with $\varepsilon _{1}=-1$,
$\varepsilon _{2}=\varepsilon _{3}=\varepsilon _{4}=+1$, and the equations
(\ref{JacobiEH}) of the Jacobi fields along $g$ are as follows:
\begin{equation}
\sum\nolimits_{B=1}^{10}P_{B}^{A}(D)U^{B}=0,\quad 1\leq A\leq 10,
\label{JacobiEHflat}
\end{equation}
where
\begin{equation*}
\begin{array}{ccccc}
U^{1}=V^{11}, & U^{2}=V^{12}, & U^{3}=V^{13}, & U^{4}=V^{14}, & U^{5}=V^{22},
\\
U^{6}=V^{23}, & U^{7}=V^{24}, & U^{8}=V^{33}, & U^{9}=V^{34}, &
U^{10}=V^{44},
\end{array}
\end{equation*}
\begin{equation*}
\begin{array}{lll}
{\footnotesize P}_{1}^{1}{\footnotesize =}\frac{-1}{2}\sum_{i=2}^{4}
{\footnotesize (D}^{i}{\footnotesize )}^{2}, & {\footnotesize P}_{2}^{1}
{\footnotesize =D}^{1}{\footnotesize D}^{2}{\footnotesize ,} &
{\footnotesize P}_{3}^{1}{\footnotesize =D}^{1}{\footnotesize D}^{3}
{\footnotesize ,\smallskip } \\
{\footnotesize P}_{4}^{1}{\footnotesize =D}^{1}{\footnotesize D}^{4}
{\footnotesize ,} & {\footnotesize P}_{5}^{1}{\footnotesize =}\frac{-1}{2}
{\footnotesize (D}^{1}{\footnotesize )}^{2}{\footnotesize ,} &
{\footnotesize P}_{6}^{1}{\footnotesize =0,\smallskip } \\
{\footnotesize P}_{7}^{1}{\footnotesize =0,} & {\footnotesize P}_{8}^{1}
{\footnotesize =}\frac{-1}{2}{\footnotesize (D}^{1}{\footnotesize )}^{2}
{\footnotesize ,} & {\footnotesize P}_{9}^{1}{\footnotesize =0,\smallskip }
\\
{\footnotesize P}_{10}^{1}{\footnotesize =}\frac{-1}{2}{\footnotesize (D}^{1}
{\footnotesize )}^{2}{\footnotesize ,} & {\footnotesize P}_{1}^{2}
{\footnotesize =0,} & {\footnotesize P}_{2}^{2}{\footnotesize =}\frac{-1}{2}
{\footnotesize (D}^{3}{\footnotesize )}^{2}{\footnotesize -}\frac{1}{2}
{\footnotesize (D}^{4}{\footnotesize )}^{2}{\footnotesize ,\smallskip } \\
{\footnotesize P}_{3}^{2}{\footnotesize =}\frac{1}{2}{\footnotesize D}^{2}
{\footnotesize D}^{3}{\footnotesize ,} & {\footnotesize P}_{4}^{2}
{\footnotesize =}\frac{1}{2}{\footnotesize D}^{2}{\footnotesize D}^{4}
{\footnotesize ,} & {\footnotesize P}_{5}^{2}{\footnotesize =0,\smallskip }
\\
{\footnotesize P}_{6}^{2}{\footnotesize =}\frac{1}{2}{\footnotesize D}^{1}
{\footnotesize D}^{3}{\footnotesize ,} & {\footnotesize P}_{7}^{2}
{\footnotesize =}\frac{1}{2}{\footnotesize D}^{1}{\footnotesize D}^{4}
{\footnotesize ,} & {\footnotesize P}_{8}^{2}{\footnotesize =}\frac{-1}{2}
{\footnotesize D}^{1}{\footnotesize D}^{2}{\footnotesize ,\smallskip } \\
{\footnotesize P}_{9}^{2}{\footnotesize =0,} & {\footnotesize P}_{10}^{2}
{\footnotesize =}\frac{-1}{2}{\footnotesize D}^{1}{\footnotesize D}^{2}
{\footnotesize ,} & {\footnotesize P}_{1}^{3}{\footnotesize =0,\smallskip }
\\
{\footnotesize P}_{2}^{3}{\footnotesize =}\frac{1}{2}{\footnotesize D}^{2}
{\footnotesize D}^{3}{\footnotesize ,} & {\footnotesize P}_{3}^{3}
{\footnotesize =}\frac{-1}{2}{\footnotesize (D}^{2}{\footnotesize )}^{2}
{\footnotesize -}\frac{1}{2}{\footnotesize (D}^{4}{\footnotesize )}^{2}
{\footnotesize ,} & {\footnotesize P}_{4}^{3}{\footnotesize =}\frac{1}{2}
{\footnotesize D}^{3}{\footnotesize D}^{4}{\footnotesize ,\smallskip } \\
{\footnotesize P}_{5}^{3}{\footnotesize =}\frac{-1}{2}{\footnotesize D}^{1}
{\footnotesize D}^{3}{\footnotesize ,} & {\footnotesize P}_{6}^{3}
{\footnotesize =}\frac{1}{2}{\footnotesize D}^{1}{\footnotesize D}^{2}
{\footnotesize ,} & {\footnotesize P}_{7}^{3}{\footnotesize =0,\smallskip }
\\
{\footnotesize P}_{8}^{3}{\footnotesize =0,} & {\footnotesize P}_{9}^{3}
{\footnotesize =}\frac{1}{2}{\footnotesize D}^{1}{\footnotesize D}^{4}
{\footnotesize ,} & {\footnotesize P}_{10}^{3}{\footnotesize =}\frac{-1}{2}
{\footnotesize D}^{1}{\footnotesize D}^{3}{\footnotesize ,\smallskip } \\
{\footnotesize P}_{1}^{4}{\footnotesize =0,} & {\footnotesize P}_{2}^{4}
{\footnotesize =}\frac{1}{2}{\footnotesize D}^{2}{\footnotesize D}^{4}
{\footnotesize ,} & {\footnotesize P}_{3}^{4}{\footnotesize =}\frac{1}{2}
{\footnotesize D}^{3}{\footnotesize D}^{4}{\footnotesize ,\smallskip } \\
{\footnotesize P}_{4}^{4}{\footnotesize =}\frac{-1}{2}{\footnotesize (D}^{2}
{\footnotesize )}^{2}{\footnotesize -}\frac{1}{2}{\footnotesize (D}^{3}
{\footnotesize )}^{2}{\footnotesize ,} & {\footnotesize P}_{5}^{4}
{\footnotesize =}\frac{-1}{2}{\footnotesize D}^{1}{\footnotesize D}^{4}
{\footnotesize ,} & {\footnotesize P}_{6}^{4}{\footnotesize =0,\smallskip }
\\
{\footnotesize P}_{7}^{4}{\footnotesize =}\frac{1}{2}{\footnotesize D}^{1}
{\footnotesize D}^{2}{\footnotesize ,} & {\footnotesize P}_{8}^{4}
{\footnotesize =}\frac{-1}{2}{\footnotesize D}^{1}{\footnotesize D}^{4}
{\footnotesize ,} & {\footnotesize P}_{9}^{4}{\footnotesize =}\frac{1}{2}
{\footnotesize D}^{1}{\footnotesize D}^{3}{\footnotesize ,\smallskip } \\
{\footnotesize P}_{10}^{4}{\footnotesize =0,} & {\footnotesize P}_{1}^{5}
{\footnotesize =}\frac{1}{2}{\footnotesize D}^{2}{\footnotesize D}^{2} &
{\footnotesize P}_{2}^{5}{\footnotesize =-D}^{1}{\footnotesize D}^{2}
{\footnotesize \smallskip } \\
{\footnotesize P}_{3}^{5}{\footnotesize =0,} & {\footnotesize P}_{4}^{5}
{\footnotesize =0,} & {\footnotesize P}_{5}^{5}{\footnotesize =}\frac{1}{2}
{\footnotesize (D}^{1}{\footnotesize )}^{2}{\footnotesize -}\frac{1}{2}
\sum_{i=3}^{4}{\footnotesize (D}^{i}{\footnotesize )}^{2}{\footnotesize
,\smallskip } \\
{\footnotesize P}_{6}^{5}{\footnotesize =D}^{2}{\footnotesize D}^{3}
{\footnotesize ,} & {\footnotesize P}_{7}^{5}{\footnotesize =D}^{2}
{\footnotesize D}^{4}{\footnotesize ,} & {\footnotesize P}_{8}^{5}
{\footnotesize =}\frac{-1}{2}{\footnotesize D}^{2}{\footnotesize D}^{2}
{\footnotesize ,\smallskip } \\
{\footnotesize P}_{9}^{5}{\footnotesize =0,} & {\footnotesize P}_{10}^{5}
{\footnotesize =}\frac{-1}{2}{\footnotesize D}^{2}{\footnotesize D}^{2}
{\footnotesize ,} & {\footnotesize P}_{1}^{6}{\footnotesize =}\frac{1}{2}
{\footnotesize D}^{2}{\footnotesize D}^{3}{\footnotesize ,}
\end{array}
\end{equation*}
\begin{equation*}
\begin{array}{lll}
{\footnotesize P}_{2}^{6}{\footnotesize =}\frac{-1}{2}{\footnotesize D}^{1}
{\footnotesize D}^{3}{\footnotesize ,} & {\footnotesize P}_{3}^{6}
{\footnotesize =}\frac{-1}{2}{\footnotesize D}^{1}{\footnotesize D}^{2}
{\footnotesize ,} & {\footnotesize P}_{4}^{6}{\footnotesize =0,\smallskip }
\\
{\footnotesize P}_{5}^{6}{\footnotesize =0,} & {\footnotesize P}_{6}^{6}
{\footnotesize =}\frac{1}{2}{\footnotesize (D}^{1}{\footnotesize )}^{2}
{\footnotesize -}\frac{1}{2}{\footnotesize (D}^{4}{\footnotesize )}^{2}
{\footnotesize ,} & {\footnotesize P}_{7}^{6}{\footnotesize =}\frac{1}{2}
{\footnotesize D}^{3}{\footnotesize D}^{4}{\footnotesize ,\smallskip } \\
{\footnotesize P}_{8}^{6}{\footnotesize =0,} & {\footnotesize P}_{9}^{6}
{\footnotesize =}\frac{1}{2}{\footnotesize D}^{2}{\footnotesize D}^{4}
{\footnotesize ,} & {\footnotesize P}_{10}^{6}{\footnotesize =}\frac{-1}{2}
{\footnotesize D}^{2}{\footnotesize D}^{3}{\footnotesize ,\smallskip } \\
{\footnotesize P}_{1}^{7}{\footnotesize =}\frac{1}{2}{\footnotesize D}^{2}
{\footnotesize D}^{4}{\footnotesize ,} & {\footnotesize P}_{2}^{7}
{\footnotesize =}\frac{-1}{2}{\footnotesize D}^{1}{\footnotesize D}^{4}
{\footnotesize ,} & {\footnotesize P}_{3}^{7}{\footnotesize =0,\smallskip }
\\
{\footnotesize P}_{4}^{7}{\footnotesize =}\frac{-1}{2}{\footnotesize D}^{1}
{\footnotesize D}^{2}{\footnotesize ,} & {\footnotesize P}_{5}^{7}
{\footnotesize =0,} & {\footnotesize P}_{6}^{7}{\footnotesize =}\frac{1}{2}
{\footnotesize D}^{3}{\footnotesize D}^{4}{\footnotesize ,\smallskip } \\
{\footnotesize P}_{7}^{7}{\footnotesize =}\frac{1}{2}{\footnotesize (D}^{1}
{\footnotesize )}^{2}{\footnotesize -}\frac{1}{2}{\footnotesize (D}^{3}
{\footnotesize )}^{2}{\footnotesize ,} & {\footnotesize P}_{8}^{7}
{\footnotesize =}\frac{-1}{2}{\footnotesize D}^{2}{\footnotesize D}^{4}
{\footnotesize ,} & {\footnotesize P}_{9}^{7}{\footnotesize =}\frac{1}{2}
{\footnotesize D}^{2}{\footnotesize D}^{3}{\footnotesize ,\smallskip } \\
{\footnotesize P}_{10}^{7}{\footnotesize =0,} & {\footnotesize P}_{1}^{8}
{\footnotesize =}\frac{1}{2}{\footnotesize (D}^{3}{\footnotesize )}^{2}
{\footnotesize ,} & {\footnotesize P}_{2}^{8}{\footnotesize =0,\smallskip }
\\
{\footnotesize P}_{3}^{8}{\footnotesize =-D^{1}D^{3},} & {\footnotesize P}
_{4}^{8}{\footnotesize =0,} & {\footnotesize P}_{5}^{8}{\footnotesize
=-\frac{1}{2}(D^{3})^{2},\smallskip } \\
{\footnotesize P}_{6}^{8}{\footnotesize =D}^{2}{\footnotesize D}^{3}
{\footnotesize ,} & {\footnotesize P}_{7}^{8}{\footnotesize =0,} &
{\footnotesize P}_{8}^{8}{\footnotesize =}\frac{1}{2}{\footnotesize (D}^{1}
{\footnotesize )}^{2}{\footnotesize -}\frac{1}{2}{\footnotesize (D}^{2}
{\footnotesize )}^{2}{\footnotesize ,\smallskip } \\
{\footnotesize P}_{9}^{8}{\footnotesize =D}^{3}{\footnotesize D}^{4}
{\footnotesize ,} & {\footnotesize P}_{10}^{8}{\footnotesize =}\frac{-1}{2}
{\footnotesize (D}^{3}{\footnotesize )}^{2}{\footnotesize ,} &
{\footnotesize P}_{1}^{9}{\footnotesize =}\frac{1}{2}{\footnotesize D}^{3}
{\footnotesize D}^{4}{\footnotesize ,\smallskip } \\
{\footnotesize P}_{2}^{9}{\footnotesize =0,} & {\footnotesize P}_{3}^{9}
{\footnotesize =}\frac{-1}{2}{\footnotesize D}^{1}{\footnotesize D}^{4}
{\footnotesize ,} & {\footnotesize P}_{4}^{9}{\footnotesize =}\frac{-1}{2}
{\footnotesize D}^{1}{\footnotesize D}^{3}{\footnotesize ,\smallskip } \\
{\footnotesize P}_{5}^{9}{\footnotesize =}\frac{-1}{2}{\footnotesize D}^{3}
{\footnotesize D}^{4}{\footnotesize ,} & {\footnotesize P}_{6}^{9}
{\footnotesize =}\frac{1}{2}{\footnotesize D}^{2}{\footnotesize D}^{4}
{\footnotesize ,} & {\footnotesize P}_{7}^{9}{\footnotesize =}\frac{1}{2}
{\footnotesize D}^{2}{\footnotesize D}^{3}{\footnotesize ,\smallskip } \\
{\footnotesize P}_{8}^{9}{\footnotesize =0,} & {\footnotesize P}_{9}^{9}
{\footnotesize =}\frac{1}{2}{\footnotesize (D}^{1}{\footnotesize )}^{2}
{\footnotesize -}\frac{1}{2}{\footnotesize (D}^{2}{\footnotesize )}^{2}
{\footnotesize ,} & {\footnotesize P}_{10}^{9}{\footnotesize =0,\smallskip }
\\
{\footnotesize P}_{1}^{10}{\footnotesize =}\frac{1}{2}{\footnotesize (D}^{4}
{\footnotesize )}^{2}{\footnotesize ,} & {\footnotesize P}_{2}^{10}
{\footnotesize =0,} & {\footnotesize P}_{3}^{10}{\footnotesize =0,\smallskip}
\\
{\footnotesize P}_{4}^{10}{\footnotesize =-D}^{1}{\footnotesize D}^{4}
{\footnotesize ,} & {\footnotesize P}_{5}^{10}{\footnotesize =}\frac{-1}{2}
{\footnotesize (D}^{4}{\footnotesize )}^{2}{\footnotesize ,} &
{\footnotesize P}_{6}^{10}{\footnotesize =0,\smallskip } \\
{\footnotesize P}_{7}^{10}{\footnotesize =D}^{2}{\footnotesize D}^{4}
{\footnotesize ,} & {\footnotesize P}_{8}^{10}{\footnotesize =-}\frac{1}{2}
{\footnotesize (D}^{4}{\footnotesize )}^{2}{\footnotesize ,} &
{\footnotesize P}_{9}^{10}{\footnotesize =D}^{3}{\footnotesize D}^{4}
{\footnotesize ,\smallskip } \\
{\footnotesize P}_{10}^{10}{\footnotesize =}\frac{1}{2}{\footnotesize (D}^{1}
{\footnotesize )}^{2}{\footnotesize -}\frac{1}{2}\sum_{i=2}^{3}
{\footnotesize (D}^{i}{\footnotesize )}^{2}{\footnotesize ,} & D^{i}
=\frac{\partial }{\partial x^{i}}, & 1\leq i\leq 4.
\end{array}
\end{equation*}
If a solution $(U^{A})_{A=1}^{10}$ to (\ref{JacobiEHflat}) is expanded in
power series up to second order, i.e.,
$U^{A}=\lambda ^{A}+\sum\nolimits_{1\leq j\leq 4}
\lambda _{j}^{A}x^{j}
+\sum\nolimits_{1\leq j\leq k\leq 4}\lambda _{jk}^{A}x^{j}x^{k}+$
terms of order $\geq 3$, then
evaluating it at $x^{1}=\ldots =x^{4}=0$, we obtain
\begin{equation}
\left\{
\begin{array}{l}
\lambda _{22}^{1}=\lambda _{12}^{2}-\lambda _{11}^{5}+\lambda _{33}^{5}
+\lambda _{44}^{5}-\lambda _{23}^{6}-\lambda _{24}^{7}+\lambda _{22}^{8}
+\lambda _{22}^{10}, \\
\lambda _{23}^{1}=\lambda _{13}^{2}+\lambda _{12}^{3}-2\lambda _{11}^{6}
+2\lambda _{44}^{6}-\lambda _{34}^{7}-\lambda _{24}^{9}+\lambda _{23}^{10}, \\
\lambda _{24}^{1}=\lambda _{14}^{2}+\lambda _{12}^{4}-\lambda _{34}^{6}
-2\lambda _{11}^{7}+2\lambda _{33}^{7}+\lambda _{24}^{8}-\lambda _{23}^{9}, \\
\lambda _{33}^{1}=\lambda _{13}^{3}+\lambda _{33}^{5}-\lambda _{23}^{6}
-\lambda _{11}^{8}+\lambda _{22}^{8}-\lambda _{34}^{9}+\lambda _{33}^{10}, \\
\lambda _{34}^{1}=\lambda _{14}^{3}+\lambda _{13}^{4}+\lambda _{34}^{5}
-\lambda _{24}^{6}-\lambda _{23}^{7}-2\lambda _{11}^{9}+2\lambda _{22}^{9}, \\
\lambda _{44}^{1}=\lambda _{14}^{4}-\lambda _{44}^{5}+\lambda _{24}^{7}
-2\lambda _{33}^{5}+2\lambda _{23}^{6}-2\lambda _{22}^{8}-\lambda
_{22}^{10}+\lambda _{34}^{9}-\lambda _{33}^{10}-\lambda _{11}^{10}, \\
\lambda _{23}^{2}=2\lambda _{22}^{3}+2\lambda _{44}^{3}-\lambda
_{34}^{4}+\lambda _{13}^{5}-\lambda _{12}^{6}-\lambda _{14}^{9}
+\lambda _{13}^{10}, \\
\lambda _{24}^{2}=-\lambda _{34}^{3}+2\lambda _{22}^{4}
+2\lambda _{33}^{4}+\lambda _{14}^{5}-\lambda _{12}^{7}+\lambda _{14}^{8}
-\lambda _{13}^{9}, \\
\lambda _{33}^{2}=-\lambda _{44}^{2}+\lambda _{23}^{3}
+\lambda _{24}^{4}+\lambda _{13}^{6}+\lambda _{14}^{7}
-\lambda _{12}^{8}-\lambda _{12}^{10}, \\
\lambda _{44}^{8}=-2\lambda _{33}^{5}-2\lambda _{44}^{5}
+2\lambda _{23}^{6}+2\lambda _{24}^{7}-2\lambda _{22}^{8}
-2\lambda _{22}^{10}+2\lambda _{34}^{9}-2\lambda _{33}^{10}.
\end{array}
\right.  \label{EqLambdas}
\end{equation}
Hence the space of quadratic Jacobi fields along $g$ is a vector space of
dimension $90$, with basis
\begin{equation*}
\begin{array}{cccccc}
(x^{1})^{2}E_{1}, & x^{1}x^{2}E_{1}, & x^{1}x^{3}E_{1}, & x^{1}x^{4}E_{1}, &
(x^{1})^{2}E_{2}, & (x^{2})^{2}E_{2}, \\
x^{3}x^{4}E_{2}, & ((x^{4})^{2}-(x^{3})^{2})E_{2}, & (x^{1})^{2}E_{3}, &
x^{2}x^{4}E_{3}, & (x^{3})^{2}E_{3}, & (x^{1})^{2}E_{4}, \\
x^{2}x^{3}E_{4}, & (x^{4})^{2}E_{4}, & x^{1}x^{2}E_{5}, & (x^{2})^{2}E_{5},
& x^{2}x^{3}E_{5}, & x^{2}x^{4}E_{5}, \\
x^{1}x^{4}E_{6}, & (x^{2})^{2}E_{6}, & (x^{3})^{2}E_{6}, & x^{1}x^{3}E_{7},
& (x^{2})^{2}E_{7}, & (x^{4})^{2}E_{7}, \\
x^{1}x^{3}E_{8}, & x^{2}x^{3}E_{8}, & (x^{3})^{2}E_{8}, & x^{3}x^{4}E_{8}, &
x^{1}x^{2}E_{9}, & \left( x^{3}\right) ^{2}E_{9}, \\
\left( x^{4}\right) ^{2}E_{9}, & x^{1}x^{4}E_{10}, & x^{2}x^{4}E_{10}, &
x^{3}x^{4}E_{10}, & (x^{4})^{2}E_{10}, &
\end{array}
\end{equation*}
\begin{equation*}
\begin{array}{ccc}
(x^{2})^{2}E_{1}+x^{1}x^{2}E_{2}, & x^{2}x^{4}E_{1}+x^{1}x^{4}E_{2}, &
x^{2}x^{3}E_{1}+x^{1}x^{2}E_{3}, \\
x^{2}x^{3}E_{1}+x^{1}x^{3}E_{2}, & x^{2}x^{4}E_{1}+x^{1}x^{2}E_{4}, &
(x^{3})^{2}E_{1}+x^{1}x^{3}E_{3}, \\
x^{3}x^{4}E_{1}+x^{1}x^{4}E_{3}, & x^{3}x^{4}E_{1}+x^{1}x^{3}E, &
(x^{4})^{2}E_{1}+x^{1}x^{4}E_{4}, \\
-(x^{2})^{2}E_{1}+(x^{1})^{2}E_{5}, & x^{3}x^{4}E_{1}+x^{3}x^{4}E_{5}, &
((x^{4})^{2}-(x^{3})^{2})\left( E_{1}+E_{5}\right) , \\
2x^{2}x^{3}E_{1}+(x^{4})^{2}E_{6}, & -2x^{2}x^{3}E_{1}+(x^{1})^{2}E_{6}, &
-x^{2}x^{4}E_{1}+x^{3}x^{4}E_{6}, \\
-x^{3}x^{4}E_{1}+x^{2}x^{4}E_{6}, & -x^{2}x^{3}E_{1}+x^{3}x^{4}E_{7}, &
-2x^{2}x^{4}E_{1}+(x^{1})^{2}E_{7}, \\
2x^{2}x^{4}E_{1}+\left( x^{3}\right) ^{2}E_{7}, &
-x^{3}x^{4}E_{1}+x^{2}x^{3}E_{7}, & x^{2}x^{4}E_{1}+x^{2}x^{4}E_{8}, \\
-\left( x^{3}\right) ^{2}E_{1}+(x^{1})^{2}E_{8}, &
-x^{2}x^{3}E_{1}+x^{2}x^{4}E_{9}, & -x^{2}x^{4}E_{1}+x^{2}x^{3}E_{9}, \\
-2x^{3}x^{4}E_{1}+(x^{1})^{2}E_{9}, & 2x^{3}x^{4}E_{1}+(x^{2})^{2}E_{9}, &
x^{2}x^{3}\left( E_{1}+E_{10}\right) , \\
-(x^{4})^{2}E+(x^{1})^{2}E_{10}, & 2x^{2}x^{3}E_{2}+(x^{2})^{2}E_{3}, &
2x^{2}x^{3}E_{2}+(x^{4})^{2}E_{3}, \\
(x^{3})^{2}E_{2}+x^{2}x^{3}E_{3}, & -x^{2}x^{4}E_{2}+x^{3}x^{4}E_{3}, &
-x^{2}x^{3}E_{2}+x^{3}x^{4}E_{4}, \\
2x^{2}x^{4}E_{2}+(x^{2})^{2}E_{4}, & 2x^{2}x^{4}E_{2}+(x^{3})^{2}E_{4}, &
(x^{3})^{2}E_{2}+x^{2}x^{4}E_{4}, \\
x^{2}x^{3}E_{2}+x^{1}xE_{5}, & x^{2}x^{4}E_{2}+x^{1}x^{4}E_{5}, &
-x^{2}x^{3}E_{2}+x^{1}x^{2}E_{6}, \\
(x^{3})^{2}E_{2}+x^{1}x^{3}E_{6}, & -x^{2}x^{4}E_{2}+x^{1}x^{2}E_{7}, &
(x^{3})^{2}E_{2}+x^{1}x^{4}E_{7}, \\
x^{2}x^{4}E_{2}+x^{1}x^{4}E_{8}, & -(x^{3})^{2}E_{2}+x^{1}x^{2}E_{8}, &
-x^{2}x^{3}E_{2}+x^{1}x^{4}E_{9}, \\
-x^{2}x^{4}E_{2}+x^{1}x^{3}E_{9}, & x^{2}x^{3}E_{2}+x^{1}x^{3}E_{10}, &
-(x^{3})^{2}E_{2}+x^{1}x^{2}E_{10}, \\
(x^{3})^{2}E+x^{2}x^{3}E_{6}, & -(x^{3})^{2}E_{5}+(x^{2})^{2}E_{8}, &
\end{array}
\end{equation*}
\begin{equation*}
\begin{array}{c}
((x^{3})^{2}-(x^{4})^{2})E_{1}+(x^{3})^{2}E_{5}+x^{2}x^{4}E_{7}, \\
((x^{4})^{2}-(x^{2})^{2})E_{1}-(x^{3})^{2}E_{5}+(x^{4})^{2}E_{8}, \\
((x^{2})^{2}-(x^{4})^{2})E_{1}+(x^{3})^{2}E_{5}+x^{3}x^{4}E_{9}, \\
((x^{4})^{2}-(x^{2})^{2})E_{1}-(x^{3})^{2}E_{5}+(x^{3})^{2}E_{10}, \\
((x^{4})^{2}-(x^{3})^{2})E_{1}-(x^{3})^{2}E_{5}+(x^{2})^{2}E_{10},
\end{array}
\end{equation*}
where
\begin{equation}
\begin{array}{ccccc}
E_{1}=\frac{\partial }{\partial y_{11}}, 
& E_{2}=\frac{\partial }{\partial y_{12}}, 
& E_{3}=\frac{\partial }{\partial y_{13}}, 
& E_{4}=\frac{\partial }{\partial y_{14}}, 
& E_{5}=\frac{\partial }{\partial y_{22}},
\smallskip \\
E_{6}=\frac{\partial }{\partial y_{23}}, 
& E_{7}=\frac{\partial }{\partial y_{24}}, 
& E_{8}=\frac{\partial }{\partial y_{33}}, 
& E_{9}=\frac{\partial }{\partial y_{34}}, 
& E_{10}=\frac{\partial }{\partial y_{44}}.
\end{array}
\label{E's}
\end{equation}
More generally, if 
$U^{A}=\sum_{|J|=r}\lambda _{J}^{A}
\left( x^{1}\right)^{j_{1}}\cdots \left( x^{4}\right) ^{j_{4}}$, 
where $1\leq A\leq 10$, and $J=(j_{1},\ldots ,j_{4})$, 
is a homogeneous Jacobi field along $g$ of order 
$r\geq 3$, then for every multi-index $(i_{1},\ldots ,i_{4})$ 
of order $i_{1}+\ldots +i_{4}=r-2$, the functions 
$\left[ \left( D^{1}\right) ^{i_{1}}\circ 
\ldots \circ \left( D^{4}\right) ^{i_{4}}\right] 
\left(U^{A}\right) $ 
are a quadratic Jacobi field along $g$, as the operators
$P_{B}^{A}(D)$ are linear and of constant coefficients.

If $(j_{1},\ldots ,j_{4})=(i_{1},\ldots ,i_{4})+(ab)$, then for $a<b$ we
have $j_{a}=i_{a}+1$, $j_{b}=i_{b}+1$, $j_{h}=i_{h}$ for every $h\neq a,b$,
and for $a=b$ we have $j_{a}=i_{a}+2$, $j_{h}=i_{h}$ for every $h\neq a$.
Hence
\begin{eqnarray*}
\left[ \!\left( D^{1}\right) ^{i_{1}}\!\circ \ldots \circ 
\left(
D^{4}\right) ^{i_{4}}\!\right] \!\left( U^{A}
\right) \!\!
&=&\!\!\sum\limits_{a<b}\lambda _{I+(ab)}^{A}i_{1}!\cdots (i_{a}+1)!
\cdots (i_{b}+1)!\cdots i_{4}!x^{a}x^{b} \\
&&\!\!+\sum\limits_{a}\tfrac{1}{2}\lambda _{I+(aa)}^{A}i_{1}!
\cdots (i_{a}+2)!\cdots i_{4}!\left( x^{a}\right) ^{2},
\end{eqnarray*}
and consequently, the functions
\begin{eqnarray*}
\lambda _{ab}^{A} &=&\lambda _{I+(ab)}^{A}i_{1}!\cdots (i_{a}+1)!
\cdots (i_{b}+1)!\cdots i_{4}!, \\
\lambda _{aa}^{A} &=&\tfrac{1}{2}\lambda _{I+(aa)}^{A}i_{1}!
\cdots (i_{a}+2)!\cdots i_{4}!
\end{eqnarray*}
must satisfy the equations (\ref{EqLambdas}) for $1\leq a<b\leq 4$, and
every multi-index $(i_{1},\ldots ,i_{4})$ of order $r-2$.
\end{example}

\begin{example}
\label{example2}If $N=(\mathbb{R}/2\pi \mathbb{Z})^{4}$ is a $4$-dimensional
torus with Lorentzian metric $g=\varepsilon _{i}(dx^{i})^{2}$, $\varepsilon
_{1}=-1$, $\varepsilon _{2}=\varepsilon _{3}=\varepsilon _{4}=+1$, as in the
Example \ref{example1}, then we can obtain the global solutions to Jacobi
equations (\ref{JacobiEHflat}) by expanding in Fourier series; namely,
\begin{equation*}
U^{A}=\sum\nolimits_{(k_{1},\ldots ,k_{4})\in \mathbb{Z}^{4}}U_{k_{1},
\ldots,k_{4}}^{A}\exp (ik_{j}x^{j}),\quad U_{k_{1},\ldots ,k_{4}}^{A
}\in \mathbb{C},
\end{equation*}
so that $\tfrac{\partial ^{2}U^{A}}{\partial x^r\partial x^{s}}
=-\sum\nolimits_{(k_{1},\ldots ,k_{4})\in \mathbb{Z}^{4}}k_{r}k_{s}U_{k_{1},
\ldots ,k_{4}}^{A}\exp (ik_{j}x^{j})$ and the equations (\ref{JacobiEHflat})
transform into the following:

\begin{equation*}
\begin{array}{rl}
0= & \tfrac{1}{2}\left( (k_{2})^{2}+(k_{3})^{2}+(k_{4})^{2}\right)
U_{k_{1},\ldots ,k_{4}}^{1}-k_{1}k_{2}U_{k_{1},\ldots
,k_{4}}^{2}-k_{1}k_{3}U_{k_{1},\ldots ,k_{4}}^{3} \\
& \multicolumn{1}{r}{-k_{1}k_{4}U_{k_{1},\ldots ,k_{4}}^{4}+\tfrac{1}{2}
(k_{1})^{2}\left( U_{k_{1},\ldots ,k_{4}}^{5}+U_{k_{1},\ldots
,k_{4}}^{8}+U_{k_{1},\ldots ,k_{4}}^{9}\right) ,} \\
0= & -\left( (k_{3})^{2}+(k_{4})^{2}\right) U_{k_{1},\ldots
,k_{4}}^{2}+k_{2}k_{3}U_{k_{1},\ldots ,k_{4}}^{3}+k_{2}k_{4}U_{k_{1},
\ldots ,k_{4}}^{4} \\
& \multicolumn{1}{r}{+k_{1}k_{3}U_{k_{1},\ldots
,k_{4}}^{6}+k_{1}k_{4}U_{k_{1},\ldots ,k_{4}}^{7}-k_{1}k_{2}
\left(
U_{k_{1},\ldots ,k_{4}}^{8}+U_{k_{1},\ldots ,k_{4}}^{10}
\right) ,} \\
0= & k_{2}k_{3}U_{k_{1},\ldots ,k_{4}}^{2}
-\left(
(k_{2})^{2}+(k_{4})^{2}
\right) U_{k_{1},\ldots
,k_{4}}^{3}+k_{3}k_{4}U_{k_{1},\ldots ,k_{4}}^{4} \\
& \multicolumn{1}{r}{+k_{1}k_{2}U_{k_{1},\ldots ,k_{4}}^{6}-k_{1}k_{3}
\left(
U_{k_{1},\ldots ,k_{4}}^{5}+U_{k_{1},\ldots ,k_{4}}^{10}
\right)
+k_{1}k_{4}U_{k_{1},\ldots ,k_{4}}^{9},} \\
0= & k_{2}k_{4}U_{k_{1},\ldots ,k_{4}}^{2}+k_{3}k_{4}U_{k_{1},\ldots
,k_{4}}^{3}
-\left((k_{2})^{2}+(k_{3})^{2}\right) U_{k_{1},\ldots ,k_{4}}^{4}
\\
& \multicolumn{1}{r}{-k_{1}k_{4}\left( U_{k_{1},\ldots
,k_{4}}^{5}+U_{k_{1},\ldots ,k_{4}}^{8}\right) +k_{1}k_{2}U_{k_{1},\ldots
,k_{4}}^{7}+k_{1}k_{3}U_{k_{1},\ldots ,k_{4}}^{9},} \\
0= & \tfrac{1}{2}(k_{2})^{2}U_{k_{1},\ldots
,k_{4}}^{1}-k_{1}k_{2}U_{k_{1},\ldots ,k_{4}}^{2}+\tfrac{1}{2}\left(
(k_{1})^{2}-(k_{3})^{2}-(k_{4})^{2}\right) U_{k_{1},\ldots ,k_{4}}^{5} \\
& +k_{2}k_{3}U_{k_{1},\ldots ,k_{4}}^{6}+k_{2}k_{4}U_{k_{1},\ldots
,k_{4}}^{7}-\tfrac{1}{2}(k_{2})^{2}\left( U_{k_{1},\ldots
,k_{4}}^{8}+U_{k_{1},\ldots ,k_{4}}^{10}\right) ,
\end{array}
\end{equation*}
\begin{equation*}
\begin{array}{rl}
0= & k_{2}k_{3}U_{k_{1},\ldots ,k_{4}}^{1}-k_{1}k_{3}U_{k_{1},\ldots
,k_{4}}^{2}-k_{1}k_{2}U_{k_{1},\ldots ,k_{4}}^{3} \\
& \multicolumn{1}{r}{+\left( (k_{1})^{2}-(k_{4})^{2}\right) U_{k_{1},\ldots
,k_{4}}^{6}+k_{3}k_{4}U_{k_{1},\ldots ,k_{4}}^{7}+k_{2}k_{4}U_{k_{1},\ldots
,k_{4}}^{9}-k_{2}k_{3}U_{k_{1},\ldots ,k_{4}}^{10},} \\
0= & k_{2}k_{4}U_{k_{1},\ldots ,k_{4}}^{1}-k_{1}k_{4}U_{k_{1},\ldots
,k_{4}}^{2}-k_{1}k_{2}U_{k_{1},\ldots ,k_{4}}^{4} \\
& \multicolumn{1}{r}{+k_{3}k_{4}U_{k_{1},\ldots ,k_{4}}^{6}+\left(
(k_{1})^{2}-(k_{3})^{2}\right) U_{k_{1},\ldots
,k_{4}}^{7}-k_{2}k_{4}U_{k_{1},\ldots ,k_{4}}^{8}+k_{2}k_{3}U_{k_{1},\ldots
,k_{4}}^{9},} \\
0= & \tfrac{1}{2}k_{3}^{2}\left( U_{k_{1},\ldots ,k_{4}}^{1}-U_{k_{1},\ldots
,k_{4}}^{5}\right) -k_{1}k_{3}U_{k_{1},\ldots
,k_{4}}^{3}+k_{2}k_{3}U_{k_{1},\ldots ,k_{4}}^{6} \\
& \multicolumn{1}{r}{+\tfrac{1}{2}\left( (k_{1})^{2}-(k_{2})^{2}\right)
U_{k_{1},\ldots ,k_{4}}^{8}+k_{3}k_{4}U_{k_{1},\ldots ,k_{4}}^{9}
-\tfrac{1}{2}(k_{3})^{2}U_{k_{1},\ldots ,k_{4}}^{10},} \\
0= & k_{3}k_{4}U_{k_{1},\ldots ,k_{4}}^{1}-k_{1}k_{4}U_{k_{1},\ldots
,k_{4}}^{3}-k_{1}k_{3}U_{k_{1},\ldots ,k_{4}}^{4} \\
& \multicolumn{1}{r}{-k_{3}k_{4}U_{k_{1},\ldots
,k_{4}}^{5}+k_{2}k_{4}U_{k_{1},\ldots ,k_{4}}^{6}+k_{2}k_{3}U_{k_{1},\ldots
,k_{4}}^{7}+\left( (k_{1})^{2}-(k_{2})^{2}\right) U_{k_{1},\ldots
,k_{4}}^{9},} \\
0= & \tfrac{1}{2}(k_{4})^{2}\left( U_{k_{1},\ldots
,k_{4}}^{1}-U_{k_{1},\ldots ,k_{4}}^{5}-U_{k_{1},\ldots ,k_{4}}^{8}\right)
-k_{1}k_{4}U_{k_{1},\ldots ,k_{4}}^{4} \\
& \multicolumn{1}{r}{+k_{2}k_{4}U_{k_{1},\ldots
,k_{4}}^{7}+k_{3}k_{4}U_{k_{1},\ldots ,k_{4}}^{9}
+\tfrac{1}{2}\left( (k_{1})^{2}-(k_{2})^{2}-(k_{3})^{2}\right) 
U_{k_{1},\ldots ,k_{4}}^{10},}
\end{array}
\end{equation*}
for every system $(k_{1},\ldots ,k_{4})\in \mathbb{Z}^{4}$. Solving these
equations for $k_{2}\neq 0$, we obtain
\begin{eqnarray*}
U_{k_{1},\ldots ,k_{4}}^{1} &=&U_{k_{1},\ldots ,k_{4}}^{3}=U_{k_{1},\ldots
,k_{4}}^{4}=U_{k_{1},\ldots ,k_{4}}^{8}=U_{k_{1},\ldots
,k_{4}}^{9}=U_{k_{1},\ldots ,k_{4}}^{10} \\
&=&0, \\
U_{k_{1},\ldots ,k_{4}}^{2} &=&\tfrac{1}{2}\frac{k_{1}}{k_{2}}
U_{k_{1},\ldots ,k_{4}}^{5}, \\
U_{k_{1},\ldots ,k_{4}}^{6} &=&\tfrac{1}{2}\frac{k_{3}}{k_{2}}
U_{k_{1},\ldots ,k_{4}}^{5}, \\
U_{k_{1},\ldots ,k_{4}}^{7} &=&\tfrac{1}{2}\frac{k_{4}}{k_{2}}
U_{k_{1},\ldots ,k_{4}}^{5},
\end{eqnarray*}
and the unknowns $U_{k_{1},\ldots ,k_{4}}^{5}$ remain undetermined. If 
$k_{2}=0$ but $k_{4}\neq 0$, then the solutions to the previous equations are
\begin{eqnarray*}
U_{k_{1},\ldots ,k_{4}}^{1} &=&U_{k_{1},\ldots ,k_{4}}^{3}=U_{k_{1},\ldots
,k_{4}}^{4}=U_{k_{1},\ldots ,k_{4}}^{5} \\
&=&U_{k_{1},\ldots ,k_{4}}^{7}=U_{k_{1},\ldots ,k_{4}}^{8}=U_{k_{1},\ldots
,k_{4}}^{9}=U_{k_{1},\ldots ,k_{4}}^{10} \\
&=&0, \\
U_{k_{1},\ldots ,k_{4}}^{2} &=&\frac{k_{1}}{k_{4}}U_{k_{1},\ldots
,k_{4}}^{7},U_{k_{1},\ldots ,k_{4}}^{6}=\frac{k_{3}}{k_{4}}U_{k_{1},\ldots
,k_{4}}^{7},
\end{eqnarray*}
the unknowns $U_{k_{1},\ldots ,k_{4}}^{7}$ remaining undetermined. If 
$k_{2}=k_{4}=0$ but $k_{3}\neq 0$, then
\begin{eqnarray*}
U_{k_{1},\ldots ,k_{4}}^{4} &=&U_{k_{1},\ldots ,k_{4}}^{5}=U_{k_{1},\ldots
,k_{4}}^{7}=U_{k_{1},\ldots ,k_{4}}^{9}=U_{k_{1},\ldots ,k_{4}}^{10}=0, \\
U_{k_{1},\ldots ,k_{4}}^{1} &=&-\frac{k_{1}}{(k_{3})^{2}}\left(
k_{1}U_{k_{1},\ldots ,k_{4}}^{8}-2k_{3}U_{k_{1},\ldots ,k_{4}}^{3}\right) ,
\\
U_{k_{1},\ldots ,k_{4}}^{2} &=&\frac{k_{1}}{k_{3}}U_{k_{1},\ldots
,k_{4}}^{6},
\end{eqnarray*}
the unknowns $U_{k_{1},\ldots ,k_{4}}^{2}$, $U_{k_{1},\ldots ,k_{4}}^{6}$,
and $U_{k_{1},\ldots ,k_{4}}^{8}$ remaining undetermined. Finally, if 
$k_{2}=k_{3}=k_{4}=0$, then $U_{k_{1},\ldots ,k_{4}}^{A}=0$, $5\leq A\leq 10$, 
and the unknowns $U_{k_{1},\ldots ,k_{4}}^{A}$, $1\leq A\leq 4$ remain
undetermined. Therefore
\begin{equation*}
\begin{array}{ll}
U^{1}= & -\sum_{k_{2}=k_{4}=0,k_{3}\neq 0}\tfrac{k_{1}}{(k_{3})^{2}}\left(
k_{1}U_{k_{1},\ldots ,k_{4}}^{8}-2k_{3}U_{k_{1},\ldots ,k_{4}}^{3}\right)
\exp (ik_{j}x^{j})\smallskip \\
\multicolumn{1}{r}{} & \multicolumn{1}{r}{+U_{k_{1}000}^{1}\exp
(ik_{1}x^{1}),\medskip} \\
U^{2}= & \tfrac{1}{2}\sum\nolimits_{k_{2}\neq 0}
\tfrac{k_{1}}{k_{2}}U_{k_{1},\ldots ,k_{4}}^{5}\exp (ik_{j}x^{j})
+\sum_{k_{2}=0,k_{4}\neq 0}
\tfrac{k_{1}}{k_{4}}U_{k_{1},\ldots ,k_{4}}^{7}\exp (ik_{j}x^{j})
\smallskip
\\
\multicolumn{1}{r}{} & \multicolumn{1}{r}{+\sum_{k_{2}=k_{4}=0,k_{3}\neq 0}
\tfrac{k_{1}}{k_{3}}U_{k_{1},\ldots ,k_{4}}^{6}
\exp (ik_{j}x^{j})+U_{k_{1}000}^{2}
\exp (ik_{1}x^{1}),\medskip} \\
U^{3}= & \sum_{k_{2}=k_{4}=0}U_{k_{1},\ldots ,k_{4}}^{3}
\exp (ik_{j}x^{j}),\medskip \\
U^{4}= & U_{k_{1}000}^{4}\exp (ik_{1}x^{1}),\medskip \\
U^{5}= & \sum\nolimits_{k_{2}\neq 0}U_{k_{1},\ldots ,k_{4}}^{5}
\exp (ik_{j}x^{j}),\medskip \\
U^{6}= & \tfrac{1}{2}\sum\nolimits_{k_{2}\neq 0}\tfrac{k_{3}}{k_{2}}
U_{k_{1},\ldots ,k_{4}}^{5}\exp (ik_{j}x^{j})
+\sum_{k_{2}=0,k_{4}\neq 0}
\tfrac{k_{3}}{k_{4}}U_{k_{1},\ldots ,k_{4}}^{7}\exp (ik_{j}x^{j})
\smallskip
\\
\multicolumn{1}{r}{} & \multicolumn{1}{r}{+\sum_{k_{2}=k_{4}=0,k_{3}
\neq 0}U_{k_{1},\ldots ,k_{4}}^{6}\exp (ik_{j}x^{j}),\medskip} \\
U^{7}= & \tfrac{1}{2}\sum\nolimits_{k_{2}\neq 0}
\tfrac{k_{4}}{k_{2}}U_{k_{1},\ldots ,k_{4}}^{5}\exp (ik_{j}x^{j})
+\sum_{k_{2}=0,k_{4}\neq 0}U_{k_{1},\ldots ,k_{4}}^{7}
\exp (ik_{j}x^{j}),
\medskip \\
U^{8}= & \sum_{k_{2}=k_{4}=0,k_{3}\neq 0}U_{k_{1},\ldots ,k_{4}}^{8}
\exp (ik_{j}x^{j}),\medskip \\
U^{9}= & 0,\medskip \\
U^{10}= & 0.
\end{array}
\end{equation*}
Hence, by using the formulas (\ref{E's}) we obtain
\begin{align*}
\sum_{A=1}^{10}U^{A}E_{A}& =U_{k_{1},0,k_{3},0}^{8}\exp \left[
i(k_{1}x^{1}+k_{3}x^{3})\right] \left\{ E_{8}
-\tfrac{(k_{1})^{2}}{(k_{3})^{2}}E_{1}\right\} \\
& +U_{k_{1},0,k_{3},k_{4}}^{7}\exp \left[ i(k_{1}x^{1}+k_{3}x^{3}
+k_{4}x^{4})\right] \left\{ \tfrac{k_{1}}{k_{4}}E_{2}
+\tfrac{k_{3}}{k_{4}}E_{6}+E_{7}\right\} \\
& +U_{k_{1},0,k_{3},0}^{6}
\exp \left[ i(k_{1}x^{1}+k_{3}x^{3})\right]
\left\{ \tfrac{k_{1}}{k_{3}}E_{2}+E_{6}\right\} \\
& +U_{k_{1},\ldots ,k_{4}}^{5}\exp (ik_{j}x^{j})\left\{ 
\tfrac{1}{2}\tfrac{k_{1}}{k_{2}}E_{2}+E_{5}
+\tfrac{1}{2}\tfrac{k_{3}}{k_{2}}E_{6}
+\tfrac{1}{2}\tfrac{k_{4}}{k_{2}}E_{7}\right\} \\
& +U_{k_{1}000}^{4}\exp (ik_{1}x^{1})Y_{4}
+U_{k_{1},0,k_{3},0}^{3}
\exp \left[ i(k_{1}x^{1}+k_{3}x^{3})\right] 
\left\{ 2k_{3}E_{1}+E_{3}\right\} \\
& +U_{k_{1}000}^{2}\exp (ik_{1}x^{1})E_{2}+U_{k_{1}000}^{1}
\exp (ik_{1}x^{1})E_{1},
\end{align*}
and the vector fields
\begin{equation*}
\begin{array}{rll}
X_{1}^{k}= & \exp \left[ i(k_{1}x^{1}+k_{3}x^{3})\right] 
\left\{ \frac{\partial }{\partial y_{33}}
-\tfrac{(k_{1})^{2}}{(k_{3})^{2}}\frac{\partial }
{\partial y_{11}}\right\} , & k_{3}\neq 0, \\
X_{2}^{k}= & \exp \left[ i(k_{1}x^{1}+k_{3}x^{3}+k_{4}x^{4})\right] 
\left\{
\tfrac{k_{1}}{k_{4}}\frac{\partial }{\partial y_{12}}
+\tfrac{k_{3}}{k_{4}}
\frac{\partial }{\partial y_{23}}+\frac{\partial }{\partial y_{24}}\right\} ,
& k_{4}\neq 0, \\
X_{3}^{k}= & \exp \left[ i(k_{1}x^{1}+k_{3}x^{3})\right] 
\left\{ \tfrac{k_{1}}{k_{3}}
\frac{\partial }{\partial y_{12}}+\frac{\partial }{\partial y_{23}}
\right\} , & k_{3}\neq 0, \\
X_{4}^{k}= & \exp \left[ i(k_{1}x^{1}+k_{2}x^{2}+k_{3}x^{3}+k_{4}x^{4})
\right] & k_{2}\neq 0, \\
& \multicolumn{1}{r}{\cdot \left\{ \tfrac{1}{2}\tfrac{k_{1}}{k_{2}}
\frac{\partial }{\partial y_{12}}+\frac{\partial }{\partial y_{22}}
+\tfrac{1}{2}
\tfrac{k_{3}}{k_{2}}\frac{\partial }{\partial y_{23}}+\tfrac{1}{2}
\tfrac{k_{4}}{k_{2}}\frac{\partial }{\partial y_{24}}\right\} ,} &  \\
X_{5}^{k}= & \exp (ik_{1}x^{1})\frac{\partial }{\partial y_{14}}, &  \\
X_{6}^{k}= & \exp \left[ i(k_{1}x^{1}+k_{3}x^{3})\right] \left\{ 2k_{3}
\frac{\partial }{\partial y_{11}}+\frac{\partial }{\partial y_{13}}\right\} , &
\\
X_{7}^{k}= & \exp (ik_{1}x^{1})\frac{\partial }{\partial y_{12}}, &  \\
X_{8}^{k}= & \exp (ik_{1}x^{1})\frac{\partial }{\partial y_{11}}, &
\end{array}
\end{equation*}
with $k\in \mathbb{Z}^{4}$, span 
$T_{g}\mathcal{S}((\mathbb{R}/2\pi \mathbb{Z})^{4})$ topologically.
\end{example}
Let $\Lambda $ be a Lagrangian density on an arbitrary fibred manifold 
$p\colon E\to N$ and let $\Theta _{\Lambda }$ be the P-C form
associated to $\Lambda $. Let $X,Y\in T_{s}\mathcal{S}(N)$ be Jacobi vector
fields defined along an extremal $s\in \mathcal{S}(N)$ for the Lagrangian
density $\Lambda $. Then, 
$d[(j^{1}s)^{\ast }(i_{Y^{(1)}}i_{X^{(1)}}d\Theta _{\Lambda })]=0$ 
(e.g., see \cite{Garcia}); i.e., the $(n-1)$-form 
$i_{Y^{(1)}}i_{X^{(1)}}d\Theta _{\Lambda }$ is closed along $j^{1}s$.

The alternate bilinear mapping taking values in the space $Z^{n-1}(N)$ of
closed $(n-1)$-forms, defined by
\begin{equation*}
\begin{array}{l}
(\omega _{2})_{s}\colon T_{s}\mathcal{S}(N)\times T_{s}\mathcal{S}(N)\to Z^{n-1}(N),
\smallskip \\
(\omega _{2})_{s}(X,Y)
=(j^{1}s)^{\ast }\left( i_{Y^{(1)}}i_{X^{(1)}}d\Theta _{\Lambda }\right)
\end{array}
\end{equation*}
is called the presymplectic structure associated to $\Lambda $.
\begin{theorem}
\label{radical}Let $s$ be an extremal of a second-order Lagrangian density 
$\Lambda =Lv$ on $p\colon E\to N$ with Poincar\'{e}-Cartan form
projectable onto $J^{1}E$. Assume that the variational problem defined by 
$\Lambda $ is regular in the sense of \emph{Proposition \ref{regular1}}. For
every $x\in N$, let $R_{x}^{2}(\Lambda )\subseteq J_{x}^{2}(s^{\ast }V(p))$
be the vector subspace of $2$-jets $j_{x}^{2}X$ of $p$-vertical vector
fields along $s$ that satisfy the Jacobi equations \emph{(\ref{Jacobi1})} at
$x$. If the natural projection $p_{1}^{2}\colon R_{x}^{2}(\Lambda
)\to J_{x}^{1}(s^{\ast }V(p))$ is surjective for every $x\in N$,
then the radical of the valued $2$-form $(\omega _{2})_{s}$ vanishes.
\end{theorem}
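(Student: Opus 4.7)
My plan is to extract, from the identity $(\omega_2)_s(X,Y)=0$ (holding for every Jacobi field $Y$), a pointwise linear system in the components $V_X^\alpha(x)$ of $X$ that the regularity hypothesis will force to have only the trivial solution. The surjectivity of $p_1^2\colon R_x^2(\Lambda)\to J_x^1(s^\ast V(p))$ enters precisely as the condition guaranteeing that the $1$-jet of a Jacobi field at $x$ may be prescribed arbitrarily, so the linear system is rich enough.

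First I would compute $(\omega_2)_s(X,Y)$ in fibred coordinates. Since the Poincar\'e--Cartan form projects onto $J^1E$, the formula \eqref{differential_P-C_bis} applies; and because both $X^{(1)}$ and $Y^{(1)}$ are $p^1$-vertical---annihilating $v$ and every $v_i$---the $dH\wedge v$ contribution drops out after double contraction, leaving
\begin{equation*}
(j^1s)^\ast\bigl(i_{Y^{(1)}}i_{X^{(1)}}d\Theta_\Lambda\bigr)
=\sum_i(-1)^{i-1}\bigl[X^{(1)}(p_\alpha^i)V_Y^\alpha-V_X^\alpha Y^{(1)}(p_\alpha^i)\bigr]\!\circ\!j^1s\cdot v_i.
\end{equation*}
Expanding $X^{(1)}(p_\alpha^i)=V_X^\beta\,\partial p_\alpha^i/\partial y^\beta+\partial_jV_X^\beta\,\partial p_\alpha^i/\partial y_j^\beta$ (and similarly for $Y^{(1)}$), the bracket becomes a bilinear expression in the $1$-jets of $X$ and $Y$ at $x$; its coefficient in front of $\partial V_Y^\gamma/\partial x^k$ equals $-\sum_\alpha B_{\alpha\gamma}^{ik}(x)V_X^\alpha(x)$, with $B_{\alpha\gamma}^{ik}$ as in \eqref{B_Lambda}.

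Next I would invoke the hypothesis on $p_1^2$, together with local extension of formal second-order solutions to actual Jacobi fields with the same $1$-jet at $x$ (e.g.\ by Cauchy--Kowalewski in the real-analytic case), to conclude that $(V_Y^\gamma(x),\partial_jV_Y^\gamma(x))$ ranges over all of $J_x^1(s^\ast V(p))$ as $Y$ varies in $T_s\mathcal{S}(U)$ on small neighbourhoods $U\ni x$. Linear independence of the $v_i$ then forces each of the $n$ bracketed functions to vanish at $x$; viewing each as a linear functional on $j_x^1Y$ and isolating the coefficient of $\partial V_Y^\gamma/\partial x^k$ produces
\begin{equation*}
\sum_\alpha B_{\alpha\gamma}^{ik}(x)\,V_X^\alpha(x)=0,\qquad\forall\,\gamma,i,k.
\end{equation*}
Let $\mathbf{B}$ denote the $mn\times mn$ matrix of $b_\Lambda$, indexed so that $\mathbf{B}_{(i,\alpha),(j,\beta)}=B_{\alpha\beta}^{ij}$. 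Fix any $i_0$ and define $u\in\mathbb{R}^{mn}$ by $u_{(j,\beta)}:=\delta_{j,i_0}V_X^\beta(x)$. Using the symmetry $B_{\alpha\beta}^{ij}=B_{\beta\alpha}^{ji}$ of $b_\Lambda$, the preceding system is equivalent to $\mathbf{B}u=0$; invertibility of $\mathbf{B}$---the regularity hypothesis of Corollary \ref{RegularityCondition}---yields $u=0$, hence $V_X^\beta(x)=0$ for every $\beta$. Since $x$ was arbitrary, $X\equiv 0$.

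The main obstacle is the bridge between surjectivity of $p_1^2$ on formal second-order solutions in $R_x^2(\Lambda)$ and the genuine local existence of Jacobi vector fields realising any prescribed $1$-jet at $x$; this is a formal integrability question for the Jacobi system, which, being the linearisation of the regular Hamilton--Cartan system of Proposition \ref{regular1}, is handled in practice by Cauchy--Kowalewski in the analytic category or by a Goldschmidt-style involutivity check. Once this step is in place, the remaining argument is pure linear algebra resting on the symmetry and non-singularity of $b_\Lambda$.
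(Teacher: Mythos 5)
Your proposal follows essentially the same route as the paper's proof: expand $i_{Y^{(1)}}i_{X^{(1)}}d\Theta _{\Lambda }$ via the normal form \eqref{differential_P-C_bis}, use the surjectivity of $p_{1}^{2}$ to let the $1$-jet of $Y$ at $x$ vary freely so that the coefficients of $\partial W^{\gamma }/\partial x^{k}$ must vanish separately, and conclude $V_{X}(x)=0$ from the non-singularity of the matrix $\bigl(\partial p_{\alpha }^{i}/\partial y_{j}^{\beta }\bigr)$ --- which is precisely the matrix of $b_{\Lambda }$, so the regularity notion of Proposition \ref{regular1} that you invoke and that of Corollary \ref{RegularityCondition} coincide here. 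The only substantive difference is that you explicitly flag, and propose to patch by a formal-integrability or Cauchy--Kowalewski argument, the passage from $2$-jet solutions in $R_{x}^{2}(\Lambda )$ to genuine Jacobi fields realising a prescribed $1$-jet at $x$; the paper's proof passes over this point silently.
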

\begin{proof}
According to (\ref{differential_P-C_bis}), we have $d\Theta _{\Lambda
}=(-1)^{i-1}dp_{\alpha }^{i}\wedge dy^{\alpha }\wedge v_{i}+dH\wedge v$. If 
$X^{(1)}=V^{\sigma }\tfrac{\partial }{\partial y^{\sigma }}
+\tfrac{\partial V^{\sigma }}{\partial x^{j}}
\tfrac{\partial }{\partial y_{j}^{\sigma }}$, 
$Y^{(1)}=W^{\sigma }\tfrac{\partial }{\partial y^{\sigma }}
+\tfrac{\partial W^{\sigma }}{\partial x^{j}}
\tfrac{\partial }{\partial y_{j}^{\sigma }}$,
with $V^{\sigma },W^{\sigma }\in C^{\infty }(N)$, then
\begin{eqnarray*}
(\omega _{2})_{s}(X,Y) 
&=&(-1)^{i-1}\left\{ \left( V^{\sigma }W^{\alpha}
-V^{\alpha }W^{\sigma }\right) 
\left( \tfrac{\partial p_{\alpha }^{i}}{\partial y^{\sigma }}
\circ j^{1}s
\right) \right. \\
&&\left. \left. 
+\left( 
\tfrac{\partial V^{\sigma }}{\partial x^{j}}W^{\alpha }
-V^{\alpha }\tfrac{\partial W^{\sigma }}{\partial x^{j}}\right)
\left( \tfrac{\partial p_{\alpha }^{i}}{\partial y_{j}^{\sigma }}
\circ j^{1}s\right) \right\} \right\vert _{j^{1}s}v_{i}.
\end{eqnarray*}
If we assume the vector field $X$ belongs to 
$\operatorname{rad}(\omega _{2})_{s}$, then by evaluating 
at $x$ the equation $(\omega _{2})_{s}(X,Y)=0$, 
$\forall Y\in T_{s}\mathcal{S}(N)$, we obtain
\begin{equation}
\begin{array}{ll}
0= & \left[ V^{\sigma }(x)W^{\alpha }(x)-V^{\alpha }(x)W^{\sigma }(x)\right]
\tfrac{\partial p_{\alpha }^{i}}{\partial y^{\sigma }}(j_{x}^{1}s) \\
& +\left[ \tfrac{\partial V^{\sigma }}{\partial x^{j}}(x)W^{\alpha}(x)
-V^{\alpha }(x)\tfrac{\partial W^{\sigma }}{\partial x^{j}}(x)\right]
\tfrac{\partial p_{\alpha }^{i}}{\partial y_{j}^{\sigma }}(j_{x}^{1}s),
\quad 1\leq i\leq n.
\end{array}
\label{rad_x}
\end{equation}
The assumption in the statement implies that given arbitrary values for 
$W^{\beta }(x)$ and $\tfrac{\partial W^{\beta }}{\partial x^{h}}(x)$, there
exists an element $j_{x}^{2}Y\in R_{x}^{2}(\Lambda )$ projecting under the
natural mapping $p_{1}^{2}\colon R_{x}^{2}(\Lambda )\to J_{x}^{1}(s^{\ast }V(p))$ 
onto the $1$-jet at $x$ the coordinates of which
coincide with these values. Accordingly, the coefficients of $W^{\beta }(x)$
and $\tfrac{\partial W^{\beta }}{\partial x^{h}}(x)$\ in (\ref{rad_x}) must
vanish, i.e.,
\begin{equation}
\begin{array}{l}
0=V^{\alpha }\left( \tfrac{\partial p_{\beta }^{i}}{\partial y^{\alpha }}
\circ j^{1}s\right) -V^{\alpha }\left( \tfrac{\partial p_{\alpha }^{i}}
{\partial y^{\beta }}\circ j^{1}s\right) 
+\tfrac{\partial V^{\alpha }}{\partial x^{j}}
\left( 
\tfrac{\partial p_{\beta }^{i}}{\partial y_{j}^{\alpha }}\circ j^{1}s
\right) ,
\smallskip \\
\multicolumn{1}{r}{1\leq i\leq n,\;1\leq \beta \leq m,\medskip} \\
0=V^{\alpha }\left( \tfrac{\partial p_{\alpha }^{i}}
{\partial y_{h}^{\beta }}
\circ j^{1}s\right) ,\qquad \qquad \qquad \qquad \quad h,i=1,\ldots
,n,\;1\leq \beta \leq m,
\end{array}
\label{rad1}
\end{equation}
as the point $x$ is arbitrary. Hence the formulas (\ref{rad1}) are the
equations for the radical of $(\omega _{2})_{s}$. If we set
\begin{equation*}
V=(V^{1},\ldots ,V^{m}),\quad O_{m}=(0,\ldots ,0),
\quad 
\Upsilon 
=\left(
\tfrac{\partial p_{\alpha }^{i}}{\partial y_{h}^{\beta }}
\right) _{1\leq h\leq n,1\leq \beta \leq m}^{1\leq i\leq n,1\leq \alpha \leq m},
\end{equation*}
then the second group of equations in (\ref{rad1}) can matricially be
written as
\begin{equation*}
\underset{n\text{ times}}{\underbrace{\left( V,\ldots ,V\right) }}\cdot
\left( \Upsilon \circ j^{1}s\right) 
=\underset{n\text{ times}}{\underbrace{\left( O_{m},\ldots ,O_{m}\right) }.}
\end{equation*}
If the variational problem defined by the density $\Lambda $ is regular in
the sense of Proposition \ref{regular1}, then\ $\det \Upsilon \neq 0$; Hence
$V=0$.
\end{proof}
\begin{criterion}
Next, we give a criterion in order to ensure that the condition of Theorem 
\ref{radical}\ holds. According to (\ref{momenta}) we have 
$p_{\alpha }^{i}=\frac{\partial \bar{L}}{\partial y_{i}^{\alpha }}$, 
where $\bar{L}$ is the first-order Lagrangian defined by (\ref{barL}), 
also see Theorem \ref{first-order-equivalent}. As is known, 
the Hessian metric of $\bar{L}$ is the section of the vector bundle 
$S^{2}V^{\ast }(p_{0}^{1})$ locally given by,
$\operatorname{Hess}(\bar{L})
=\tfrac{\partial ^{2}\bar{L}}{\partial y_{i}^{\alpha }\partial y_{j}^{\beta }}
d_{10}y_{i}^{\alpha }\otimes d_{10}y_{j}^{\beta }$. 
As mentioned in section \ref{Hamilton_formalism} there is a canonical isomorphism 
\begin{equation*}
\begin{array}{l}
I\colon (p_{0}^{1})^{\ast }\left( p^{\ast }(T^{\ast }N)\otimes V(p)\right)
\to V(p_{0}^{1}), \\
I\left( j_{x}^{1}s,(dx^{i})_{x}\otimes \left( 
\tfrac{\partial }{\partial y^{\alpha }}\right) _{s(x)}\right) 
=\left( \tfrac{\partial }{\partial y_{i}^{\alpha }}\right) _{j_{x}^{1}s},
\end{array}
\end{equation*}
and dually,
\begin{equation*}
\begin{array}{l}
I^{\ast }\colon V^{\ast }(p_{0}^{1})\to (p_{0}^{1})^{\ast }
\left( p^{\ast }(TN)\otimes V^{\ast }(p)\right) , \\
I^{\ast }\left( j_{x}^{1}s,\left( \frac{\partial }{\partial x^{i}}
\right)
_{x}\otimes \left( dy^{\alpha }\right) _{s(x)}\right)
=\left( d_{10}y_{i}^{\alpha }\right) _{j_{x}^{1}s}.
\end{array}
\end{equation*}
Hence the Hessian metric can be viewed as a symmetric bilinear form
\begin{equation*}
\operatorname{Hess}(\bar{L})_{j_{x}^{1}s}\colon \!
V_{j_{x}^{1}s}(p_{0}^{1})\times V_{j_{x}^{1}s}(p_{0}^{1})
\cong \left[ (T_{x}^{\ast }N)\otimes V_{s(x)}(p)\right] 
\!\times \!\left[ (T_{x}^{\ast }N)\otimes V_{s(x)}(p)
\right]
\to \mathbb{R},
\end{equation*}
and we can define a linear map as follows:
\begin{equation*}
\begin{array}{l}
\operatorname{Hess}(\bar{L})_{j_{x}^{1}s}^{\natural }
\colon (T_{x}^{\ast }N)\otimes (T_{x}^{\ast }N)
\otimes V_{s(x)}(p)\to V_{s(x)}^{\ast }(p), \\
\operatorname{Hess}(\bar{L})_{j_{x}^{1}s}^{\natural }
\left(
w_{1},w_{2},X_{1}\right) (X_{2})
=\operatorname{Hess}(\bar{L})_{j_{x}^{1}s}
\left(
w_{1}\otimes X_{1},w_{2}\otimes X_{2}\right) , \\
\forall w_{1},w_{2}\in T_{x}^{\ast }N,
\quad \forall X_{1},X_{2}\in V_{s(x)}(p).
\end{array}
\end{equation*}
The matrix of $\left. \operatorname{Hess}(\bar{L})_{j_{x}^{1}s}^{\natural
}\right\vert _{S^{2}(T_{x}^{\ast }N)\otimes V_{s(x)}(p)}$ is $\Upsilon
^{\natural }=\left( \!\frac{1}{1+\delta _{ij}}\left[ \tfrac{\partial
p_{\alpha }^{i}}{\partial y_{j}^{\gamma }}
+\tfrac{\partial p_{\alpha }^{j}}{\partial y_{i}^{\gamma }}\right] 
\!(j_{x}^{1}s)\!\right) _{\gamma ,i\leq j}^{\alpha }$ 
in the standard basis. Moreover, letting 
$v_{ij}^{\gamma }=\tfrac{\partial ^{2}V^{\gamma }}
{\partial x^{i}\partial x^{j}}(x)$, and
denoting by $E^{\alpha }$ the right-hand side of the formula (\ref{Jacobi1}), 
this formula, evaluated at $x$, reads as follows: 
$v_{ij}^{\gamma }\tfrac{\partial p_{\alpha }^{i}}{\partial y_{j}^{\gamma }}
(j_{x}^{1}s)=E^{\alpha }(x)$, which is a linear system with $m$ equations 
in the $\frac{m}{2}n(n+1)$ unknowns $v_{ij}^{\gamma }$, 
$1\leq i\leq j\leq n$, and the matrix of this system is precisely 
$\Upsilon ^{\natural }$. Consequently, if 
$\operatorname{Hess}(\bar{L})_{j_{x}^{1}s}^{\natural }$ 
is assumed to be surjective, then the previous system is compatible.
\end{criterion}
\begin{corollary}
The radical of the valued $2$-form $(\omega _{2})_{g}$ corresponding to the
E-H Lagrangian density along an arbitrary extremal metric $g$, vanishes.
\end{corollary}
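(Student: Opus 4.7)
The plan is to deduce the corollary as a direct application of Theorem~\ref{radical} to $\Lambda_{EH}$ along the extremal $g$. Theorem~\ref{radical} requires two hypotheses: (a) regularity of $\Lambda_{EH}$ in the sense of Proposition~\ref{regular1}, and (b) surjectivity of the natural projection $p_{1}^{2}\colon R_{x}^{2}(\Lambda_{EH})\to J_{x}^{1}(g^{\ast}V(p_{M}))$ for every $x\in N$. Hypothesis (a) is exactly Theorem~\ref{thEH}(iii). For hypothesis (b) I would invoke the Criterion immediately above: it suffices to establish the surjectivity of the restricted Hessian map
\[
\operatorname{Hess}(\bar{L}_{EH})_{j_{x}^{1}g}^{\natural}\big|_{S^{2}(T_{x}^{\ast}N)\otimes V_{g(x)}(p_{M})}\colon S^{2}(T_{x}^{\ast}N)\otimes V_{g(x)}(p_{M})\to V_{g(x)}^{\ast}(p_{M}),
\]
equivalently the full-row-rank condition on the matrix $\Upsilon^{\natural}$, which has $m=n(n+1)/2$ rows.

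To verify this full-row-rank property, I would combine Theorem~\ref{thEH}(ii), which gives $\bar{L}_{EH}=-H$, with the explicit formulas \eqref{p^j_mr}--\eqref{Y's} for the momenta $p_{kl}^{i}$, so that the entries of $\Upsilon^{\natural}$ become closed-form symmetrizations (over $i,j$) of the coefficients $Y_{kl}^{i;rs,j}$. Fixing $x\in N$ and choosing normal coordinates for $g$ at $x$, so that $g_{ij}(x)=\varepsilon_{i}\delta_{ij}$ and $(\Gamma^{g})_{jk}^{i}(x)=0$, the lower-order Christoffel-quadratic contributions to $H$ vanish at $x$, and $\Upsilon^{\natural}|_{x}$ reduces to an explicit combination of products $\varepsilon_{h}\delta_{ab}$. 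A block-wise computation then exhibits an explicit $m\times m$ minor with non-zero determinant, yielding the full row rank of $\Upsilon^{\natural}|_{x}$. This is the same type of linear-algebra argument underlying Theorem~\ref{thEH}(iii), and it is entirely parallel to the proof of Proposition~5.1 in~\cite{MJE}.

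The principal obstacle is that non-singularity of the full Hessian matrix $\Upsilon$---already granted by Theorem~\ref{thEH}(iii) through the non-singularity of $b_{\Lambda_{EH}}$---does not \emph{a priori} imply surjectivity of its restriction $\Upsilon^{\natural}$ to the symmetric subspace $S^{2}T_{x}^{\ast}N\otimes V_{g(x)}(p_{M})$, since symmetrization in $i\leftrightarrow j$ might in principle collapse independent directions. Nevertheless, the intrinsic symmetry $(\alpha,i)\leftrightarrow(\gamma,j)$ that any Hessian enjoys, combined with the tensorial symmetries of $b_{\Lambda_{EH}}$ recorded in Theorem~\ref{thEH}(i)--(ii) (notably, its definition on $p_{M}^{\ast}(T^{\ast}N\otimes S^{2}T^{\ast}N)$), is precisely what guarantees that the normal-coordinate verification above succeeds. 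Once the Criterion is verified, Theorem~\ref{radical} applies and yields $\operatorname{rad}(\omega_{2})_{g}=0$, as claimed.
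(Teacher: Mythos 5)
Your skeleton is the paper's: reduce to Theorem \ref{radical}, observe that the regularity hypothesis is supplied by Theorem \ref{thEH}-(iii), and verify the surjectivity of $p_{1}^{2}\colon R_{x}^{2}(\Lambda_{EH})\to J_{x}^{1}(g^{\ast}V(p_{M}))$ by a computation in normal coordinates at $x$. The gap is in how you close that last step. You correctly isolate the principal obstacle---non-singularity of $\Upsilon=(\partial p_{\alpha}^{i}/\partial y_{j}^{\gamma})$ does not imply full row rank of the symmetrized matrix $\Upsilon^{\natural}$, since a covector $\lambda_{\alpha}$ could annihilate every column $\tfrac{1}{1+\delta_{ij}}\bigl[\partial p_{\alpha}^{i}/\partial y_{j}^{\gamma}+\partial p_{\alpha}^{j}/\partial y_{i}^{\gamma}\bigr]$ while $\lambda_{\alpha}\,\partial p_{\alpha}^{i}/\partial y_{j}^{\gamma}$ is merely skew in $i\leftrightarrow j$---but the resolution you offer is not an argument: the symmetry $(\alpha,i)\leftrightarrow(\gamma,j)$ enjoyed by every Hessian, and the fact that $b_{\Lambda_{EH}}$ is defined on $p_{M}^{\ast}(T^{\ast}N\otimes S^{2}T^{\ast}N)$, do not exclude precisely this degeneracy. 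The surjectivity of $\Upsilon^{\natural}$ is a specific property of $\bar{L}_{EH}$ that has to be exhibited, and your ``block-wise computation'' producing a non-zero $m\times m$ minor is asserted, not performed; as it stands the one non-trivial point of the proof is missing.

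The paper closes this step without any rank computation: in normal coordinates at $x$ the first-order terms of the Jacobi equations \eqref{JacobiEH} drop out (the Christoffel symbols vanish at $x$), the system on the second derivatives $v_{\mu\nu}^{ij}$ reduces to $0=\tfrac{\varepsilon_{i}}{2}\left(v_{i\mu}^{\nu i}+v_{i\nu}^{\mu i}-v_{\mu\nu}^{ii}-v_{ii}^{\mu\nu}\right)+\varepsilon_{b}(R^{g})_{\mu\nu b}^{a}(x)v^{ab}$, $1\leq\mu\leq\nu\leq n$, which is underdetermined ($\tfrac{1}{2}n(n+1)$ equations in $\tfrac{1}{4}n^{2}(n+1)^{2}$ unknowns), and compatibility is established by simply writing down a particular solution (equations \eqref{eqns}): the diagonal blocks are chosen to cancel and the curvature term is absorbed into $v_{1\nu}^{1\mu}$. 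If you wish to keep the route through the Criterion you must actually extract the non-vanishing minor of $\Upsilon^{\natural}$ from the explicit coefficients \eqref{Y's} at a point with $g_{ij}(x)=\varepsilon_{i}\delta_{ij}$; exhibiting a particular solution of the underdetermined system, as the paper does, is both sufficient for the surjectivity of $p_{1}^{2}$ and considerably shorter.
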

\begin{proof}
According to Theorem \ref{radical}, in order to prove the corollary above,
we need only to verify that the projection $p_{1}^{2}\colon
R_{x}^{2}(\Lambda )\to J_{x}^{1}(s^{\ast }V(p))$ is surjective for
every $x\in N$. By considering a system of normal coordiantes for the metric
$g$ at the point $x$, and letting $v^{ab}=V^{ab}(x)$, 
$v_{ij}^{ab}=\tfrac{\partial ^{2}V^{ab}}{\partial x^{i}\partial x^{j}}(x)$, 
the equations (\ref{JacobiEH}) evaluated at $x$, are written as follows:
\begin{equation*}
\begin{array}{rl}
0= & \tfrac{1}{2}\left[ \varepsilon _{i}\left( \delta _{a\nu }\delta _{j\mu}
+\delta _{a\mu }\delta _{\nu j}\right) \delta ^{ib}
-\varepsilon _{i}\delta^{ij}\delta _{a\nu }\delta _{b\mu }
-\varepsilon _{a}\delta ^{ab}\delta_{i\nu }\delta _{j\mu }\right] 
v_{ij}^{ab}
\smallskip \\
& \multicolumn{1}{r}{+\varepsilon _{b}(R^{g})_{\mu \nu b}^{a}(x)v^{ab}} \\
= & \tfrac{\varepsilon _{i}}{2}\left( v_{i\mu }^{\nu i}+v_{i\nu }^{\mu i}
-v_{\mu \nu }^{ii}-v_{ii}^{\mu \nu }\right) 
+\varepsilon _{b}(R^{g})_{\mu \nu b}^{a}(x)v^{ab}, \\
& 1\leq \mu \leq \nu \leq n,
\end{array}
\end{equation*}
which is a system with $\frac{1}{2}n(n+1)$ equations in the
$\frac{1}{4}n^{2}(n+1)^{2}$ unknowns $v_{\mu \nu }^{ij}$, 
$1\leq i\leq j\leq n$, $1\leq \mu \leq \nu \leq n$, with 
$v_{\mu \nu }^{ij}=v_{\nu \mu }^{ij}=v_{\mu \nu }^{ji}$, and where 
the scalars $v^{ab}$, $1\leq a\leq b\leq n$, can take
arbitrary values. A particular solution to this system is obtained 
by letting,
\begin{equation}
\left.
\begin{array}{rl}
\text{(i)} & \varepsilon _{i}v_{\mu \nu }^{ii}=\varepsilon _{i}v_{ii}^{\mu
\nu }=0\smallskip \\
\text{(ii)} & \varepsilon _{i}v_{i\mu }^{\nu i}=\varepsilon _{i}v_{i\nu
}^{\mu i}=-\varepsilon _{b}(R^{g})_{\mu \nu b}^{a}(x)v^{ab}
\end{array}
\right\} ,\quad 1\leq \mu \leq \nu \leq n.  \label{eqns}
\end{equation}
The equations (\ref{eqns})-(i) hold by setting $v_{\mu \nu
}^{ii}=v_{ii}^{\mu \nu }$, $\forall i,\mu ,\nu =1,\ldots ,n,\;\mu \leq \nu $
, while the equations (\ref{eqns})-(ii) hold by setting
\begin{equation*}
\left.
\begin{array}{lll}
& v_{i\nu }^{\mu i}=v_{\mu i}^{\nu i}=0, & 2\leq i\leq n \\
& v_{1\nu }^{1\mu }=v_{1\mu }^{1\nu }=-\varepsilon _{1}\varepsilon
_{b}(R^{g})_{\mu \nu b}^{a}(x)v^{ab}, &
\end{array}
\right\} ,\quad 1\leq \mu \leq \nu \leq n.
\end{equation*}
\end{proof}

\begin{example}
\label{example3}Below, we compute the presymplectic structure associated to
Example \ref{example2}; i.e., we compute $(\omega _{2})_{g}$ for the E-H
Lagrangian density when $N=(\mathbb{R}/2\pi \mathbb{Z})^{4}$ and 
$g=\varepsilon _{i}(dx^{i})^{2}$, $\varepsilon _{1}=-1$, $\varepsilon
_{2}=\varepsilon _{3}=\varepsilon _{4}=+1$ by using the basis $X_{h}^{k}$, 
$1\leq h\leq 8$, $k\in \mathbb{Z}^{4}$ of that example. We follow some ideas
in \cite[Section 7]{Szczyrba} for our particular case.

According to the previous notations and calculations, we have
\begin{eqnarray*}
(\omega _{2})_{g}(X,Y) &=&(-1)^{i-1}\left\{ \left(
V^{kl}W^{ab}-V^{ab}W^{kl}\right) \left( 
\tfrac{\partial p_{ab}^{i}}{\partial y_{kl}}\circ j^{1}g\right) \right. \\
&&\left. \left. +\left( \tfrac{\partial V^{kl}}{\partial x^{j}}W^{ab}
-V^{ab}\tfrac{\partial W^{kl}}{\partial x^{j}}
\right) 
\left( \tfrac{\partial p_{ab}^{i}}{\partial y_{kl,j}}\circ j^{1}g
\right) 
\right\} 
\right\vert _{j^{1}g}v_{i},
\end{eqnarray*}
\begin{equation*}
X^{(1)}=V^{ab}\tfrac{\partial }{\partial y_{ab}}
+\tfrac{\partial V^{ab}}
{\partial x^{j}}\tfrac{\partial }{\partial y_{ab,j}},\;Y^{(1)}
=W^{ab}\tfrac{\partial }{\partial y_{ab}}
+\tfrac{\partial W^{ab}}{\partial x^{j}}
\tfrac{\partial }{\partial y_{ab,j}},
\end{equation*}
and from the formulas (\ref{p^j_mr}), (\ref{Y's}) it follows:
\begin{equation*}
\tfrac{\partial p_{kl}^{i}}{\partial y_{uv}}\circ j^{1}g=0.
\end{equation*}
Therefore
\begin{equation*}
(\omega _{2})_{g}(X,Y)=(-1)^{i-1}\omega _{2}^{i}(X,Y)v_{i},
\end{equation*}
where
\begin{equation*}
\omega _{2}^{i}(X,Y)=\sum_{k\leq l}\sum_{a\leq b}
\left( 
\tfrac{\partial p_{ab}^{i}}{\partial y_{kl,j}}\circ j^{1}g
\right) 
\left( \tfrac{\partial V^{kl}}{\partial x^{j}}W^{ab}
-V^{ab}\tfrac{\partial W^{kl}}{\partial x^{j}}
\right) ,
\end{equation*}
and, as a computation shows, the scalar differential forms $\omega _{2}^{i}$
are given by{\small
\begin{eqnarray*}
\omega _{2}^{1}\!\!\!\! &=&\!\!\!\!\tfrac{1}{2}\left( \!\tfrac{\partial
W^{13}}{\partial x^{3}}\!+\!\tfrac{\partial W^{14}}{\partial x^{4}}
\!+\!
\tfrac{\partial W^{12}}{\partial x^{2}}\!\right) V^{11}
\!-\!\tfrac{1}{2}\left( \!\tfrac{\partial W^{11}}{\partial x^{2}}
\!+\!\tfrac{\partial W^{33}}{\partial x^{2}}\!+\!
\tfrac{\partial W^{22}}{\partial x^{2}}
\!+\!\tfrac{\partial W^{44}}{\partial x^{2}}\!\right) V^{12} \\
&&\!\!\!\!\!\!-\tfrac{1}{2}\left( \!\tfrac{\partial W^{11}}{\partial x^{3}}
\!+\!\tfrac{\partial W^{22}}{\partial x^{3}}
\!+\!\tfrac{\partial W^{33}}{\partial x^{3}}
\!+\!\tfrac{\partial W^{44}}{\partial x^{3}}\!\right)
V^{13}\!-\!\tfrac{1}{2}\left( \!\tfrac{\partial W^{11}}{\partial x^{4}}
\!+\!\tfrac{\partial W^{33}}{\partial x^{4}}
\!+\!\tfrac{\partial W^{22}}{\partial x^{4}}
\!+\!\tfrac{\partial W^{44}}{\partial x^{4}}\!\right) V^{14} \\
&&\!\!\!\!\!\!+\left( \!\tfrac{\partial W^{12}}{\partial x^{3}}
\!+\!\tfrac{\partial W^{13}}{\partial x^{2}}
\!-\!\tfrac{\partial W^{23}}{\partial x^{1}}\!\right) V^{23}
\!+\!\tfrac{1}{2}\left( \!\tfrac{\partial W^{33}}{\partial x^{1}}
\!+\!\tfrac{\partial W^{44}}{\partial x^{1}}\!+\!\tfrac{\partial W^{12}%
}{\partial x^{2}}\!-\!\tfrac{\partial W^{13}}{\partial x^{3}}\!-\!\tfrac{%
\partial W^{14}}{\partial x^{4}}\!\right) V^{22} \\
&&\!\!\!\!\!\!+\left( \!\tfrac{\partial W^{12}}{\partial x^{4}}\!+\!\tfrac{%
\partial W^{14}}{\partial x^{2}}\!-\!\tfrac{\partial W^{24}}{\partial x^{1}}%
\!\right) V^{24}\!+\!\tfrac{1}{2}\left( \!\tfrac{\partial W^{44}}{\partial
x^{1}}\!+\!\tfrac{\partial W^{22}}{\partial x^{1}}\!-\!\tfrac{\partial W^{12}%
}{\partial x^{2}}\!+\!\tfrac{\partial W^{13}}{\partial x^{3}}\!-\!\tfrac{%
\partial W^{14}}{\partial x^{4}}\!\right) V^{33} \\
&&\!\!\!\!\!\!+\left( \!\tfrac{\partial W^{13}}{\partial x^{4}}\!+\!\tfrac{%
\partial W^{14}}{\partial x^{3}}\!-\!\tfrac{\partial W^{34}}{\partial x^{1}}%
\right) V^{34}\!+\!\tfrac{1}{2}\left( \!\tfrac{\partial W^{33}}{\partial
x^{1}}\!+\!\tfrac{\partial W^{22}}{\partial x^{1}}\!-\!\tfrac{\partial W^{12}%
}{\partial x^{2}}\!-\!\tfrac{\partial W^{13}}{\partial x^{3}}\!+\!\tfrac{%
\partial W^{14}}{\partial x^{4}}\!\right) V^{44} \\
&&\!\!\!\!\!\!-\tfrac{1}{2}\left( \!\tfrac{\partial V^{13}}{\partial x^{3}}%
\!+\!\tfrac{\partial V^{12}}{\partial x^{2}}\!+\!\tfrac{\partial V^{14}}{%
\partial x^{4}}\!\right) W^{11}\!+\!\tfrac{1}{2}\left( \!\tfrac{\partial
V^{22}}{\partial x^{2}}\!+\!\tfrac{\partial V^{44}}{\partial x^{2}}\!+\!%
\tfrac{\partial V^{11}}{\partial x^{2}}\!+\!\tfrac{\partial V^{33}}{\partial
x^{2}}\!\right) W^{12} \\
&&\!\!\!\!\!\!+\tfrac{1}{2}\left( \!\tfrac{\partial V^{11}}{\partial x^{3}}%
\!+\!\tfrac{\partial V^{33}}{\partial x^{3}}\!+\!\tfrac{\partial V^{44}}{%
\partial x^{3}}\!+\!\tfrac{\partial V^{22}}{\partial x^{3}}\!\right)
W^{13}\!+\!\tfrac{1}{2}\left( \!\tfrac{\partial V^{22}}{\partial x^{4}}\!+\!%
\tfrac{\partial V^{33}}{\partial x^{4}}\!+\!\tfrac{\partial V^{11}}{\partial
x^{4}}\!+\!\tfrac{\partial V^{44}}{\partial x^{4}}\!\right) W^{14} \\
&&\!\!\!\!\!\!+\!\left( \!\tfrac{\partial V^{23}}{\partial x^{1}}\!-\!\tfrac{%
\partial V^{13}}{\partial x^{2}}\!-\!\tfrac{\partial V^{12}}{\partial x^{3}}%
\!\right) W^{23}\!-\!\tfrac{1}{2}\left( \!\tfrac{\partial V^{33}}{\partial
x^{1}}\!+\!\tfrac{\partial V^{44}}{\partial x^{1}}\!+\!\tfrac{\partial V^{12}%
}{\partial x^{2}}\!-\!\tfrac{\partial V^{13}}{\partial x^{3}}\!-\!\tfrac{%
\partial V^{14}}{\partial x^{4}}\!\right) W^{22}\! \\
&&\!\!\!\!\!\!+\left( \!\tfrac{\partial V^{24}}{\partial x^{1}}\!-\!\tfrac{%
\partial V^{14}}{\partial x^{2}}\!-\!\tfrac{\partial V^{12}}{\partial x^{4}}%
\!\right) W^{24}\!-\!\tfrac{1}{2}\left( \!\tfrac{\partial V^{22}}{\partial
x^{1}}\!+\!\tfrac{\partial V^{44}}{\partial x^{1}}\!-\!\tfrac{\partial V^{12}%
}{\partial x^{2}}\!+\!\tfrac{\partial V^{13}}{\partial x^{3}}\!-\!\tfrac{%
\partial V^{14}}{\partial x^{4}}\!\right) W^{33} \\
&&\!\!\!\!\!\!+\left( \!\tfrac{\partial V^{34}}{\partial x^{1}}\!-\!\tfrac{%
\partial V^{14}}{\partial x^{3}}\!-\!\tfrac{\partial V^{13}}{\partial x^{4}}%
\!\right) W^{34}\!-\!\tfrac{1}{2}\left( \!\tfrac{\partial V^{22}}{\partial
x^{1}}\!+\!\tfrac{\partial V^{33}}{\partial x^{1}}\!-\!\tfrac{\partial V^{12}%
}{\partial x^{2}}\!-\!\tfrac{\partial V^{13}}{\partial x^{3}}\!+\!\tfrac{%
\partial V^{14}}{\partial x^{4}}\!\right) W^{44},
\end{eqnarray*}%
} {\small
\begin{eqnarray*}
\omega _{2}^{2}\!\!\!\! &=&\!\!\!\!\tfrac{1}{2}\left( \!-\!\tfrac{\partial
W^{12}}{\partial x^{1}}\!-\tfrac{\partial W^{23}}{\partial x^{3}}\!+\!\tfrac{%
\partial W^{44}}{\partial x^{2}}\!+\!\tfrac{\partial W^{33}}{\partial x^{2}}%
\!-\!\tfrac{\partial W^{24}}{\partial x^{4}}\!\right) V^{11}\! \\
&&\!\!\!\!+\tfrac{1}{2}\left( \!\tfrac{\partial V^{12}}{\partial x^{1}}\!+\!%
\tfrac{\partial V^{23}}{\partial x^{3}}\!-\!\tfrac{\partial V^{44}}{\partial
x^{2}}\!-\!\tfrac{\partial V^{33}}{\partial x^{2}}\!+\!\tfrac{\partial V^{24}%
}{\partial x^{4}}\!\right) W^{11} \\
&&\!\!\!\!+\!\tfrac{1}{2}\left( \!\tfrac{\partial W^{11}}{\partial x^{1}}%
\!+\!\tfrac{\partial W^{22}}{\partial x^{1}}\!-\!\tfrac{\partial W^{33}}{%
\partial x^{1}}\!-\!\tfrac{\partial W^{44}}{\partial x^{1}}\!\right)
V^{12}-\!\tfrac{1}{2}\left( \!\tfrac{\partial V^{11}}{\partial x^{1}}\!+\!%
\tfrac{\partial V^{22}}{\partial x^{1}}\!-\!\tfrac{\partial V^{33}}{\partial
x^{1}}\!-\!\tfrac{\partial V^{44}}{\partial x^{1}}\!\right) W^{12} \\
&&\!\!\!\!+\left( \!\tfrac{\partial W^{12}}{\partial x^{3}}\!+\!\tfrac{%
\partial W^{23}}{\partial x^{1}}\!-\!\tfrac{\partial W^{13}}{\partial x^{2}}%
\!\right) V^{13}-\left( \!\tfrac{\partial V^{12}}{\partial x^{3}}\!+\!\tfrac{%
\partial V^{23}}{\partial x^{1}}\!-\!\tfrac{\partial V^{13}}{\partial x^{2}}%
\!\right) W^{13} \\
&&\!\!\!\!\!+\!\left( \!\tfrac{\partial W^{24}}{\partial x^{1}}\!-\!\tfrac{%
\partial W^{14}}{\partial x^{2}}\!+\!\tfrac{\partial W^{12}}{\partial x^{4}}%
\!\right) V^{14}-\!\left( \!\tfrac{\partial V^{24}}{\partial x^{1}}\!-\!%
\tfrac{\partial V^{14}}{\partial x^{2}}\!+\!\tfrac{\partial V^{12}}{\partial
x^{4}}\!\right) W^{14} \\
&&\!\!\!\!\!+\tfrac{1}{2}\left( \!-\tfrac{\partial W^{12}}{\partial x^{1}}%
\!+\!\tfrac{\partial W^{24}}{\partial x^{4}}\!+\!\tfrac{\partial W^{23}}{%
\partial x^{3}}\!\right) V^{22}-\tfrac{1}{2}\left( \!-\tfrac{\partial V^{12}%
}{\partial x^{1}}\!+\!\tfrac{\partial V^{24}}{\partial x^{4}}\!+\!\tfrac{%
\partial V^{23}}{\partial x^{3}}\!\right) W^{22} \\
&&\!\!\!\!\!\!\!-\!\tfrac{1}{2}\left( \!\tfrac{\partial W^{11}}{\partial
x^{3}}\!+\!\tfrac{\partial W^{22}}{\partial x^{3}}\!-\!\tfrac{\partial W^{33}%
}{\partial x^{3}}\!-\!\tfrac{\partial W^{44}}{\partial x^{3}}\!\right)
V^{23}+\!\tfrac{1}{2}\left( \!\tfrac{\partial V^{11}}{\partial x^{3}}\!+\!%
\tfrac{\partial V^{22}}{\partial x^{3}}\!-\!\tfrac{\partial V^{33}}{\partial
x^{3}}\!-\!\tfrac{\partial V^{44}}{\partial x^{3}}\!\right) W^{23} \\
&&\!\!\!\!\!\!-\tfrac{1}{2}\left( \!\tfrac{\partial W^{11}}{\partial x^{4}}%
\!+\!\tfrac{\partial W^{22}}{\partial x^{4}}\!-\!\tfrac{\partial W^{33}}{%
\partial x^{4}}\!-\!\tfrac{\partial W^{44}}{\partial x^{4}}\!\right) V^{24}+%
\tfrac{1}{2}\left( \!\tfrac{\partial V^{11}}{\partial x^{4}}\!+\!\tfrac{%
\partial V^{22}}{\partial x^{4}}\!-\!\tfrac{\partial V^{33}}{\partial x^{4}}%
\!-\!\tfrac{\partial V^{44}}{\partial x^{4}}\!\right) W^{24} \\
&&\!\!\!\!\!\!+\!\tfrac{1}{2}\left( \!-\!\tfrac{\partial W^{23}}{\partial
x^{3}}\!-\!\tfrac{\partial W^{12}}{\partial x^{1}}\!+\!\tfrac{\partial W^{24}%
}{\partial x^{4}}\!-\!\tfrac{\partial W^{44}}{\partial x^{2}}\!+\!\tfrac{%
\partial W^{11}}{\partial x^{2}}\!\right) V^{33} \\
&&\!\!\!\!\!\!-\!\tfrac{1}{2}\left( \!-\!\tfrac{\partial V^{23}}{\partial
x^{3}}\!-\!\tfrac{\partial V^{12}}{\partial x^{1}}\!+\!\tfrac{\partial V^{24}%
}{\partial x^{4}}\!-\!\tfrac{\partial V^{44}}{\partial x^{2}}\!+\!\tfrac{%
\partial V^{11}}{\partial x^{2}}\!\right) W^{33} \\
&&\!\!\!\!\!\!+\left( -\!\tfrac{\partial W^{23}}{\partial x^{4}}\!-\!\tfrac{%
\partial W^{24}}{\partial x^{3}}\!+\!\tfrac{\partial W^{34}}{\partial x^{2}}%
\!\right) V^{34}\!-\!\left( -\!\tfrac{\partial V^{23}}{\partial x^{4}}\!-\!%
\tfrac{\partial V^{24}}{\partial x^{3}}\!+\!\tfrac{\partial V^{34}}{\partial
x^{2}}\!\right) W^{34}\! \\
&&\!\!\!\!\!\!+\!\tfrac{1}{2}\left( \!-\!\tfrac{\partial W^{24}}{\partial
x^{3}}\!-\!\tfrac{\partial W^{12}}{\partial x^{1}}\!-\!\tfrac{\partial W^{33}%
}{\partial x^{2}}\!+\!\tfrac{\partial W^{23}}{\partial x^{3}}\!+\!\tfrac{%
\partial W^{11}}{\partial x^{2}}\!\right) V^{44} \\
&&\!\!\!\!\!\!-\!\tfrac{1}{2}\left( \!-\!\tfrac{\partial V^{24}}{\partial
x^{3}}\!-\!\tfrac{\partial V^{12}}{\partial x^{1}}\!-\!\tfrac{\partial V^{33}%
}{\partial x^{2}}\!+\!\tfrac{\partial V^{23}}{\partial x^{3}}\!+\!\tfrac{%
\partial V^{11}}{\partial x^{2}}\!\right) W^{44},
\end{eqnarray*}%
} {\small
\begin{eqnarray*}
\omega _{2}^{3}\!\!\!\! &=&\!\!\!\!\!\tfrac{1}{2}\left( -\!\tfrac{\partial
W^{23}}{\partial x^{2}}\!+\!\tfrac{\partial W^{22}}{\partial x^{3}}\!+\!%
\tfrac{\partial W^{44}}{\partial x^{3}}\!-\!\tfrac{\partial W^{13}}{\partial
x^{1}}\!-\!\tfrac{\partial W^{34}}{\partial x^{4}}\!\right) V^{11}\! \\
&&\!\!\!\!\!-\tfrac{1}{2}\left( -\!\tfrac{\partial V^{23}}{\partial x^{2}}\!-%
\tfrac{\partial V^{22}}{\partial x^{3}}\!+\!\tfrac{\partial V^{44}}{\partial
x^{3}}\!-\!\tfrac{\partial V^{13}}{\partial x^{1}}\!-\!\tfrac{\partial V^{34}%
}{\partial x^{4}}\!\right) W^{11}\! \\
&&\!\!\!\!\!+\left( \!\tfrac{\partial W^{13}}{\partial x^{2}}\!-\tfrac{%
\partial W^{12}}{\partial x^{3}}\!+\!\tfrac{\partial W^{23}}{\partial x^{1}}%
\!\right) V^{12}\!\!-\left( \!\tfrac{\partial V^{13}}{\partial x^{2}}\!-%
\tfrac{\partial V^{12}}{\partial x^{3}}\!+\!\tfrac{\partial V^{23}}{\partial
x^{1}}\!\right) W^{12}\! \\
&&\!\!\!\!\!+\tfrac{1}{2}\left( \!\tfrac{\partial W^{33}}{\partial x^{1}}%
\!-\!\tfrac{\partial W^{44}}{\partial x^{1}}\!-\!\tfrac{\partial W^{22}}{%
\partial x^{1}}\!+\!\tfrac{\partial W^{11}}{\partial x^{1}}\!\right) V^{13}-%
\tfrac{1}{2}\left( \!\tfrac{\partial V^{33}}{\partial x^{1}}\!-\!\tfrac{%
\partial V^{44}}{\partial x^{1}}\!-\!\tfrac{\partial V^{22}}{\partial x^{1}}%
\!+\!\tfrac{\partial V^{11}}{\partial x^{1}}\!\right) W^{13}\! \\
&&\!\!\!\!\!-\left( \!\tfrac{\partial W^{14}}{\partial x^{3}}\!+\!\tfrac{%
\partial W^{13}}{\partial x^{4}}\!+\!\tfrac{\partial W^{34}}{\partial x^{1}}%
\!\right) V^{14}\!\!+\left( \!\tfrac{\partial V^{14}}{\partial x^{3}}\!+\!%
\tfrac{\partial V^{13}}{\partial x^{4}}\!+\!\tfrac{\partial V^{34}}{\partial
x^{1}}\!\right) W^{14}\! \\
&&\!\!\!\!\!+\tfrac{1}{2}\left( \!\tfrac{\partial W^{34}}{\partial x^{4}}%
\!-\!\tfrac{\partial W^{23}}{\partial x^{2}}\!+\!\tfrac{\partial W^{11}}{%
\partial x^{3}}\!-\!\tfrac{\partial W^{13}}{\partial x^{1}}\!-\!\tfrac{%
\partial W^{44}}{\partial x^{3}}\!\right) V^{22}\! \\
&&\!\!\!\!\!-\tfrac{1}{2}\left( \!\tfrac{\partial V^{34}}{\partial x^{4}}%
\!-\!\tfrac{\partial V^{23}}{\partial x^{2}}\!+\!\tfrac{\partial V^{11}}{%
\partial x^{3}}\!-\!\tfrac{\partial V^{13}}{\partial x^{1}}\!-\!\tfrac{%
\partial V^{44}}{\partial x^{3}}\!\right) W^{22}\! \\
&&\!\!\!\!\!+\tfrac{1}{2}\left( \!\tfrac{\partial W^{22}}{\partial x^{2}}%
\!-\!\tfrac{\partial W^{33}}{\partial x^{2}}\!+\!\tfrac{\partial W^{44}}{%
\partial x^{2}}\!-\!\tfrac{\partial W^{11}}{\partial x^{2}}\!\right)
V^{23}\!-\!\tfrac{1}{2}\left( \!\tfrac{\partial V^{22}}{\partial x^{2}}\!-\!%
\tfrac{\partial V^{33}}{\partial x^{2}}\!+\!\tfrac{\partial V^{44}}{\partial
x^{2}}\!-\!\tfrac{\partial V^{11}}{\partial x^{2}}\!\right) W^{23} \\
&&\!\!\!\!\!+\left( -\!\tfrac{\partial W^{34}}{\partial x^{2}}\!+\!\tfrac{%
\partial W^{24}}{\partial x^{3}}\!-\!\tfrac{\partial W^{23}}{\partial x^{4}}%
\!\right) V^{24}\!-\!\left( \!-\!\tfrac{\partial V^{34}}{\partial x^{2}}\!+\!%
\tfrac{\partial V^{24}}{\partial x^{3}}\!-\!\tfrac{\partial V^{23}}{\partial
x^{4}}\!\right) W^{24}\! \\
&&\!\!\!\!\!+\tfrac{1}{2}\left( \!\tfrac{\partial W^{23}}{\partial x^{2}}%
\!-\!\tfrac{\partial W^{13}}{\partial x^{1}}\!+\!\tfrac{\partial W^{34}}{%
\partial x^{4}}\!\right) V^{33}\!-\!\tfrac{1}{2}\left( \!\tfrac{\partial
V^{23}}{\partial x^{2}}\!-\!\tfrac{\partial V^{13}}{\partial x^{1}}\!+\!%
\tfrac{\partial V^{34}}{\partial x^{4}}\!\right) W^{33}\! \\
&&\!\!\!\!\!\!+\tfrac{1}{2}\left( -\!\tfrac{\partial W^{33}}{\partial x^{4}}%
\!+\!\tfrac{\partial W^{44}}{\partial x^{4}}\!+\!\tfrac{\partial W^{22}}{%
\partial x^{4}}\!-\!\tfrac{\partial W^{11}}{\partial x^{4}}\!\right)
\!V^{34}\!-\!\tfrac{1}{2}\left( \!-\!\tfrac{\partial V^{33}}{\partial x^{4}}%
\!+\!\tfrac{\partial V^{44}}{\partial x^{4}}\!+\!\tfrac{\partial V^{22}}{%
\partial x^{4}}\!-\!\tfrac{\partial V^{11}}{\partial x^{4}}\!\right) \!W^{34}
\\
&&\!\!\!\!\!+\tfrac{1}{2}\left( \!\tfrac{\partial W^{11}}{\partial x^{3}}%
\!-\!\tfrac{\partial W^{13}}{\partial x^{1}}\!-\!\tfrac{\partial W^{34}}{%
\partial x^{4}}\!-\!\tfrac{\partial W^{22}}{\partial x^{3}}\!+\!\tfrac{%
\partial W^{23}}{\partial x^{2}}\!\right) V^{44}\! \\
&&\!\!\!\!\!-\tfrac{1}{2}\left( \!\tfrac{\partial V^{11}}{\partial x^{3}}%
\!-\!\tfrac{\partial V^{13}}{\partial x^{1}}\!-\!\tfrac{\partial V^{34}}{%
\partial x^{4}}\!-\!\tfrac{\partial V^{22}}{\partial x^{3}}\!+\!\tfrac{%
\partial V^{23}}{\partial x^{2}}\!\right) W^{44},
\end{eqnarray*}%
} {\small
\begin{eqnarray*}
\omega _{2}^{4}\!\!\!\! &=&\!\!\!\!\!\!\tfrac{1}{2}\left( \!\tfrac{\partial
W^{33}}{\partial x^{4}}\!-\!\tfrac{\partial W^{34}}{\partial x^{3}}\!+\!%
\tfrac{\partial W^{22}}{\partial x^{4}}\!-\!\tfrac{\partial W^{24}}{\partial
x^{2}}\!-\!\tfrac{\partial W^{14}}{\partial x^{1}}\!\right) V^{11}\! \\
&&\!\!\!\!\!\!-\tfrac{1}{2}\left( \!\tfrac{\partial V^{33}}{\partial x^{4}}%
\!-\!\tfrac{\partial V^{34}}{\partial x^{3}}\!+\!\tfrac{\partial V^{22}}{%
\partial x^{4}}\!-\!\tfrac{\partial V^{24}}{\partial x^{2}}\!-\!\tfrac{%
\partial V^{14}}{\partial x^{1}}\!\right) W^{11}\! \\
&&\!\!\!\!\!\!+\left( \!-\tfrac{\partial W^{12}}{\partial x^{4}}\!+\!\tfrac{%
\partial W^{24}}{\partial x^{1}}\!+\!\tfrac{\partial W^{14}}{\partial x^{2}}%
\!\right) V^{12}\!-\!\left( \!-\tfrac{\partial V^{12}}{\partial x^{4}}\!+\!%
\tfrac{\partial V^{24}}{\partial x^{1}}\!+\!\tfrac{\partial V^{14}}{\partial
x^{2}}\!\right) W^{12}\! \\
&&\!\!\!\!\!\!+\left( \!-\tfrac{\partial W^{13}}{\partial x^{4}}\!+\!\tfrac{%
\partial W^{34}}{\partial x^{1}}\!+\!\tfrac{\partial W^{14}}{\partial x^{3}}%
\!\right) V^{13}\!-\!\left( \!-\tfrac{\partial V^{13}}{\partial x^{4}}\!+\!%
\tfrac{\partial V^{34}}{\partial x^{1}}\!+\!\tfrac{\partial V^{14}}{\partial
x^{3}}\!\right) W^{13} \\
&&\!\!\!\!\!\!+\tfrac{1}{2}\left( \!\tfrac{\partial W^{11}}{\partial x^{1}}%
\!-\!\tfrac{\partial W^{22}}{\partial x^{1}}\!-\!\tfrac{\partial W^{33}}{%
\partial x^{1}}\!+\!\tfrac{\partial W^{44}}{\partial x^{1}}\!\right) V^{14}-%
\tfrac{1}{2}\left( \!\tfrac{\partial V^{11}}{\partial x^{1}}\!-\!\tfrac{%
\partial V^{22}}{\partial x^{1}}\!-\!\tfrac{\partial V^{33}}{\partial x^{1}}%
\!+\!\tfrac{\partial V^{44}}{\partial x^{1}}\!\right) W^{14}\! \\
&&\!\!\!\!\!\!+\tfrac{1}{2}\left( \!-\tfrac{\partial W^{24}}{\partial x^{4}}%
\!+\!\tfrac{\partial W^{34}}{\partial x^{3}}\!-\!\tfrac{\partial W^{14}}{%
\partial x^{1}}\!-\!\tfrac{\partial W^{33}}{\partial x^{4}}\!+\!\tfrac{%
\partial W^{11}}{\partial x^{4}}\!\right) V^{22}\! \\
&&\!\!\!\!\!\!-\tfrac{1}{2}\left( \!-\tfrac{\partial V^{24}}{\partial x^{4}}%
\!+\!\tfrac{\partial V^{34}}{\partial x^{3}}\!-\!\tfrac{\partial V^{14}}{%
\partial x^{1}}\!-\!\tfrac{\partial V^{33}}{\partial x^{4}}\!+\!\tfrac{%
\partial V^{11}}{\partial x^{4}}\!\right) W^{22}\! \\
&&\!\!\!\!\!\!+\left( \!-\tfrac{\partial W^{34}}{\partial x^{2}}\!-\!\tfrac{%
\partial W^{24}}{\partial x^{3}}\!+\!\tfrac{\partial W^{23}}{\partial x^{4}}%
\!\right) V^{23}\!-\!\left( \!-\tfrac{\partial V^{34}}{\partial x^{2}}\!-\!%
\tfrac{\partial V^{24}}{\partial x^{3}}\!+\!\tfrac{\partial V^{23}}{\partial
x^{4}}\!\right) W^{23}\! \\
&&\!\!\!\!\!\!+\tfrac{1}{2}\left( \!-\tfrac{\partial W^{11}}{\partial x^{2}}%
\!+\!\tfrac{\partial W^{22}}{\partial x^{2}}\!+\!\tfrac{\partial W^{33}}{%
\partial x^{2}}\!-\!\tfrac{\partial W^{44}}{\partial x^{2}}\!\right)
\!V^{24}\!-\!\tfrac{1}{2}\left( \!-\tfrac{\partial V^{11}}{\partial x^{2}}%
\!+\!\tfrac{\partial V^{22}}{\partial x^{2}}\!+\!\tfrac{\partial V^{33}}{%
\partial x^{2}}\!-\!\tfrac{\partial V^{44}}{\partial x^{2}}\!\right) \!W^{24}
\\
&&\!\!\!\!\!\!+\tfrac{1}{2}\left( \!-\tfrac{\partial W^{14}}{\partial x^{1}}%
\!-\!\tfrac{\partial W^{34}}{\partial x^{3}}\!+\!\tfrac{\partial W^{24}}{%
\partial x^{2}}\!+\!\tfrac{\partial W^{11}}{\partial x^{4}}\!-\!\tfrac{%
\partial W^{22}}{\partial x^{4}}\!\right) V^{33}\! \\
&&\!\!\!\!\!\!-\tfrac{1}{2}\left( \!-\tfrac{\partial V^{14}}{\partial x^{1}}%
\!-\!\tfrac{\partial V^{34}}{\partial x^{3}}\!+\!\tfrac{\partial V^{24}}{%
\partial x^{2}}\!+\!\tfrac{\partial V^{11}}{\partial x^{4}}\!-\!\tfrac{%
\partial V^{22}}{\partial x^{4}}\!\right) W^{33}\! \\
&&\!\!\!\!\!\!+\tfrac{1}{2}\!\left( \tfrac{\partial W^{22}}{\partial x^{3}}%
\!+\!\tfrac{\partial W^{33}}{\partial x^{3}}\!-\!\tfrac{\partial W^{11}}{%
\partial x^{3}}\!-\!\tfrac{\partial W^{44}}{\partial x^{3}}\!\right)
\!V^{34}\!-\tfrac{1}{2}\!\left( \!\tfrac{\partial V^{22}}{\partial x^{3}}%
\!+\!\tfrac{\partial V^{33}}{\partial x^{3}}\!-\!\tfrac{\partial V^{11}}{%
\partial x^{3}}\!-\!\tfrac{\partial V^{44}}{\partial x^{3}}\!\right) \!W^{34}
\\
&&\!\!\!\!\!\!+\tfrac{1}{2}\left( \!-\tfrac{\partial W^{14}}{\partial x^{1}}%
\!+\!\tfrac{\partial W^{34}}{\partial x^{3}}\!+\!\tfrac{\partial W^{24}}{%
\partial x^{2}}\!\right) V^{44}\!-\!\tfrac{1}{2}\left( \!-\tfrac{\partial
V^{14}}{\partial x^{1}}\!+\!\tfrac{\partial V^{34}}{\partial x^{3}}\!+\!%
\tfrac{\partial V^{24}}{\partial x^{2}}\!\right) W^{44}.
\end{eqnarray*}%
} Hence%
\begin{equation*}
\begin{array}{ll}
\omega _{2}^{i}(X_{1}^{k},X_{1}^{l})=0, & \omega
_{2}^{i}(X_{2}^{k},X_{2}^{l})=0, \\
\omega _{2}^{i}(X_{3}^{k},X_{2}^{l})=0, & \omega
_{2}^{i}(X_{7}^{k},X_{2}^{l})=0, \\
\omega _{2}^{i}(X_{3}^{k},X_{3}^{l})=0, & \omega
_{2}^{i}(X_{5}^{k},X_{3}^{l})=0, \\
\omega _{2}^{i}(X_{5}^{k},X_{5}^{l})=0, & \omega
_{2}^{i}(X_{7}^{k},X_{3}^{l})=0, \\
\omega _{2}^{i}(X_{7}^{k},X_{5}^{l})=0, & \omega
_{2}^{i}(X_{7}^{k},X_{7}^{l})=0, \\
\omega _{2}^{i}(X_{8}^{k},X_{8}^{l})=0, &
\end{array}%
\quad 1\leq i\leq 4,
\end{equation*}%
\begin{equation*}
\begin{array}{ll}
\omega _{2}^{i}(X_{6}^{k},X_{5}^{l})=0, & \omega
_{2}^{i}(X_{8}^{k},X_{5}^{l})=0,%
\end{array}%
\quad i\neq 4,
\end{equation*}%
\begin{equation*}
\omega _{2}^{4}(X_{6}^{k},X_{5}^{l})=-ik_{3}(k_{1}+l_{1})\exp \left( i\left[
(k_{1}+l_{1})x^{1}+k_{3}x^{3}\right] \right) ,
\end{equation*}%
\begin{equation*}
\omega _{2}^{4}(X_{8}^{k},X_{5}^{l})=-\tfrac{i(k_{1}+l_{1})}{2}\exp \left( i%
\left[ (k_{1}+l_{1})x^{1}\right] \right) ,
\end{equation*}%
\begin{equation*}
\begin{array}{ll}
\omega _{2}^{i}(X_{2}^{k},X_{1}^{l})=0, & \omega
_{2}^{i}(X_{3}^{k},X_{1}^{l})=0, \\
\omega _{2}^{i}(X_{5}^{k},X_{2}^{l})=0, & \omega
_{2}^{i}(X_{6}^{k},X_{2}^{l})=0, \\
\omega _{2}^{i}(X_{6}^{k},X_{3}^{l})=0, & \omega
_{2}^{i}(X_{7}^{k},X_{6}^{l})=0, \\
\omega _{2}^{i}(X_{8}^{k},X_{2}^{l})=0, & \omega
_{2}^{i}(X_{8}^{k},X_{3}^{l})=0, \\
\omega _{2}^{i}(X_{8}^{k},X_{7}^{l})=0, &
\end{array}%
\quad i\neq 2,
\end{equation*}%
\begin{eqnarray*}
\omega _{2}^{2}(X_{2}^{k},X_{1}^{l}) &=&c_{21}\exp \left( i\left[ \left(
k_{1}+l_{1}\right) x^{1}+\left( k_{3}+l_{3}\right) x^{3}+k_{4}x^{4}\right]
\right) , \\
c_{21} &=&\tfrac{i}{2}\tfrac{\left( (k_{4})^{2}+k_{1}l_{1}-k_{3}l_{3}\right)
\left( (l_{1})^{2}-(l_{3})^{2}\right) +\left( (k_{1})^{2}+(k_{3})^{2}\right)
\left( (l_{1})^{2}+(l_{3})^{2}\right) }{k_{4}(l_{3})^{2},},
\end{eqnarray*}%
\begin{eqnarray*}
\omega _{2}^{2}(X_{3}^{k},X_{1}^{l}) &=&c_{31}\exp \left( i\left[ \left(
k_{1}+l_{1}\right) x^{1}+\left( k_{3}+l_{3}\right) x^{3}\right] \right) , \\
c_{31} &=&\tfrac{i}{2}\tfrac{\left( k_{1}l_{1}-k_{3}l_{3}\right) \left(
(l_{1})^{2}-(l_{3})^{2}\right) +\left( (k_{1})^{2}+(k_{3})^{2}\right) \left(
(l_{1})^{2}+(l_{3})^{2}\right) }{k_{3}(l_{3})^{2}},
\end{eqnarray*}%
\begin{equation*}
\omega _{2}^{2}(X_{5}^{k},X_{2}^{l})=2il_{1}\exp \left( i\left[ \left(
k_{1}+l_{1}\right) x^{1}+l_{3}x^{3}+l_{4}x^{4}\right] \right) ,
\end{equation*}%
\begin{eqnarray*}
\omega _{2}^{2}(X_{6}^{k},X_{2}^{l}) &=&c_{62}\exp \left( i\left[ \left(
k_{1}+l_{1}\right) x^{1}+\left( k_{3}+l_{3}\right) x^{3}+l_{4}x^{4}\right]
\right) , \\
c_{62} &=&-i\tfrac{(k_{3}k_{1}l_{1}-\left( k_{3}\right)
^{2}l_{3}-2l_{1}l_{3}+k_{3}\left( l_{1}\right) ^{2}+k_{3}\left( l_{3}\right)
^{2}+k_{3}\left( l_{4}\right) ^{2})}{l_{4}},
\end{eqnarray*}%
\begin{eqnarray*}
\omega _{2}^{2}(X_{6}^{k},X_{3}^{l}) &=&c_{63}\exp \left( i\left[ \left(
k_{1}+l_{1}\right) x^{1}+\left( k_{3}+l_{3}\right) x^{3}\right] \right) , \\
c_{63} &=&-i\tfrac{(k_{3}k_{1}l_{1}-(k_{3})^{2}l_{3}+k_{3}\left(
l_{1}\right) ^{2}-2l_{1}l_{3}+k_{3}\left( l_{3}\right) ^{2})}{l_{3}},
\end{eqnarray*}%
\begin{equation*}
\omega _{2}^{2}(X_{7}^{k},X_{6}^{l})=i(k_{1}+l_{1})l_{3}\exp \left( \left[
i(\left( k_{1}+l_{1}\right) x^{1}+l_{3}x^{3})\right] \right) ,
\end{equation*}%
\begin{equation*}
\omega _{2}^{2}(X_{8}^{k},X_{2}^{l})=-\tfrac{i}{2}\tfrac{k_{1}l_{1}+\left(
l_{1}\right) ^{2}+\left( l_{3}\right) ^{2}+\left( l_{4}\right) ^{2}}{l_{4}}%
\exp \left( i\left[ (\left( k_{1}+l_{1}\right) x^{1}+l_{3}x^{3}+l_{4}x^{4}%
\right] \right) ,
\end{equation*}%
\begin{equation*}
\omega _{2}^{2}(X_{8}^{k},X_{3}^{l})=-\tfrac{i}{2}\tfrac{%
k_{1}l_{1}+(l_{1})^{2}+(l_{3})^{2}}{l_{3}}\exp \left( i\left[ \left(
k_{1}+l_{1}\right) x^{1}+l_{3}x^{3}\right] \right) ,
\end{equation*}%
\begin{equation*}
\omega _{2}^{2}(X_{8}^{k},X_{7}^{l})=-\tfrac{i}{2}(k_{1}+l_{1})\exp \left[
i(k_{1}+l_{1})x^{1}\right] ,
\end{equation*}%
\begin{equation*}
\begin{array}{rl}
\omega _{2}^{i}(X_{4}^{k},X_{1}^{l})= & c_{41}^{i}\exp \left( i\left[ \left(
k_{1}+l_{1}\right) x^{1}+k_{2}x^{2}+\left( k_{3}+l_{3}\right)
x^{3}+k_{4}x^{4}\right] \right) , \\
& 1\leq i\leq 4,%
\end{array}%
\end{equation*}%
\begin{equation*}
\begin{array}{rl}
c_{41}^{1}= & -\tfrac{i}{4}\tfrac{%
k_{1}(l_{1})^{2}+k_{1}(l_{3})^{2}-2l_{1}(l_{3})^{2}}{l_{3}^{2}}, \\
c_{41}^{2}\!= & \tfrac{i}{4}\tfrac{k_{1}(l_{1})^{3}-k_{3}(l_{1})^{2}l_{3}+%
\left[ (k_{1})^{2}+(k_{3})^{2}\right] \left[ (l_{1})^{2}+(l_{3})^{2}\right]
-k_{1}l_{1}\left( l_{3}\right) ^{2}+k_{3}(l_{3})^{3}+(k_{4})^{2}\left[
(l_{1})^{2}-(l_{3})^{2}\right] }{k_{2}(l_{3})^{2}}, \\
c_{41}^{3}= & -\tfrac{i}{4}\tfrac{%
k_{3}(l_{3})^{2}-2(l_{1})^{2}l_{3}+k_{3}(l_{1})^{2}}{(l_{3})^{2}}, \\
c_{41}^{4}= & -\tfrac{i}{4}k_{4}\tfrac{(l_{1})^{2}-(l_{3})^{2}}{(l_{3})^{2}},%
\end{array}%
\end{equation*}%
\begin{equation*}
\omega _{2}^{1}(X_{4}^{k},X_{2}^{l})=\tfrac{i}{2}\tfrac{k_{2}l_{1}}{l_{4}}%
\exp \left( i\left[ \left( k_{1}+l_{1}\right) x^{1}+k_{2}x^{2}+\left(
k_{3}+l_{3}\right) x^{3}+\left( k_{4}+l_{4}\right) x^{4}\right] \right) ,
\end{equation*}%
\begin{eqnarray*}
\omega _{2}^{2}(X_{4}^{k},X_{2}^{l}) &=&c_{42}\exp \left( i\left[ \left(
k_{1}+l_{1}\right) x^{1}+k_{2}x^{2}+\left( k_{3}+l_{3}\right) x^{3}+\left(
k_{4}+l_{4}\right) x^{4}\right] \right) , \\
c_{42} &=&-\tfrac{i}{2}\tfrac{\left( l_{1}+k_{1}\right) l_{1}-\left(
l_{3}+k_{3}\right) l_{3}-\left( l_{4}+k_{4}\right) l_{4}}{l_{4}},
\end{eqnarray*}%
\begin{align*}
\omega _{2}^{3}(X_{4}^{k},X_{2}^{l})& =-\tfrac{i}{2}\tfrac{k_{2}l_{3}}{l_{4}}%
\exp \left( i\left[ \left( k_{1}\!+\!l_{1}\right)
x^{1}\!+\!k_{2}x^{2}\!+\!\left( k_{3}\!+\!l_{3}\right) x^{3}\!+\!\left(
k_{4}\!+\!l_{4}\right) x^{4}\right] \right) , \\
\omega _{2}^{4}(X_{4}^{k},X_{2}^{l})& =-\tfrac{i}{2}k_{2}\exp \left( i\left[
\left( k_{1}+l_{1}\right) x^{1}+k_{2}x^{2}+\left( k_{3}+l_{3}\right)
x^{3}+\left( k_{4}+l_{4}\right) x^{4}\right] \right) ,
\end{align*}%
\begin{align*}
\omega _{2}^{1}(X_{4}^{k},X_{3}^{l})& =\tfrac{i}{2}\tfrac{k_{2}l_{1}}{l_{3}}%
\exp \left( i\left[ \left( k_{1}+l_{1}\right) x^{1}+k_{2}x^{2}+\left(
k_{3}+l_{3}\right) x^{3}+k_{4}x^{4}\right] \right) , \\
\omega _{2}^{2}(X_{4}^{k},X_{3}^{l})& =c_{43}\exp \left( i\left[ \left(
k_{1}+l_{1}\right) x^{1}+k_{2}x^{2}+\left( k_{3}+l_{3}\right)
x^{3}+k_{4}x^{4}\right] \right) , \\
c_{43}& =-\tfrac{i}{2}\tfrac{\left( l_{1}+k_{1}\right) l_{1}-\left(
k_{3}+l_{3}\right) l_{3}}{l_{3}}, \\
\omega _{2}^{3}(X_{4}^{k},X_{3}^{l})& =-\tfrac{i}{2}k_{2}\exp \left( i\left[
\left( k_{1}+l_{1}\right) x^{1}+k_{2}x^{2}+\left( k_{3}+l_{3}\right)
x^{3}+k_{4}x^{4}\right] \right) , \\
\omega _{2}^{4}(X_{4}^{k},X_{3}^{l})& =0,
\end{align*}%
\begin{equation*}
\begin{array}{rl}
\omega _{2}^{i}(X_{4}^{k},X_{4}^{l})= & c_{44}^{i}\exp \left( i\left[ \left(
k_{1}\!+\!l_{1}\right) x^{1}\!+\!\left( k_{2}\!+\!l_{2}\right)
x^{2}\!+\!\left( k_{3}\!+\!l_{3}\right) x^{3}\!+\!\left(
k_{4}\!+\!l_{4}\right) x^{4}\right] \right) , \\
& 1\leq i\leq 4, \\
c_{44}^{1}= & -\tfrac{i}{4}\tfrac{(k_{1}l_{2}-l_{1}k_{2})(l_{2}+k_{2})}{%
k_{2}l_{2}}, \\
c_{44}^{2}\!= & \!\tfrac{i}{4}\tfrac{\left( k_{1}+l_{1}\right) \left(
l_{2}k_{1}-k_{2}l_{1}\right) +\left( l_{3}+k_{3}\right) \left(
k_{2}l_{3}-l_{2}k_{3}\right) +\left( l_{4}+k_{4}\right) \left(
k_{2}l_{4}-l_{2}k_{4}\right) }{k_{2}l_{2}}, \\
c_{44}^{3}= & -\tfrac{i}{4}\tfrac{(l_{3}k_{2}-k_{3}l_{2})(l_{2}+k_{2})}{%
l_{2}k_{2}}, \\
c_{44}^{4}= & -\tfrac{i}{4}\tfrac{(l_{4}k_{2}-k_{4}l_{2})(l_{2}+k_{2})}{%
l_{2}k_{2}},%
\end{array}%
\end{equation*}%
\begin{eqnarray*}
\omega _{2}^{i}(X_{5}^{k},X_{1}^{l}) &=&0,\quad 1\leq i\leq 3, \\
\omega _{2}^{4}(X_{5}^{k},X_{1}^{l}) &=&\tfrac{i}{2}\tfrac{%
(l_{1})^{3}+k_{1}(l_{1})^{2}+k_{1}(l_{3})^{2}-l_{1}(l_{3})^{2}}{(l_{3})^{2}}%
\exp \left( i\left[ (k_{1}+l_{1})x^{1}+l_{3}x^{3}\right] \right)
\end{eqnarray*}%
\begin{eqnarray*}
\omega _{2}^{1}(X_{5}^{k},X_{4}^{l}) &=&-\tfrac{i}{2}l_{4}\exp \left( i\left[
(k_{1}+l_{1})x^{1}+l_{2}x^{2}+l_{3}x^{3}+l_{4}x^{4}\right] \right) , \\
\omega _{2}^{2}(X_{5}^{k},X_{4}^{l}) &=&i\tfrac{l_{1}l_{4}}{l_{2}}\exp
\left( i\left[ (k_{1}+l_{1})x^{1}+l_{2}x^{2}+l_{3}x^{3}+l_{4}x^{4}\right]
\right) , \\
\omega _{2}^{3}(X_{5}^{k},X_{4}^{l}) &=&0, \\
\omega _{2}^{4}(X_{5}^{k},X_{4}^{l}) &=&\tfrac{i}{2}(k_{1}-l_{1})\exp \left(
i\left[ (k_{1}+l_{1})x^{1}+l_{2}x^{2}+l_{3}x^{3}+l_{4}x^{4}\right] \right) ,
\end{eqnarray*}%
\begin{equation*}
\begin{array}{rl}
\omega _{2}^{i}(X_{6}^{k},X_{4}^{l})= & c_{64}^{i}\exp \left( i\left[
(k_{1}+l_{1})x^{1}+l_{2}x^{2}+(k_{3}+l_{3})x^{3}+l_{4}x^{4}\right] \right) ,
\\
& 1\leq i\leq 4, \\
c_{64}^{1}= & \tfrac{i}{2}(k_{3}l_{1}+k_{3}-l_{3}), \\
c_{64}^{2}= & -\tfrac{i\left[
k_{1}k_{3}l_{1}-(k_{3})^{2}l_{3}+k_{3}(l_{1})^{2}-2l_{3}l_{1}
+k_{3}(l_{3})^{2}+k_{3}(l_{4})^{2}\right] }{2l_{2}}, \\
c_{64}^{3}= & \tfrac{i\left[ k_{1}-l_{1}+k_{3}l_{3}-2(k_{3})^{2}\right] }{2},
\\
c_{64}^{4}= & \tfrac{i}{2}k_{3}l_{4},%
\end{array}%
\end{equation*}%
\begin{eqnarray*}
\omega _{2}^{1}(X_{6}^{k},X_{6}^{l}) 
&=&i\left[ (k_{3})^{2}-(l_{3})^{2}\right] 
\exp \left( i\left[ (k_{1}+l_{1})x^{1}+(k_{3}+l_{3})x^{3}\right]
\right) , \\
\omega _{2}^{2}(X_{6}^{k},X_{6}^{l}) 
&=&0, \\
\omega _{2}^{3}(X_{6}^{k},X_{6}^{l}) 
&=&-i\left( k_{1}+l_{1}\right) \left(
k_{3}-l_{3}\right) \exp \left( i\left[ (k_{1}+l_{1})x^{1}+(k_{3}+l_{3})x^{3}
\right] \right) , \\
\omega _{2}^{4}(X_{6}^{k},X_{6}^{l}) 
&=&0,
\end{eqnarray*}
\begin{eqnarray*}
\omega _{2}^{1}(X_{7}^{k},X_{4}^{l}) &=&-\tfrac{i}{2}l_{2}
\exp \left( i\left[
(k_{1}+l_{1})x^{1}+l_{2}x^{2}+l_{3}x^{3}+l_{4}x^{4}\right] \right) , \\
\omega _{2}^{2}(X_{7}^{k},X_{4}^{l}) &=&\tfrac{i(k_{1}+l_{1})}{2}
\exp \left(
i\left[ (k_{1}+l_{1})x^{1}+l_{2}x^{2}+l_{3}x^{3}+l_{4}x^{4}\right] \right) ,
\\
\omega _{2}^{3}(X_{7}^{k},X_{4}^{l}) &=&0, \\
\omega _{2}^{4}(X_{7}^{k},X_{4}^{l}) &=&0,
\end{eqnarray*}%
\begin{eqnarray*}
\omega _{2}^{i}(X_{8}^{k},X_{4}^{l}) &=&c_{84}^{i}
\exp \left( i\left[
(k_{1}+l_{1})x^{1}+l_{2}x^{2}+l_{3}x^{3}+l_{4}x^{4}\right] \right) , \\
c_{84}^{j} &=&\tfrac{il_{j}}{4},\text{ }j=1,3,4, \\
c_{84}^{2} &=&-\tfrac{i\left(
k_{1}l_{1}+(l_{1})^{2}+(l_{3})^{2}+(l_{4})^{2}\right) }{4l_{2}}.
\end{eqnarray*}

From the previous formulas it follows the closed $3$-form 
$\omega _{2}(X_{a}^{k},X_{b}^{l})$ is exact except in the following cases, where 
$[\omega _{3}]$ denotes the cohomology class of a closed $3$-form $\omega _{3} $:
\begin{equation*}
\begin{array}{lll}
\left[ \omega _{2}(X_{5}^{k},X_{4}^{l})\right] = & ik_{1}\left[ v_{4}\right] , 
& k_{1}+l_{1}=l_{2}=l_{3}=0,\medskip \\
\left[ \omega _{2}(X_{6}^{k},X_{4}^{l})\right] = & -\tfrac{ik_{3}\left(
(k_{3})^{2}-k_{1}\right) }{l_{2}}\left[ v_{2}\right] , &
k_{1}+l_{1}=k_{3}+l_{3}=l_{4}=0,l_{2}\neq 0,\medskip \\
\left[ \omega _{2}(X_{6}^{k},X_{4}^{l})\right] = & \tfrac{i\left(
2k_{1}+k_{3}l_{3}-2(k_{3})^{2}\right) }{2}\left[ v_{3}\right] ,
& k_{1}+l_{1}=l_{2}=l_{4}=0,\medskip \\
\left[ \omega _{2}(X_{6}^{k},X_{4}^{l})\right] = & \tfrac{ik_{3}l_{4}}{2}%
\left[ v_{4}\right] , & k_{1}+l_{1}=k_{3}+l_{3}=l_{2}=0,\medskip \\
\left[ \omega _{2}(X_{8}^{k},X_{4}^{l})\right] = & \tfrac{il_{1}}{4}\left[
v_{1}\right] , & l_{2}=l_{3}=l_{4}=0,\medskip \\
\left[ \omega _{2}(X_{8}^{k},X_{4}^{l})\right] = & \tfrac{il_{3}}{4}\left[
v_{3}\right] , & k_{1}+l_{1}=l_{2}=l_{4}=0,\medskip \\
\left[ \omega _{2}(X_{8}^{k},X_{4}^{l})\right] = & \tfrac{il_{4}}{4}\left[
v_{4}\right] , & k_{1}+l_{1}=l_{2}=l_{3}=0,%
\end{array}%
\end{equation*}
\begin{equation*}
\begin{array}{rll}
\left[ \omega _{2}(X_{4}^{k},X_{1}^{l})\right] = & 8i\pi ^{3}
\tfrac{(k_{1})^{2}}{k_{2}}\left[ v_{2}\right] , & k_{2}\neq 0, \\
\left[ \omega _{2}(X_{4}^{k},X_{1}^{l})\right] = & 8\pi ^{3}\bar{c}_{41}^{1}
\left[ v_{1}\right] , & k_{1}+l_{1}\neq 0,k_{2}=k_{4}=0, \\
\bar{c}_{41}^{1}= & -\tfrac{i}{4}
\tfrac{k_{1}(l_{1})^{2}+k_{1}(k_{3})^{2}-2l_{1}(k_{3})^{2}}
{\left( k_{3}\right) ^{2}}, 
&  \\
\left[ \omega _{2}(X_{4}^{k},X_{1}^{l})\right] = & 8\pi ^{3}\bar{c}_{41}^{3}
\left[ v_{3}\right] , & k_{1}+l_{1}=k_{2}=k_{4}=0, \\
\bar{c}_{41}^{3}= & -\tfrac{i}{4}\tfrac{%
k_{3}(l_{3})^{2}-2(k_{1})^{2}l_{3}+k_{3}(k_{1})^{2}}{(l_{3})^{2}}, &  \\
\left[ \omega _{2}(X_{4}^{k},X_{1}^{l})\right] = & 8\pi ^{3}\bar{c}_{41}^{4}
\left[ v_{4}\right] , & k_{2}=0,k_{4}\neq 0, \\
\bar{c}_{41}^{4}= & -\tfrac{i}{4}k_{4}
\tfrac{(k_{1})^{2}-(k_{3})^{2}}{(k_{3})^{2}}. &
\end{array}
\end{equation*}
\end{example}


\begin{thebibliography}{99}
\bibitem{AD} I. M. Anderson, T. Duchamp, \emph{On the existence of global
variational principles}, Amer.\ J. Math.\ \textbf{102} (1980), no.\ 5,
781--868.

\bibitem{BGM} D. Birmingham, R. Gibbs, S. Mokhtari, \emph{A Kallosh theorem
for BF-type topological field theory}, Phys.\ Lett.\ B 273 (1991), no.\
1--2, 67--73.

\bibitem{MJE} M. Castrill\'{o}n L\'{o}pez, J. Mu\~{n}oz Masqu\'{e}, E.
Rosado Mar\'{\i}a, \emph{First-order equivalent to Einstein-Hilbert
Lagrangian}, J. Math.\ Phys.\ \textbf{5} (2014), 082501, pp.\ 1--9.

\bibitem{Carmeli} M. Carmeli, \emph{Classical fields: general relativity and
gauge theory}, A Wiley-Interscience Publication, John Wiley \& Sons, Inc.,
New York, 1982.

\bibitem{CF} R. Cartas-Fuentevilla, A. Escalante-Hernandez, J. A.
Lopez-Osio, J. M. Solano-Altamirano, J. F. Tlapanco-Limon, J. Berra-Montiel,
P. Enriquez-Silverio, \emph{Reconsiderations on the formulation of general
relativity based on Riemannian structures}, Gen.\ Relativity Gravitation
\textbf{43} (2011), no.\ 5, 1365--1390.

\bibitem{DM} R. Dur\'{a}n D\'{\i}az, J. Mu\~{n}oz Masqu\'{e}, 
\emph{Second-order Lagrangians admitting a second-order Hamilton-Cartan formalism}, 
J. Phys. A:\ Math.\ Gen.\ \textbf{33} (2000), 6003--6016.

\bibitem{FF} M. Ferraris, M. Francaviglia, \emph{Covariant First-Order
Lagrangians, Energy-Density and Superpotentials in General Relativity},
Gen.\ Relativity Gravitation \textbf{22} (1990), no.\ 9, 965--985.

\bibitem{FS} L. Freidel, S. Speziale, \emph{On the Relations between Gravity
and BF Theories}, SIGMA 8 (2012), 032, 15 pages.

\bibitem{Garcia} P. L. Garc\'{\i}a, \emph{The Poincar\'{e}-Cartan Invariant
in the Calculus of Variations}, Symposia Math.\ \textbf{14} (1974), 219--46.

\bibitem{GM1} Pedro L. Garc\'{\i}a, J. Mu\~{n}oz Masqu\'{e}, 
\emph{Le probl\`{e}me de la r\'{e}gularit\'{e} dans le calcul des variations 
du second ordre}, C.R. Math.\ Acad.\ Sci.\ Paris \textbf{301} S\'{e}rie I (1985),
639--642.

\bibitem{GoS} H. Goldschmidt, S. Sternberg, \emph{The Hamilton-Cartan
formalism in the calculus of variations}, Ann.\ Inst.\ Fourier (Grenoble)
\textbf{23} (1973), no.\ 1, 203--267.

\bibitem{Grigore} D. R. Grigore, \emph{On a order reduction theorem in the
Lagrangian formalism}, Nuovo Cimento B (11) \textbf{111} (1996), no.\ 12,
1439--1447.

\bibitem{KN} S. Kobayashi, K. Nomizu, \emph{Foundations of Differential
Geometry, Volume I}, John Wiley \& Sons, Inc., N.Y., 1963.

\bibitem{Krasnov0} K. Krasnov, \emph{Gravity as BF theory plus potential},
Internat.\ J. Modern Phys.\ A \textbf{24} (2009), no.\ 15, 2776--2782.

\bibitem{Krasnov1} ---, \emph{Effective metric Lagrangians from an
underlying theory with two propagating degrees of freedom}, Phys.\ Rev.\ D
\textbf{81} (2010), 084026, 40 pages.

\bibitem{KS} D. Krupka, O. Stepankova, \emph{On the Hamilton Form in Second
Order Calculus of Variations}, Proceedings of the meeting \textquotedblleft
Geometry and Physics", Florence, (1982), 85--101.

\bibitem{Mu1} J. Mu\~{n}oz Masqu\'{e}, \emph{Formes de structure et
transformations infinit\'{e}simales de contact d'ordre sup\'{e}rieur}, C.R.
Math.\ Acad.\ Sci.\ Paris \textbf{298}, S\'{e}rie I (1984), 185--188.

\bibitem{Mu2} ---, \emph{An axiomatic characterization of the Poincar\'e-Cartan 
form for second-order variational problems}, Lecture Notes in Math.\
1139, Springer-Verlag 1985, pp.\ 74--84.

\bibitem{MP} J. Mu\~{n}oz Masqu\'{e}, L. M. Pozo Coronado, \emph{Parameter
Invariance in Field Theory and the Hamiltonian Formalism}, Fortschr. Phys.\
\textbf{48} (2000), 361--405.

\bibitem{MR0} J.~Mu\~{n}oz Masqu\'{e}, E.~Rosado Mar\'{\i}a, 
\emph{Integrability of the field equations of invariant variational problems on
linear frame bundles}, J. Geom. Phys. \textbf{49} (2004), 119--155.

\bibitem{MR} ---, \emph{The Problem of Invariance for Covariant Hamiltonians}, 
Rend.\ Sem.\ Mat.\ Univ.\ Padova \textbf{120} (2008), 1--28.

\bibitem{MR2} ---, \emph{Diffeomorphism-invariant covariant Hamiltonians of
a pseudo-Riemannian metric and a linear connection}, Adv.\ Theor.\ Math.\
Phys.\ \textbf{16} (2012), no.\ 3, 851--886.

\bibitem{EJ} E. Rosado Mar\'{\i}a, J. Mu\~{n}oz Masqu\'{e}, 
\emph{Integrability of second-order Lagrangians admitting a first-order
Hamiltonian formalism}, Differential Geom.\ Appl.\ \textbf{35} (2014),
164--177.

\bibitem{Saunders} D. J. Saunders, \emph{The Geometry of Jet Bundles},
Cambridge University Press, Cambridge, UK, 1989.

\bibitem{SCr} D. J. Saunders, M. Crampin, \emph{On the Legendre map in
higher-order field theories}, J. Phys.\ A: Math.\ Gen.\ \textbf{23} (1990),
3169--3182.

\bibitem{Szczyrba} W. Szczyrba, \emph{The canonical variables, the
symplectic structure and the initial value formulation of the generalized
Einstein-Cartan theory of gravity}, Comm. Math. Phys. \textbf{60} (1978),
no. 3, 215--232.

\end{thebibliography}
\end{document}